\newtheorem{dfn}{Definition}
\newtheorem{thm}{Theorem}
\newtheorem{cor}[thm]{Corollary}
\newtheorem{prop}[thm]{Proposition}
\def\be{\begin{equation}}
\def\ee{\end{equation}}
\newcommand{\mc}[1]{\mathcal{#1}}
\definecolor{violeta}{cmyk}{0.07,0.90,0,0.34}
\definecolor{cgreen}{RGB}{26, 199, 76}
\begin{document}


\title{Necessary  Conditions for Extended Noncontextuality in General Sets of Random Variables}

\author{Barbara Amaral}
\affiliation{Departamento de F\'isica e Matem\'atica, CAP - Universidade Federal de S\~ao Jo\~ao del-Rei, 36.420-000, Ouro Branco, MG, Brazil} 
\affiliation{International Institute of Physics, Federal University of Rio Grande do Norte, 59078-970, P. O. Box 1613, Natal, Brazil}

\author{Cristhiano Duarte} 
\affiliation{International Institute of Physics, Federal University of Rio 
Grande do Norte, 59078-970, P. O. Box 1613, Natal, Brazil}
\affiliation{Departamento de Matem\'{a}tica, Instituto de Ci\^{e}ncias 
Exatas, Universidade Federal de Minas Gerais, CP 702, CEP 30123-970, Belo 
Horizonte, Minas Gerais, Brazil.}

\author{Roberto I. Oliveira}
\affiliation{Instituto de Matem\'atica Pura e Aplicada (IMPA), Rio de Janeiro, RJ, Brazil.}

\begin{abstract}
We explore  the graph 
approach to contextuality to restate the extended  definition of 
noncontextuality as given by J. Kujala
 et. al. in Ref~\cite{KDL15} in using
graph-theoretical terms.  This 
extended definition avoids the assumption of the  \emph{pre-sheaf} or 
\emph{non-disturbance} condition, which states that if two contexts 
overlap, then the marginal distribution obtained for the intersection must 
be the same, a restriction that will never be perfectly satisfied in real 
experiments. 
 With this we are able to 
 derive necessary conditions for extended noncontextuality for any set of random variables based on the geometrical aspects of the graph approach, which can be tested directly with  experimental data in any 
 contextuality experiment and which reduce to traditional necessary conditions for noncontextuality if the non-disturbance condition is satisfied. 
\end{abstract}

\maketitle

\section{Introduction}

Quantum theory assigns probabilities to subsets of possible measurements of a physical system. The phenomenon of {\em contextuality} states that there may be no 
global probability distribution that is consistent with these subsets, which are also called  \emph{contexts} \cite{Specker60,Bell66,KS67,Fine82,AB11}.

A key consequence of contextuality is that the statistical predictions of quantum theory cannot be obtained from models where the measurement outcomes reveal pre-existent properties that are 
independent on which, or whether, other compatible measurements are 
jointly performed. This fundamental limitation follows from the existence of incompatible  measurements in quantum systems. It thus represents an exotic, intrinsically non-classical phenomenon, that leads to a more fundamental understanding of many aspects of quantum theory \cite{NBAASBC13,Cabello13,Cabello13c,CSW14,ATC14,Amaral14}.
In addition, contextuality has been recognized as  a potential 
\emph{resource} for quantum computing, \cite{Raussendorf13, HWVE14,DGBR14}, random number certification \cite{UZZWYDDK13}, and several other 
information processing tasks in the specific case of space-like separated 
systems \cite{BCPSW13}.

As a consequence, experimental verifications of contextuality have 
received much atten\-tion \cite{HLBBR06, KZGKGCBR09, ARBC09, LLSLRWZ11, BCAFACTP13}. It is thus of utmost
importance to develop a robust theoretical framework for contextuality that 
can be efficiently applied to real experiments. In 
particular, it is important to include the treatment of sets of random 
variables that do not satisfy the assumption of the  so called 
\emph{pre-sheaf}~\cite{AB11,Amaral14} or 
\emph{non-disturbance}~\cite{NBAASBC13} condition. This assumption 
states that if the intersection of two contexts is non-empty, then the 
marginal probability distributions {at  the  intersection  must 
be the same, a restriction that will never be perfectly satisfied in real 
experiments. This problem was considered in Refs. \cite{Larsson02, 
Winter14}, but the methods proposed there  to take into account the 
context-dependent change in a random variable involve quantities that cannot be directly measured.

  In Ref.~\cite{KDL15},the authors propose an alternative definition of 
noncontextuality that can be applied to any set of random 
variables. Such a treatment reduces to the 
traditional definition of noncontextuality if the non-disturbance property 
is satisfied and, in addition, it can be verified directly from  
experimental data. In this alternative definition, a set of random 
variables  is said to be noncontextual (in the extended sense) if there is 
a joint probability distribution which is consistent with the joint 
distribution for each context and maximizes the probability of  two 
realizations of the same set of random variables present in 
different contexts being equal.  Then the authors provide necessary and 
sufficient conditions for contextuality in a broad class of scenarios, 
namely the so called $n$-cycle scenario.

In this contribution, we explore the graph approach to contextuality,
developed in Refs. \cite{CSW10,CSW14, AII06} and further  explored in Refs.
\cite{RDLTC14, AFLS15, AT17}, to rewrite the definition of extended 
noncontextuality in graph theoretical terms.
To this end, from the compatibility graph $\mathrm{G}$ of 
a scenario $\Gamma$, we define another graph $\mathscr{G}$, 
which we call the \emph{extended compatibility graph} of the scenario, and 
show that noncontextuality in the extended sense is equivalent to 
noncontextuality in the traditional sense with respect to the extended 
graph $\mathscr{G}$.

With this graph-theoretical perspective, the problem of characterizing 
extended noncontextuality reduces to  characterizing 
traditional noncontextuality for the scenario defined by $\mathscr{G}$, a 
difficult problem  for general graphs \cite{Pitowsky91, DL97, AII06, 
AII08}. Nevertheless, we can explore the connection between the 
noncontextual set and  the \emph{cut polytope} 
$\mathrm{CUT}\left(\mathrm{G}\right)$ \cite{AII06, AT17} of the 
corresponding compatibility graph $\mathrm{G}$ to derive necessary 
conditions for extended contextuality in any scenario, which can be tested 
directly with  experimental data in any contextuality experiment and  
reduces to traditional necessary conditions for noncontextuality if the 
non-disturbance condition is satisfied.
 
  To derive these conditions, we first prove that $\mathscr{G}$ can be 
obtained from $\mathrm{G}$ combining the  graph operations know as 
\emph{triangular elimination}, \emph{vertex splitting} and \emph{edge 
contraction} \cite{BM86,AII08,Bonato14}. From valid inequalities for 
$\mathrm{CUT}\left(\mathrm{G}\right)$ it is possible to derive valid 
inequalities for any graph obtained from $\mathrm{G}$ using a sequence of 
such operations. In particular, for any valid inequality 
for $\mathrm{CUT}\left(\mathrm{G}\right)$ we can derive valid inequalities 
for $\mathrm{CUT}\left(\mathscr{G}\right)$, among which there is one that 
reduces to the original inequality if the non-disturbance condition is 
satisfied. 

As applications of our framework, we recover the characterization of 
extended noncontextuality for the $n$-cycle scenarios of Ref. \cite{KDL15} 
and provide necessary conditions for noncontextuality exploring the 
$I_{3322}$ \cite{Foissart81,CG04} and Chained inequalities \cite{BC90}. 
Finally, we use the Peres-Mermin square \cite{Peres90,Mermin90} to 
illustrate that similar ideas can be used even in scenarios where the cut 
polytope does not provide a complete characterization of the noncontextual 
set.

 The paper is 
organized as follows: in Sec. \ref{sec:comp} we  review the definition of 
a compatibility scenario and of noncontextuality in the tradional sense; 
In Sec. \ref{sec:ext}, we review the definition of extended 
noncontextuality of Ref. \cite{KDL15}, stating it in graph-theoretical 
terms; In Sec. \ref{sec:coup} we maximize the probability of  two 
realizations of the same random variables  in different contexts being 
equal; In Sec.~\ref{sec: twoout} focusing on scenarios with two outcomes 
per measurement, we introduce the cut polytope and the extended 
compatibility hypergraph for a scenario and show
a complete characterization of the extended contextuality for the 
$n-$cycle scenario; In Sec.~\ref{sec:valid_ineq} using the introduced cut 
polytope we provide necessary conditions for the existence of 
noncontextual behaviours in any given scenario, although the complete 
characterization is an extremely difficult problem; In 
Sec.~\ref{sec:I3322} and Sec.~\ref{sec:chain_ineq} we apply our methods
for important families of contextuality inequalities; We discuss scenarios 
with more than three measurements in Sec.~\ref{sec:more_than_3} and  
close this work with a discussion in Sec.~\ref{sec:discussion}.


\section{Compatibility scenarios}
\label{sec:comp}

\begin{dfn}
 A \emph{compatibility scenario} is defined by a triple $\Gamma :=\left(X, 
\mathcal{C}, O \right)$, where $O$ is a finite set, $X$ is a finite set of 
random variables taking values in $O$, 
and 
  $\mathcal{C}$ is a family of subsets  of $X$ such that
\begin{enumerate}
 \item $\displaystyle \cup_{C \in \mathcal{C}} C =X$;
\item   $C,C' \in \mathcal{C}$ and $C \subseteq C'$
implies $C = C'$. \label {antichain}
\end{enumerate}
The elements $C \in \mathcal{C}$ are called \emph{contexts} and the
set $\mathcal{C}$ is called the \emph{compatibility cover} of the scenario.
\end{dfn}

One may think of the random variables in $X$ as
representing measurements in a physical system, with 
possible outcomes labeled by the elements in $O$, 
while the sets in $\mathcal{C}$ may be thought as
encoding the compatibility relations among the 
measurements in $X$, that is, each set $C \in \mathcal{C}$ consists 
of a maximal set of compatible, jointly measurable random variables 
\cite{AB11, AT17book}. 
Equivalentely, the compatibility relations among 
the elements of $X$ can be represented by an hypergraph. 

\begin{dfn}
The \emph{compatibility hypergraph} of a scenario $\left(X, \mathcal{C}, O \right)$ is an hypergraph $\mathrm{H} = \left(X, \mathcal{C}\right)$
whose vertices are 
the random variables in $X$ and  hyperedges are the contexts $C \in \mathcal{C}$.
The \emph{compatibility graph} of the scenario is  the 2-section of $\mathrm{H}$, that is, the graph $\mathrm{G}$ has the same vertices of the hypergraph $\mathrm{H}$ and edges between all pairs of vertices contained in the some hyperedge of $\mathrm{H}$.
\end{dfn}

In an experiment, characterized by a compatibility scenario $\Gamma 
= (X,\mc{C},O)$, when compatible measurements, represented by the random 
variables belonging to a context $C=\{x_1,x_2,...,x_{\vert C \vert } \} 
\in \mc{C}$, are performed jointly, a list $s=(a_1,a_2,...,a_{\vert C 
\vert })$ of outcomes in the Cartesian product 
\begin{equation}
O^{C}:=\underbrace{O \times O \times ... \times O}_{\vert C \vert 
-\mbox{times}}
\label{eq.cartesianproduct}
\end{equation}
is observed. Moreover, the collection of 
well-defined joint probability distributions for the random variables 
associated with $C \in \mc{C}$ receives special attention:


\begin{dfn}
A \emph{behavior} $\mathrm{B}$ for the scenario $\left(X, \mathcal{C}, O \right)$ is a family of probability distributions over $O^C$, 
one for each context
$C \in \mathcal{C}$, that is, 
\be \mathrm{B} = \left\{p_C:  O^C\rightarrow [0,1] \left|\sum_{s\in O^C} p_C(s)=1, C \in \mathcal{C}\right.\right\}.\ee
\end{dfn}

This means that for each context $C$, $p_C(s)$ gives the 
probability of obtaining outcomes $s$ in a joint measurement of the 
elements of $C$. Following standard notation in the community,
 given a context $C=\left\{x_1, \ldots , x_{|C|}\right\}$ 
and $s=\left(a_1, \ldots , a_{|C|}\right)$ a  particular list of 
outcomes for those measurements in $C$, we will from now on represent 
$p_C(s)$ as
\be p\left(a_1, \ldots , a_{|C|}\left|x_1, \ldots , x_{|C|}\right.\right).\ee

    
\textbf{Remark:} Despite of being absolutely standard using the 
above notation for representing an element $p_C$ in a behaviour $B$, to 
avoid misunderstanding within the mathematical community, and to 
make our work more readable for those from other communities who might 
become interested in this topic, we note that the 
mathematical object we are using here is the joint probability
$\mathds{P}(x_1=a_1,x_2=a_2,...,x_{\vert C \vert}=a_{\vert C \vert})$, 
defined on the finite set $O^{C}$.

In an ideal situation, one generally assumes that behaviors are non-disturbing.

\begin{dfn}
\label{definondisturbance}
The \emph{non-disturbance} set $\mathcal{X}\left(\Gamma\right)$ of a 
compatibility scenario $\Gamma$ is the set of behaviors that satisfy
 the consistency relation  
\be\label{eqnondisturbing}
\sum_{a^i_k| x^i_{k} \notin C_i \cap C_j} p\left(a^i_{1}a^i_{2} \ldots  
a^i_{\left|C_i\right|}\left| x^i_{1} x^i_{2} \ldots  x^i_{\left|C_i\right|} \right.\right) = 
\sum_{a^j_{l}|x^j_{l} \notin C_i \cap C_j}  p\left(a^j_{1}a^j_{2} \ldots 
a^j_{\left|C_j\right|}\left| x^j_{1} x^j_{2} \ldots  x^j_{\left|C_j\right|} \right.\right) 
\ee
 for any two intersecting contexts $C_i$ and $C_j$ in  
$\mathcal{C}$, when considering at both sides the same sets of 
outcomes for those measurements in $C_i \cap C_j$.
 \end{dfn}
\textbf{Remark:} Eq.~\eqref{eqnondisturbing} above says that 
when the non-disturbance relation is satisfied in those contexts which 
share some common random variables, it does not matter the way one takes 
the marginalization to these variables into account. Both 
marginalizations, either starting from $C_i$ or starting from $C_j$, must
coincide. 
 
In an hypothetical situation where all measurements
in $\mathrm{X}$ are compatible, it would be possible to define a
\emph{global probability distribution} $p(a_1 a_2 ... a_{\vert X 
\vert} \vert x_1 x_2 ... x_{\vert X \vert})$, or
\be \label{eqglobal}
p\left(a_1a_2 \ldots a_{|X|}\right)
\ee
for short, that would give the probability of obtaining outcomes 
$a_1a_2 \ldots a_{|X|}$ 
as though all measurements in $X$ were 
jointly performed.

\begin{dfn}
A behavior 
$\mathrm{B} \subset \mathcal{X}\left(\Gamma\right)$ is \emph{noncontextual} if there is a global probability distribution \eqref{eqglobal} such that for each $C \in \mathcal{C}$
\be\label{equsualnoncontex} p\left(a_{1}a_{2} \ldots a_{\left|C\right|}\left| x_{1} x_{2} \ldots  x_{\left|C\right|} \right.\right) =
 \sum_ {a_l| l  \notin C}  p\left(a_1a_2 \ldots a_{n}\right),  
\ee
where the sum is taken over the outcomes $a_l$  of the measurements $l 
\notin C$ and $a_l = a_{k}$ for each $l=x_{k} \in C$. 
\end{dfn}

In other words, $\mathrm{B}$ is 
noncontextual if   the probability distribution assigned by $\mathrm{B}$ to each context can be recovered as marginal from the global probability distribution 
$p\left(a_1a_2 \ldots a_{n}\right)$ \cite{Fine82, AB11}.

\section{Extended Contextuality}
\label{sec:ext}

To define noncontextuality in a scenario where the non-disturbance property 
 \eqref{eqnondisturbing} is not valid, we first must change 
the definition of noncontextual behaviors given by Eq. \eqref{equsualnoncontex}. We will consider \emph{extended global 
probability distributions} of the form 
\be \label{eqextendedglobal}
p\left(\underbrace{a^1_{1} \ldots a^1_{\left|C_1\right|}}_{C_1} \underbrace{a^2_{1} \ldots a^2_{\left|C_2\right|}}_{C_2} \ldots
\underbrace{a^m_{1} \ldots a^m_{\left|C_m\right|}}_{C_m}\left| \underbrace{x^1_{1}  \ldots  x^1_{\left|C_1\right|}}_{C_1}
\underbrace{x^2_{1}  \ldots  x^2_{\left|C_2\right|}}_{C_2}
\ldots \underbrace{x^m_{1}  \ldots  x^m_{\left|C_m\right|}}_{C_m}\right.\right),\ee
where $m = \left|\mathcal{C}\right|$, that gives joint probability of 
obtaining outcomes $a^i_{1}, \ldots ,a^i_{\left|C_i\right|}$ for each 
context $ C_i=\left\{x^i_{1},  \ldots ,  x^i_{\left|C_i\right|}\right\}.$
Notice that this extended global probability distribution is, in general, 
not equal to the probability distribution defined in Eq.
\eqref{eqglobal}, since the same random variable could appear in more than 
one 
context, and hence, in the list 
\be \underbrace{x^1_{1}  \ldots  x^1_{\left|C_1\right|}}_{C_1}
\underbrace{x^2_{1}  \ldots  x^2_{\left|C_2\right|}}_{C_2}
\ldots \underbrace{x^m_{1}  \ldots  x^m_{\left|C_m\right|}}_{C_m} \ee \vspace{1em}
the same random variable would be repeated several times.

To make  definitions  in Eqs.\eqref{eqglobal} and 
\eqref{eqextendedglobal} equivalent in the case of non-disturbing behaviors, we demand that, if in  different 
contexts $C_{i_1}, C_{i_2}, \ldots , C_{i_l}$ there exist coincident random variables $x^{i_1}_{k_1}, x^{i_2}_{k_2}, \ldots , x^{i_l}_{k_l}$ ,
then 

\begin{align}
  p\left(a^{i_1}_{k_1}\ldots  a^{i_l}_{k_l}\left|x^{i_1}_{k_1}\ldots  x^{i_l}_{k_l}\right.\right) &  = \nonumber \\ 
 & \sum_{ a^r_{s} \left| (r,s) \neq \left(i_j, k_j\right) \right.} p\left(a^1_{1} \ldots a^1_{\left|C_1\right|} \ldots
a^m_{1} \ldots a^m_{\left|C_m\right|}\left| x^1_{1}  \ldots  x^1_{\left|C_1\right|}
\ldots x^m_{1}  \ldots  x^m_{\left|C_m\right|}\right.\right)  \label{eqcorrelated} \\ &= \left\{\begin{array}{cc}
                                                                         1& \mbox{if } \  a^{i_1}_{k_1}=a^{i_2}_{k_2}=\ldots = a^{i_l}_{k_l}\label{eqcorrelated2} \\
                                                                         0& \mbox{ otherwise} ,
                                                                        \end{array}\right.  
\end{align}
that is, marginal probability distributions  for  $x^{i_1}_{k_1},x^{i_2}_{k_2}, \ldots , x^{i_l}_{k_l}$, representing the same random variable in different contexts,  are perfectly correlated.
Hence, it is equivalent to  say that  $\mathrm{B}$ is a \textit{ noncontextual behavior}
 if there is a 
extended global probability distribution  satisfying condition   \eqref{eqcorrelated2}
 such  that
\be  p\left(a^i_{1}a^i_{2} \ldots a^i_{\left|C_i\right|}\left| x^i_{1} x^i_{2} \ldots  x^i_{\left|c_i\right|} \right.\right)
= \sum_{a^j_{k} | j \neq i}p\left(a^1_{1} 
\ldots a^1_{\left|C_1\right|} \ldots
a^m_{1} \ldots a^m_{\left|c_m\right|}\left| x^1_{1}  \ldots  x^1_{\left|C_1\right|}
\ldots x^m_{1}  \ldots  x^m_{\left|C_m\right|}\right.\right) \label{eqmarginalext}\ee

A simple example of this situation is shown in Fig. \ref{figc3}. 
There, a simple compatibility scenario with three measurements 
$0,1,2$ and two contexts, $\left\{0,1\right\}$ and $\left\{1,2\right\}$
is shown.
 A behaviour for such a scenario consists of two probability distributions 
$p(ab|01)$ and $p(bc|12)$. Traditionally, one says that a non-disturbing 
behavior for 
 this scenario is noncontextual if there is a global probability 
distribution $p(abc)$ such that $p(ab|01)= \sum_c p(abc)$
 and $p(bc|12)=\sum_a p(abc)$. For our purposes it will be convenient to 
consider an extended global probability distribution
 $p(abb'c|011^{\prime}2)$ such that $p(bb'|11^{\prime}) = \sum_{a,c} 
p(abb'c|0112)=1$ iff $b=b'$, and zero otherwise. Then, in this situation, 
we say that a behavior  is noncontextual if
 there is an extended global probability distribution satisfying this 
condition such that $p(ab|01)= \sum_{b',c} p(abb'c|011^{\prime}2)$
 and $p(b'c|12)=\sum_{a,b} p(abb'c|011^{\prime}2)$. For non-disturbing 
behaviors, these two notions
 of noncontextualtiy are equivalent.

\begin{figure}[h!]
\definecolor{ffwwqq}{rgb}{1,0.4,0}
\definecolor{aqaqaq}{rgb}{1,0.6,0.2}
\begin{tikzpicture}[line cap=round,line join=round,>=triangle 45,x=1cm,y=1cm]
\draw [line width=2pt] (-5,1)-- (-2,5);
\draw [line width=2pt] (-2,5)-- (1,1);
\draw [fill=aqaqaq] (-5,1) circle (10pt);
\draw[color=black] (-5,1) node {$0$};
\draw [fill=ffwwqq] (-2,5) circle (10pt);
\draw[color=black] (-2,5) node {$1$};
\draw [fill=aqaqaq] (1,1) circle (10pt);
\draw[color=black] (1,1) node {$2$};
\end{tikzpicture}
 \caption{\footnotesize{A simple compatibility scenario with three 
measurements $0,1,2$ 
and two contexts, $\left\{0,1\right\}$ and 
$\left\{1,2\right\}$. Here the compatibility 
hypergraph associated with the scenario already coincides with its 
compatibility graph.}}
 \label{figc3}
\end{figure}
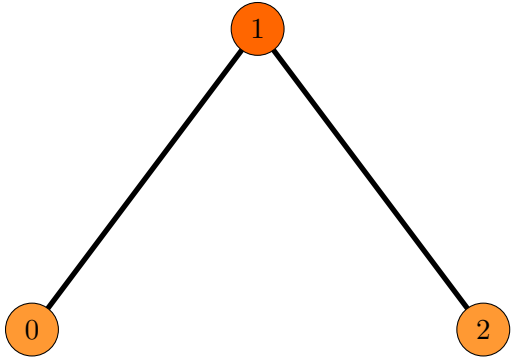

To define noncontextuality in a scenario where the non-disturbance property does not hold, we adopt the strategy of Ref. \cite{KDL15}.
We  relax 
 the requirement that marginals for $x^{i_1}_{k_1},x^{i_2}_{k_2},\ldots ,x^{i_l}_{k_l}$ be
perfectly correlated when they represent the same random variable . Instead of Eq. \eqref{eqcorrelated2}, we require that the probability of 
$x^{i_1}_{k_1},x^{i_2}_{k_2},\ldots , x^{i_l}_{k_l}$
being equal is the maximum allowed by the individual probability distributions of each $x^{i_l}_{k_l}$.

\begin{dfn}
 We say that a behavior has a \emph{maximally noncontextual description} if there is an extended global distribution \eqref{eqextendedglobal} such that 
the distribution of each context is obtained as a marginal, according to Eq.  \eqref{eqmarginalext}, and such that if  
$x^{i_1}_{k_1},x^{i_2}_{k_2},\ldots , x^{i_l}_{k_l}$
represent the same random variable, the marginals for $x^{i_1}_{k_1},x^{i_2}_{k_2}, \ldots , x^{i_l}_{k_l}$ defined by 
Eq. \eqref{eqcorrelated} are such that 
\be p\left(x^{i_1}_{k_1}=\ldots = x^{i_l}_{k_l}\right)=\sum_{a} p\left(a\ldots  a\left|x^{i_1}_{k_1}\ldots  x^{i_l}_{k_l}\right.\right) \ee
is the maximum consistent with the  marginal distributions $p\left(a^{i_j}_{k_j}\left|x^{i_j}_{k_j}\right.\right)$.
That is, a behavior is noncontextual in the extended sense if there is an extended global distribution that gives the correct 
marginal in each context and that maximizes the probability of $x^{i_1}_{k_1},x^{i_2}_{k_2},\ldots , x^{i_l}_{k_l}$ being equal if they 
represent the same random variable in different contexts.
\label{def:ext_context}
\end{dfn}


According Ref~\cite{KDL15} we define \emph{maximal coupling} as 
follows:
\begin{dfn}
Given 
$\{ x_{i_1}^{k_1},x_{i_2}^{k_2},\ldots, 
x_{i_l}^{k_l}\}$ a set of random variables representing the same 
measurement, we call a distribution 
$ p\left(a_{i_1}^{k_1}a_{i_2}^{k_2}\ldots  
a_{i_l}^{k_l}\left|x_{i_1}^{k_1}x_{i_2}^{k_2}\ldots  
x_{i_l}^{k_l}\right.\right) $
that gives the correct marginals 
$p\left(a_{i_j}^{k_j}\left|x_{i_j}^{k_j}\right.\right)$ a \emph{coupling} 
for 
$x_{i_1}^{k_1},x_{i_2}^{k_2},\ldots , x_{i_l}^{k_l}$. We say that such a 
coupling is \emph{maximal}
if $p\left(x_{i_1}^{k_1}=x_{i_2}^{k_2}=\ldots = x_{i_l}^{k_l}\right)$ 
achieves the maximum value consistent with the marginals
$p\left(a_{i_j}^{k_j}\left|x_{i_j}^{k_j}\right.\right)$.
 \label{def:max_coupling}
\end{dfn}

\section{Existence of Maximal Couplings}
\label{sec:coup}

It could be the case that a maximal coupling, as in 
Def.~\ref{def:max_coupling} did not exist for a given set of random 
variables which represents the same measurement. It turns out that it 
would never happen. Here we constructively show that a maximal coupling is 
a well-defined notion. \emph{i.e.} under certain assumptions there always 
exists at least one maximal coupling for a given set of random variables.

\begin{thm}
\label{teomaxcoupling}
 Given a set of random variables $x_{i_1}^{k_1},x_{i_2}^{k_2},\ldots , x_{i_l}^{k_l}$ with 
 distributions $p\left(a_{i_j}^{k_j}\left|x_{i_j}^{k_j}\right.\right)$ it is always possible to construct  a maximal 
 coupling for  this set with
 \be 
 p\left(x_{i_1}^{k_1}=x_{i_2}^{k_2}=\ldots =
x_{i_l}^{k_l}\right)= \sum_{a} \min_{j}\left\{ 
p\left(a\left|x_{i_j}^{k_j}\right.\right)\right\}.
 \ee
\end{thm}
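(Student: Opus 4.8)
The plan is to construct the maximal coupling explicitly and then to verify, by a marginal argument, that no coupling can do better. First I would reduce to the case where all the variables take values in the common outcome set $O$; since they represent the same measurement this is automatic, so I may write $p_j(a):=p\left(a\left|x_{i_j}^{k_j}\right.\right)$ for $j=1,\ldots,l$ and $a\in O$. Define $q(a):=\min_j p_j(a)$ and $M:=\sum_{a\in O}q(a)$. The idea is to build the coupling as a convex combination of a ``diagonal'' (perfectly correlated) part of weight $M$ and a ``product-like'' remainder of weight $1-M$ that uses up the leftover mass $p_j(a)-q(a)$ of each marginal. Concretely, set the diagonal contribution to be $q(a)$ on the tuple $(a,a,\ldots,a)$; this already gives $p\left(x_{i_1}^{k_1}=\cdots=x_{i_l}^{k_l}\right)\ge M$. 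For the off-diagonal part, let $r_j(a):=p_j(a)-q(a)\ge 0$, note $\sum_a r_j(a)=1-M$ for every $j$, and (if $M<1$) distribute the mass $1-M$ as the product $\prod_j \big(r_j(a_j)/(1-M)\big)$ over tuples $(a_1,\ldots,a_l)$. One checks directly that the resulting distribution on $O^{l}$ has the prescribed marginals $p_j$, so it is a coupling in the sense of Def.~\ref{def:max_coupling}.

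Next I would prove optimality, i.e. that for \emph{any} coupling $p$ of $x_{i_1}^{k_1},\ldots,x_{i_l}^{k_l}$ one has $p\left(x_{i_1}^{k_1}=\cdots=x_{i_l}^{k_l}\right)\le M$. The key observation is that the event $\{x_{i_1}^{k_1}=\cdots=x_{i_l}^{k_l}=a\}$ is contained in each of the events $\{x_{i_j}^{k_j}=a\}$, so its probability is at most $\min_j p_j(a)=q(a)$. Summing over $a\in O$ gives
\be
p\left(x_{i_1}^{k_1}=\cdots=x_{i_l}^{k_l}\right)=\sum_{a\in O} p\left(x_{i_1}^{k_1}=\cdots=x_{i_l}^{k_l}=a\right)\le \sum_{a\in O} q(a)=M,
\ee
with equality achieved by the construction above. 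Hence the constructed coupling is maximal and the maximal value equals $\sum_a \min_j\{p(a|x_{i_j}^{k_j})\}$, as claimed.

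I expect the main (minor) obstacle to be purely bookkeeping: verifying that the convex combination of the diagonal part and the product part actually reproduces each marginal $p_j$, i.e. that $q(a)+\sum_{\text{tuples with }a_j=a}\prod_k r_k(a_k)/(1-M) = q(a)+r_j(a)=p_j(a)$, which uses $\sum_{a}r_k(a)=1-M$ for $k\ne j$. The degenerate case $M=1$ (all $p_j$ equal) must be handled separately, but there the diagonal part already has total mass $1$ and is the desired coupling. No topological or measure-theoretic subtlety arises because $O$ is finite; the whole argument is finite-dimensional and elementary, the only real content being the two-line sandwich estimate $p(\text{all equal }a)\le\min_j p_j(a)$ together with the matching construction.
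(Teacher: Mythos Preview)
Your proposal is correct and follows essentially the same strategy as the paper: the upper bound via $p(\text{all equal to }a)\le\min_j p_j(a)$ together with the explicit diagonal-plus-product construction using the leftover masses $r_j=p_j-q$. Your version is in fact more careful, since you normalize the product part by $(1-M)^{l-1}$, verify the marginals, and treat the degenerate case $M=1$, whereas the paper's stated off-diagonal formula $\prod_j p'_j(a_j)$ omits this normalization and so does not literally sum to $1$ when $l\ge 2$ and $0<M<1$.
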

\begin{proof}
Let \be p_-(a)= \min_{j} \left\{ p\left(a\left|x_{i_j}^{k_j}\right.\right)\right\}.\ee Then
 \be p\left(x_{i_1}^{k_1}=x_{i_2}^{k_2}=\ldots = x_{i_l}^{k_l}=a\right) \leq p_-(a) \label{eqp=max}\ee
 and hence,
 \be p\left(x_{i_1}^{k_1}=\ldots = x_{i_l}^{k_l}\right)  = 
 \sum_a  p\left(x_{i_1}^{k_1}=\ldots = x_{i_l}^{k_l}=a\right)
 \leq  \sum_a  p_-(a). \ee
 Construct the coupling as follows: if 
$a_{i_1}^{k_1} = a_{i_2}^{k_2} = \ldots =  a_{i_l}^{k_l}=a$, we 
define
 \be p\left(a\ldots a \left| x_{i_1}^{k_1}x_{i_2}^{k_2}\ldots  x_{i_l}^{k_l}\right.\right)=p_-(a);\ee
 if not, 
 we define $$p\left(a_{i_1}^{k_1}a_{i_2}^{k_2}\ldots  a_{i_l}^{k_l}\left|x_{i_1}^{k_1}x_{i_2}^{k_2}\ldots  x_{i_l}^{k_l}\right.\right)= 
 \prod_j p'\left(a_{i_j}^{k_j}\left| x_{i_j}^{k_j}\right.\right),$$
 where 
\be p'\left(a_{i_j}^{k_j}\left|x_{i_j}^{k_j}\right.\right)= p\left(a_{i_j}^{k_j}\left|x_{i_j}^{k_j}\right.\right)-p_-(a).\ee
 This defines a maximal coupling for $x_{i_1}^{k_1},x_{i_2}^{k_2},\ldots , x_{i_l}^{k_l}$.
 \end{proof}

One should notice that although the method we have applied in the 
proof above provides a maximal coupling for the considered set of 
random variables, there is no guarantee that such a coupling is the unique 
consistent with Def.~\ref{def:max_coupling} when treating with the general 
case. Actually, it turns out that in some specific 
situations the coupling constructed above is indeed unique. This is 
always the case, for example, for two variables with any
number of outcomes and three variables each of which with two 
outcomes.

\begin{thm}[Sufficient condition for extended contextualtiy]
If there is  an extended global distribution, as in Eq. 
\eqref{eqextendedglobal}, such that the marginals in each context are 
equal to the distributions of the  behavior $\mathrm{B}$, according to Eq. 
 \eqref{eqmarginalext}, and such that the corresponding couplings for each 
set 
$x_{i_1}^{k_1},x_{i_2}^{k_2},\ldots , x_{i_l}^{k_l}$
representing the same random variable, defined by 
Eq. \eqref{eqcorrelated}, are equal to the ones given in  Thm. \ref{teomaxcoupling}, then the behavior $\mathrm{B}$
is noncontextual in the extended sense. 
 \label{teomaxcouplinge}
\end{thm}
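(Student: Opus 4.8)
The plan is to show that the hypothesized extended global distribution witnesses extended noncontextuality directly from the definitions, the only substantive point being that the couplings appearing in it are \emph{maximal} couplings in the sense of Def.~\ref{def:max_coupling}. First I would unpack what the hypothesis gives us: an extended global distribution of the form \eqref{eqextendedglobal} whose marginal over each context $C_i$, computed via Eq.~\eqref{eqmarginalext}, reproduces the behavior $\mathrm{B}$, and whose induced marginal over any family $x_{i_1}^{k_1},\ldots,x_{i_l}^{k_l}$ of copies of the same random variable — this marginal being computed via Eq.~\eqref{eqcorrelated} — coincides with the explicit coupling built in Thm.~\ref{teomaxcoupling}. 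By Def.~\ref{def:ext_context}, to conclude that $\mathrm{B}$ is noncontextual in the extended sense it suffices to verify two things: (i) the context marginals are correct — which is exactly the hypothesis — and (ii) for each family of coincident random variables, $p(x_{i_1}^{k_1}=\cdots=x_{i_l}^{k_l})$ attains the maximum value consistent with the single-variable marginals $p(a_{i_j}^{k_j}|x_{i_j}^{k_j})$.

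For step (ii) I would argue as follows. The proof of Thm.~\ref{teomaxcoupling} established two facts about \emph{any} coupling of $x_{i_1}^{k_1},\ldots,x_{i_l}^{k_l}$ with the prescribed marginals: the upper bound
\be
p\left(x_{i_1}^{k_1}=\cdots=x_{i_l}^{k_l}\right)\leq \sum_a \min_j\left\{p\left(a\left|x_{i_j}^{k_j}\right.\right)\right\},
\ee
and the fact that the explicit construction attains this bound. Since by hypothesis the coupling induced by the extended global distribution on this family equals the one from Thm.~\ref{teomaxcoupling}, it attains the right-hand side above, hence it is maximal. One subtlety to address carefully: in the hypothesis the single-variable marginals $p(a_{i_j}^{k_j}|x_{i_j}^{k_j})$ that enter the maximality condition of Def.~\ref{def:max_coupling} are the ones \emph{induced by} $\mathrm{B}$ on each context, so I would note that because the context marginals of the extended global distribution equal those of $\mathrm{B}$ (hypothesis (i)), the induced one-variable marginals agree with those of $\mathrm{B}$, and therefore the relevant ``maximum consistent with the marginals'' is computed with respect to exactly the data of $\mathrm{B}$. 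This closes the gap between the coupling being maximal as an abstract object and its being the maximal coupling \emph{demanded} in the definition of extended noncontextuality.

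Finally I would assemble these pieces: the extended global distribution satisfies Eq.~\eqref{eqmarginalext} by hypothesis, and for each family of coincident variables the induced coupling is a maximal coupling by the argument above; hence it meets every clause of Def.~\ref{def:ext_context}, so $\mathrm{B}$ is noncontextual in the extended sense. The proof is essentially a bookkeeping exercise, and I do not expect a genuine obstacle; the one place requiring care is making sure that ``maximal coupling'' is being measured against the one-variable marginals of $\mathrm{B}$ and not some other distribution, which is handled by invoking hypothesis (i) before invoking the optimality established in Thm.~\ref{teomaxcoupling}. A one- or two-line remark could also point out why the converse need not hold in general: Thm.~\ref{teomaxcoupling} gives \emph{a} maximal coupling but not necessarily the unique one, so a behavior could be extended-noncontextual via a different maximal coupling not of that specific product-plus-diagonal form — which is exactly why this theorem is phrased as a sufficient, not necessary, condition.
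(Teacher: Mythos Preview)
Your proposal is correct and matches the paper's treatment: the paper does not give an explicit proof of this theorem, treating it as an immediate consequence of Def.~\ref{def:ext_context} together with the fact that the coupling built in Thm.~\ref{teomaxcoupling} attains the universal upper bound $\sum_a \min_j p(a|x_{i_j}^{k_j})$ and is therefore maximal. Your write-up simply spells out this reasoning, and your closing remark about non-uniqueness of maximal couplings is exactly the observation the paper makes immediately after the statement.
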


The condition stated in Thm. \ref{teomaxcouplinge}  is also necessary when the coupling constructed in 
the proof of Thm. \ref{teomaxcoupling} is unique.

 

When the  coupling   given in  Thm. \ref{teomaxcoupling} is not unique, the 
difference between any other coupling and the 
one constructed in the proof of this theorem can only appear in the terms
\be  p\left(a_{i_1}^{k_1}a_{i_2}^{k_2}\ldots  a_{i_l}^{k_l}\left|x_{i_1}^{k_1}x_{i_2}^{k_2}\ldots  x_{i_l}^{k_l}\right.\right)\ee 
for which the outcomes $a_{i_1}^{k_1}, a_{i_2}^{k_2}, \ldots , a_{i_l}^{k_l}$ are not all equal. Otherwise this would contradict the 
hypotheses that the coupling is maximal. Then for any maximal coupling we
can at least say that for each pair $x_{i_m}^{k_m}$ and $x_{i_n}^{k_n}$
we have
\be
 p_-(a)=  p\left(x_{i_1}^{k_1}=  \ldots = x_{i_l}^{k_l}\right)  \leq 
  p\left(x_{i_m}^{k_m}=x_{i_n}^{k_n}=a\right) \leq 
   \min_{m,n}   \left\{ p\left(a \left| x_{i_m}^{k_m}\right.\right), p\left(a\left| x_{i_n}^{k_n}\right.\right)\right\} .  
\ee
This relation will be used to construct  necessary condition for extended noncontextuality in Sec. \ref{sec:valid_ineq}.

\begin{thm}[Necessary condition for maximal coupling]
\label{thm:necessary_max_coup}
 If $p\left(a_{i_1}^{k_1}a_{i_2}^{k_2}\ldots  a_{i_l}^{k_l}\left|x_{i_1}^{k_1}x_{i_2}^{k_2}\ldots  x_{i_l}^{k_l}\right.\right)$ is a 
 maximal coupling for the random  variables $x_{i_1}^{k_1},x_{i_2}^{k_2},\ldots , x_{i_l}^{k_l}$, then 
 \be
p_{-}(a)=p(x_{i_1}^{k_1},...,x_{i_l}^{k_l}) \leq 
p\left( \left\lbrace x_{i_j}^{k_j}=a \right\rbrace_{j \in \mc{S}} 
\right)\leq \min_{j \in \mc{S}}\left\lbrace p(a \vert x_{i_j}^{k_j}) 
\right\rbrace 
\ee
for any subset $\mc{S} \subset [l]$.
\end{thm}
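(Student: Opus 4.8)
The plan is to prove the chain of inequalities
$p_-(a) \leq p\left(\left\{x_{i_j}^{k_j}=a\right\}_{j \in \mc{S}}\right) \leq \min_{j \in \mc{S}}\left\{p(a\vert x_{i_j}^{k_j})\right\}$
by separating the three parts. The rightmost inequality is the easiest: for any fixed $j_0 \in \mc{S}$, the event $\left\{x_{i_j}^{k_j}=a\ \forall j \in \mc{S}\right\}$ is contained in the event $\left\{x_{i_{j_0}}^{k_{j_0}}=a\right\}$, so its probability is at most $p(a\vert x_{i_{j_0}}^{k_{j_0}})$; taking the minimum over $j_0 \in \mc{S}$ gives the bound. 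This uses nothing but monotonicity of probability under set inclusion together with the fact that the coupling reproduces the single-variable marginals (Definition~\ref{def:max_coupling}).

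Next I would handle the leftmost inequality, which is where the \emph{maximality} of the coupling enters and which I expect to be the main obstacle. The point is that the full-agreement event $E := \left\{x_{i_1}^{k_1}=\ldots=x_{i_l}^{k_l}\right\}$ decomposes as a disjoint union $E = \bigsqcup_a E_a$ with $E_a := \left\{x_{i_1}^{k_1}=\ldots=x_{i_l}^{k_l}=a\right\}$, and by Theorem~\ref{teomaxcoupling} together with the assumption that the coupling is maximal we have $p(E) = \sum_a p_-(a)$. Combined with the termwise bound $p(E_a) \leq p_-(a)$ from Eq.~\eqref{eqp=max} (which holds for \emph{any} coupling), the only way $\sum_a p(E_a) = \sum_a p_-(a)$ can hold is if $p(E_a) = p_-(a)$ for every $a$ individually. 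Hence $p\left(\left\{x_{i_j}^{k_j}=a\right\}_{j\in\mc{S}}\right) \geq p(E_a) = p_-(a)$, since $E_a$ is a subset of the $\mc{S}$-agreement event. This is precisely the termwise sharpening already noted in the paragraph preceding the theorem, so I would simply invoke that observation rather than re-deriving it.

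The subtle point to get right in writing this up is that the definition of maximality (Definition~\ref{def:max_coupling}) only asserts that $p(E)$ attains its maximal value consistent with the marginals, not a priori that the termwise equalities hold; so I must spell out the ``sum of termwise upper bounds equals the sum'' argument to promote maximality of the total to maximality of each term. Everything else is routine monotonicity. Putting the three pieces together yields the displayed chain for arbitrary $\mc{S} \subseteq [l]$, and I would close by remarking that the case $\mc{S} = [l]$ recovers $p(E_a) = p_-(a)$ exactly, and the case $\left|\mc{S}\right| = 2$ is the pairwise bound quoted just before the statement.
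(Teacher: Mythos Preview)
Your proposal is correct and follows the paper's own line of reasoning: the paper does not provide a separate proof environment for this theorem but rather justifies it by the paragraph immediately preceding the statement, arguing that any maximal coupling must agree with the constructed one on the diagonal terms (else maximality is violated), which is exactly your ``sum of termwise upper bounds equals the sum'' argument. Your write-up is in fact more explicit than the paper's, since you spell out why termwise equality $p(E_a)=p_-(a)$ is forced, whereas the paper leaves this implicit in the phrase ``otherwise this would contradict the hypotheses that the coupling is maximal.''
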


%

\section{Two outcomes}
\label{sec: twoout}

When $O=\left\{ -1,1\right\} $ we can use a powerful tool from graph theory to find 
necessary conditions for noncontextuality: the \emph{cut 
polytope}~\cite{AIT06,DL97,Amaral14}.

\begin{dfn}[Cut Polytope]
 The \emph{cut polytope} of a graph $G=(V,E)$, 
denoted by $CUT(G)$, is the convex hull of the set $\mc{V}=\left\lbrace 
\delta_{G}(S) \in \mathds{R}^{E}; S \subset V \right\rbrace$ which contains 
all cut vectors of $G$. Given $S \subset V$, the cut vector $\delta_{G}(S) 
\in \mathds{R}^{E}$ associated with $S$ is defined as:
\be
\delta_{u,v}:=
\begin{cases}
1, \,\, \mbox{if} \,\, \vert S \cap \{ u,v \} \vert =1 \\
0, \,\, \mbox{otherwise}.
\end{cases}
\ee
\end{dfn}

Let $\mathrm{G}=\left(\mathrm{X}, \mathrm{E}\right)$ be the compatibility 
graph of a scenario $\Gamma=(X,\mc{C},\{-1,1\})$.
Given a behaviour $\mathrm{B}$, let $\mathrm{P}_{\mathrm{B}} \in 
 \mathds{R}^{X} \times \mathds{R}^{E}$ 
be the vector whose first $\left|\mathrm{X}\right|$ entries are the 
expectation values of the random variables in $\mathrm{X}$ 
\be 
P_{x}:= \left\langle x\right\rangle = p\left(1|x\right)-p\left(-1|x\right), 
x \in \mathrm{X} \label{eq:p1}
\ee
and whose $\left|\mathrm{E}\right|$ subsequent entries are the expectation 
values of product of pairs of compatible random variables in $\mathrm{X}$
\be
P_{xy}:=\left\langle xy\right\rangle = p\left(x=y\right)-p\left(x\neq 
y\right), \left(x,y\right) \in \mathrm{E}.\label{eq:p2}
\ee
Let $\nabla \mathrm{G}$ be the \emph{suspension graph} of $\mathrm{G}$,  
obtained from $\mathrm{G}$ by adding one new vertex $u$ to 
$\mathrm{X}$ which is adjacent to all the other vertices (see 
Fig.~\ref{fig:suspension_graph}).

\begin{figure}
\begin{subfigure}[b]{0.45\columnwidth}
\definecolor{zzttqq}{rgb}{0.6,0.2,0}
\definecolor{ccqqqq}{rgb}{0.8,0,0}
\definecolor{ffubub}{rgb}{1,0.29411764705882354,0.29411764705882354}
\begin{tikzpicture}[scale=0.8,line cap=round,line join=round,>=triangle 45,x=1cm,y=1cm]
\draw [line width=2pt,color=black] (-5,-5) -- (-2,-5) -- (-0.12953059442379988,-2.654505552595911) -- (-0.7970933962927425,0.2702781839495594) -- (-3.5,1.571929401302234) -- (-6.202906603707257,0.27027818394956027) -- (-6.8704694055762,-2.65450555259591) -- cycle;
\draw [fill=ccqqqq] (-5,-5) circle (10pt);
\draw [fill=ccqqqq] (-2,-5) circle (10pt);
\draw [fill=ccqqqq] (-0.12953059442379988,-2.654505552595911) circle (10pt);
\draw [fill=ccqqqq] (-0.7970933962927425,0.2702781839495594) circle (10pt);
\draw [fill=ccqqqq] (-3.5,1.571929401302234) circle (10pt);
\draw [fill=ccqqqq] (-6.202906603707257,0.27027818394956027) circle (10pt);
\draw [fill=ccqqqq] (-6.8704694055762,-2.65450555259591) circle (10pt);
\end{tikzpicture}
\caption{}
\end{subfigure}
\begin{subfigure}[b]{0.45\columnwidth}
 \definecolor{zzttqq}{rgb}{0.6,0.2,0}
\definecolor{ccqqqq}{rgb}{0.8,0,0}
\definecolor{ffubub}{rgb}{1,0.29411764705882354,0.29411764705882354}
\begin{tikzpicture}[scale=0.8,line cap=round,line join=round,>=triangle 45,x=1cm,y=1cm]
\draw [line width=2pt,color=black] (-5,-5) -- (-2,-5) -- (-0.12953059442379988,-2.654505552595911) -- (-0.7970933962927425,0.2702781839495594) -- (-3.5,1.571929401302234) -- (-6.202906603707257,0.27027818394956027) -- (-6.8704694055762,-2.65450555259591) -- cycle;
\draw [line width=2pt] (-3.5,-1.8852179051414955)-- (-3.5,1.571929401302234);
\draw [line width=2pt] (-3.5,-1.8852179051414955)-- (-6.202906603707257,0.27027818394956027);
\draw [line width=2pt] (-3.5,-1.8852179051414955)-- (-6.8704694055762,-2.65450555259591);
\draw [line width=2pt] (-3.5,-1.8852179051414955)-- (-5,-5);
\draw [line width=2pt] (-3.5,-1.8852179051414955)-- (-2,-5);
\draw [line width=2pt] (-3.5,-1.8852179051414955)-- (-0.12953059442379988,-2.654505552595911);
\draw [line width=2pt] (-3.5,-1.8852179051414955)-- (-0.7970933962927425,0.2702781839495594);
\draw [fill=ccqqqq] (-5,-5) circle (10pt);
\draw [fill=ccqqqq] (-2,-5) circle (10pt);
\draw [fill=ccqqqq] (-0.12953059442379988,-2.654505552595911) circle (10pt);
\draw [fill=ccqqqq] (-0.7970933962927425,0.2702781839495594) circle (10pt);
\draw [fill=ccqqqq] (-3.5,1.571929401302234) circle (10pt);
\draw [fill=ccqqqq] (-6.202906603707257,0.27027818394956027) circle (10pt);
\draw [fill=ccqqqq] (-6.8704694055762,-2.65450555259591) circle (10pt);
\draw [fill=ffubub] (-3.5,-1.8852179051414955) circle (10pt);
\draw[color=black] (-3.5,-1.8852179051414955) node {$u$};
\end{tikzpicture}
\caption{}
\end{subfigure}
 \caption{\footnotesize{ An example of a suspension graph. On the right 
hand side we 
depicted a 7-cycle, whereas on the left it is depicted its 
suspension graph, with a new vertex $u$ added to the vertex set, and
connected to each other vertex already belonging to the 7-cycle.}}
 \label{fig:suspension_graph}
\end{figure}
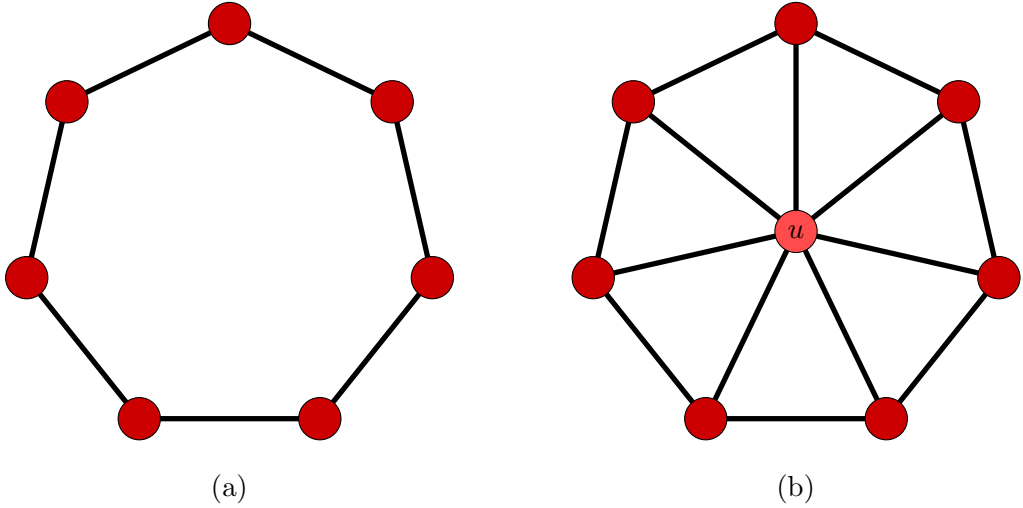

\begin{prop}
Let $\Gamma=(X,\mc{C},\{-1,1\})$ be a compatibility scenario, and 
let $G$ be the compatibility graph associated with $\Gamma$. If a 
behavior $\mathrm{B}$ is noncontextual,
 the vector $\mathrm{P}_{\mathrm{B}}$   belongs to the cut polytope of $\nabla \mathrm{G}$.
 \label{thm:nec_cut}
\end{prop}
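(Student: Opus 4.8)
The plan is to exploit the classical affine (covariance) correspondence between the cut polytope of the suspension graph and the set of correlation vectors produced by deterministic value assignments, together with the observation that a noncontextual behavior is, by definition, a convex mixture of such deterministic assignments. First I would invoke the definition of noncontextuality: since $\mathrm{B}\subset\mathcal{X}(\Gamma)$ is noncontextual, there is a global distribution $p\de{a_1\ldots a_{|X|}}$ on $O^{X}=\{-1,1\}^{X}$ from which each $p_C$ is recovered as a marginal via Eq.~\eqref{equsualnoncontex}. Writing this global distribution as a convex combination $\sum_{\lambda} q(\lambda)\,\delta_{\lambda}$ of point masses indexed by assignments $\lambda\colon X\to\{-1,1\}$ (with $q(\lambda)\ge 0$, $\sum_\lambda q(\lambda)=1$), and using that for a non-disturbing behavior the one- and two-point marginals are well defined — the edges of $\mathrm{G}$ being exactly the pairs of variables lying in a common context — linearity of expectation gives $P_x=\sum_{\lambda}q(\lambda)\lambda(x)$ for $x\in\mathrm{X}$ and $P_{xy}=\sum_{\lambda}q(\lambda)\lambda(x)\lambda(y)$ for $(x,y)\in\mathrm{E}$. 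Hence $\mathrm{P}_{\mathrm{B}}=\sum_{\lambda}q(\lambda)\,\mathrm{P}_{\lambda}$, where $\mathrm{P}_{\lambda}\in\mathds{R}^{X}\times\mathds{R}^{E}$ is the ``deterministic'' vector with entries $\lambda(x)$ and $\lambda(x)\lambda(y)$.

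Next I would identify the vectors $\mathrm{P}_{\lambda}$ with the images of the vertices of $\mathrm{CUT}(\nabla\mathrm{G})$ under the covariance map. Adjoin the suspension vertex $u$ and extend each assignment by setting $\lambda(u):=1$; to $\lambda$ associate the set $S_{\lambda}:=\DE{v\in \mathrm{X}\cup\{u\}:\lambda(v)=-1}$, so that $u\notin S_{\lambda}$. A direct check of the definition of $\delta_{\nabla\mathrm{G}}$ yields $1-2\,\delta_{u,x}(S_{\lambda})=\lambda(x)$ and $1-2\,\delta_{x,y}(S_{\lambda})=\lambda(x)\lambda(y)$, i.e. the componentwise affine bijection $\gamma$ defined by $\gamma(\delta)_{x}:=1-2\delta_{u,x}$ and $\gamma(\delta)_{xy}:=1-2\delta_{x,y}$ sends $\delta_{\nabla\mathrm{G}}(S_{\lambda})$ to $\mathrm{P}_{\lambda}$. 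Because $\delta_{\nabla\mathrm{G}}(S)=\delta_{\nabla\mathrm{G}}\big((\mathrm{X}\cup\{u\})\setminus S\big)$, every cut vector of $\nabla\mathrm{G}$ is of the form $\delta_{\nabla\mathrm{G}}(S)$ with $u\notin S$, and such sets $S$ are precisely the $S_{\lambda}$; thus $\gamma$ maps the vertex set of $\mathrm{CUT}(\nabla\mathrm{G})$ onto $\DE{\mathrm{P}_{\lambda}}_{\lambda}$, and since $\gamma$ is affine, $\gamma\big(\mathrm{CUT}(\nabla\mathrm{G})\big)=\operatorname{conv}\DE{\mathrm{P}_{\lambda}}_{\lambda}$. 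Combining this with the previous paragraph gives $\mathrm{P}_{\mathrm{B}}\in\operatorname{conv}\DE{\mathrm{P}_{\lambda}}_{\lambda}=\gamma\big(\mathrm{CUT}(\nabla\mathrm{G})\big)$, which under this standard identification is exactly the claim that $\mathrm{P}_{\mathrm{B}}\in\mathrm{CUT}(\nabla\mathrm{G})$.

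The linearity-of-expectation step and the sign bookkeeping are routine. The one point that genuinely needs care is the role of the suspension vertex: the single-variable expectations $\mean{x}$ must be read as correlations $\mean{x\,u}$ with a fictitious variable $u\equiv 1$, which is precisely what adjoining $u$ (adjacent to all of $\mathrm{X}$) encodes, and one must note that fixing $\lambda(u)=1$ discards no cuts thanks to the complementation symmetry $S\leftrightarrow (\mathrm{X}\cup\{u\})\setminus S$ of $\delta_{\nabla\mathrm{G}}$. I would also remark explicitly that the statement is to be understood up to the covariance affine isomorphism $\gamma$ between correlation coordinates (in which $\mathrm{P}_{\mathrm{B}}$ is defined) and the $0/1$ cut coordinates of the Definition, so that no generality is lost.
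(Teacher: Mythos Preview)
Your argument is correct and complete. The paper does not actually prove this proposition: immediately after stating it, the authors write ``For a proof of this result, see Refs.~\cite{AII06,AT17,AT17book}'' and move on. What you have written is precisely the standard covariance-map argument those references contain --- express the noncontextual behavior as a convex mixture of deterministic $\pm 1$ assignments, and identify each deterministic correlation vector with a cut vector of $\nabla\mathrm{G}$ through the affine change of coordinates $z\mapsto 1-2z$, using the suspension vertex $u$ (with $\lambda(u)=1$) to carry the one-point expectations. Your handling of the two points that require care --- the complementation symmetry ensuring that fixing $u\notin S$ loses no cuts, and the explicit remark that the identification is up to the covariance isomorphism between $\pm 1$ correlation coordinates and $0/1$ cut coordinates --- is exactly right and in fact more explicit than what the paper offers.
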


For a proof of this result, see Refs. \cite{AII06,AT17,AT17book}. It implies that
characterizing completely the cut polytope $\mathrm{CUT}\left(\nabla \mathrm{G}\right)$ gives a strong necessary condition for 
noncontextuality.
However, as shown in references \cite{Pitowsky91, DL97, AII06, AII08}, such a
characterization is unlikely, since membership testing in
this polytope is a NP-complete
problem~\cite{LexSch03} for general graphs. To do so requires one to find 
all linear inequalities that define the facets of $\mathrm{CUT}\left(\nabla 
\mathrm{G}\right)$,
which is only feasible for limited, although important, scenarios. Nevertheless, one can generally find necessary conditions for
membership in $\mathrm{CUT}\left(\nabla\mathrm{ G}\right)$,
which can be used  to witness contextuality in scenarios where a complete characterization of $\mathrm{CUT}\left(\nabla \mathrm{G}\right)$ is
still missing. 

\begin{dfn}
 Given  $\mathrm{A} \in \mathbb{R}^{\left|\mathrm{X}\right|+\left|\mathrm{E}\right|}$ and $ \mathrm{b} \in \mathbb{R}$ we say that the linear inequality 
 \be \mathrm{A} \cdot \mathrm{P} \leq \mathrm{b} , \ee
 on $\mathrm{P} \in \mathbb{R}^{\left|\mathrm{X}\right|+\left|\mathrm{E}\right|}$ is a \emph{noncontextuality inequality}  if it
 is satisfied for all $\mathrm{P} \in  \mathrm{CUT}\left(\nabla \mathrm{G}\right)$. We say that this inequality is \emph{tight} if 
 $\mathrm{A} \cdot \mathrm{P} = \mathrm{b}$ for some $\mathrm{P} \in  \mathrm{CUT}\left(\nabla\mathrm{ G}\right)$ and we say that this inequality is
 \emph{facet-defining} if the set \be \left\{ \mathrm{P} \in  \mathrm{CUT}\left(\nabla \mathrm{G}\right) \left| \mathrm{A} \cdot \mathrm{P} = \mathrm{b} \right.\right\}\ee
 is a facet of $\mathrm{CUT}\left(\nabla \mathrm{G}\right)$.
\end{dfn}

Every noncontextuality inequality gives a necessary condition for noncontextuality in the corresponding scenario. What we do next is to
use known inequalities valid for $\mathrm{CUT}\left(\nabla G\right)$ to find necessary conditions for  noncontextuality in the extended sense.

\subsection{Extended compatibility hypergraph}
\label{subsec:Extended_Comp_Hyper}

\begin{dfn}
Let $\mathrm{H}$ be the compatibility hypergraph for a compatibility 
scenario  $\Gamma=(X,\mc{C},\{-1,1\})$. Construct the 
\emph{extended compatibility hypergraph}
$\mathscr{H}$ of this scenario in the following way. 
Given a vertex $x \in X$, let $C_{i_1}, \ldots , C_{i_l}$ be all 
hyperedges containing it. We add to the vertex set of $\mathscr{H}$ the 
vertices 
$x^{i_1}, \ldots , x^{i_l}$, which form a hyperedge in $\mathscr{H}$. The 
other hyperedges of $\mathscr{H}$ are in one-to-one correspondence
with the hyperedges of $\mathrm{H}$: to each hyperedge $ C_i=\left\{x_{1}, 
x_{2}, \ldots ,  x_{\left|C_i\right|}\right\}$ in $\mathrm{H}$ corresponds
the hyperedge $\left\{x_{1}^i, x_{2}^i, \ldots ,  
x_{\left|C_i\right|}^i\right\}$ in $\mathscr{H}$.
\end{dfn}

 Fig.\ref{fig:path_ext} illustrates this construction for 
a simple example.

\begin{figure}[h!]
 \begin{subfigure}[b]{0.4\textwidth}
\definecolor{ffwwqq}{rgb}{1,0.4,0}
\definecolor{aqaqaq}{rgb}{1,0.6,0.2}
\begin{tikzpicture}[line cap=round,line join=round,>=triangle 45,x=1cm,y=1cm]
\draw [line width=2pt] (-5,1)-- (-2,5);
\draw [line width=2pt] (-2,5)-- (1,1);
\draw [fill=aqaqaq] (-5,1) circle (10pt);
\draw[color=black] (-5,1) node {$0$};
\draw [fill=ffwwqq] (-2,5) circle (10pt);
\draw[color=black] (-2,5) node {$1$};
\draw [fill=aqaqaq] (1,1) circle (10pt);
\draw[color=black] (1,1) node {$2$};
\end{tikzpicture}
\caption{ }
\end{subfigure}
\quad \qquad \quad
 \begin{subfigure}[b]{0.4\textwidth}
\definecolor{ffwwqq}{rgb}{1,0.4,0}
\definecolor{aqaqaq}{rgb}{1,0.6,0.2}
\begin{tikzpicture}[line cap=round,line join=round,>=triangle 45,x=1cm,y=1cm]
\draw [line width=2pt] (-5,1)-- (-3,5);
\draw [line width=2pt] (-1,5)-- (1,1);
\draw [line width=2pt] (-1,5)-- (-3,5);
\draw [fill=aqaqaq] (-5,1) circle (10pt);
\draw[color=black] (-5,1) node {$0^1$};
\draw [fill=ffwwqq] (-3,5) circle (10pt);
\draw[color=black] (-3,5) node {$1^1$};
\draw [fill=ffwwqq] (-1,5) circle (10pt);
\draw[color=black] (-1,5) node {$1^2$};
\draw [fill=aqaqaq] (1,1) circle (10pt);
\draw[color=black] (1,1) node {$2^2$};
\end{tikzpicture}
\caption{  }
 
 \end{subfigure}
 \caption{\footnotesize{ (a)  Compatibility hypergraph $\mathrm{H}$ of  
scenario with three measurements $0,1,2$ and two contexts, 
$C_1=\left\{0,1\right\}$ and $C_2=\left\{1,2\right\}$.
 (b) The extended compatibility hypergraph $\mathscr{H}$ of $\mathrm{H}$. Measurement $0$ belongs only to context $C_1$, measurement $2$ belongs only
 to context  $C_2$, while measurement $1$ belongs to both contexts. Hence,  the vertex set of $\mathscr{H}$ is
 $\left\{0^1, 1^1, 1^2, 2^2\right\}$. The hyperedges of $\mathscr{H}$ are $\left\{1^1, 1^2\right\}$,  $\left\{0^1, 1^1\right\}$
 and $ \left\{1^2, 2^2\right\}$, the last two being the ones  corresponding 
to those of $\mathrm{H}$.}}
  \label{fig:path_ext}
\end{figure}

\begin{dfn}
Given a behavior $\mathrm{B}$ for the compatibility scenario defined by  hypergraph $\mathrm{H}$,  we construct an \emph{extended behavior} $\mathscr{B}$
for  $\mathrm{B}$ in the following way:
for context $\left\{x_{1}^i x_{2}^i \ldots  x_{\left|C_i\right|}^i\right\}$ of $\mathscr{H}$ corresponding to context $C_i=\left\{x_1 x_2 \ldots  x_{\left|C_i\right|}\right\}$ of $\mathrm{H}$ the probability distribution  assigned by behavior $\mathscr{B}$ is equal to the probability distribution assigned to $C_i$ by behavior 
$\mathrm{B}$;
 for context  $x^{i_1}, \ldots , x^{i_l}$ of $\mathscr{H}$ corresponding to a vertex $x \in X$ of $\mathrm{H}$, the probability distribution  assigned by behavior $\mathscr{B}$ is any maximal coupling for the variables $x^{i_1}, \ldots , x^{i_l}$.
 \end{dfn}
 
 Since, in general, maximal couplings are not unique, for a given 
behaviour $B$, there might exist more than only one extended behaviour 
$\mathscr{B}$ associated with it. In other words, $\mathscr{B}$ will 
also not be unique. With these definitions, we can rewrite Dfn. 
\ref{def:ext_context} as the following theorem:

\begin{thm}
A behavior $\mathrm{B}$ for the compatibility scenario defined by the hypergraph $\mathrm{H}$ has a \emph{maximally noncontextual description} 
 if, and only if, there is an extended behavior  $\mathscr{B}$
for  $\mathrm{B}$ which is noncontextual with respect to the compatibility scenario defined by the extended compatibility hypergraph  $\mathscr{H}$.
\label{thm:extended}
\end{thm}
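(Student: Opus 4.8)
The plan is to unwind both sides of the claimed equivalence to the definitions and check that the data on either side is literally the same data, just relabelled. The statement is essentially a dictionary: ``maximally noncontextual description of $\mathrm{B}$'' (Def.~\ref{def:ext_context}) is phrased in terms of an extended global distribution \eqref{eqextendedglobal} over the disjoint-copies index set, and ``noncontextual extended behavior $\mathscr{B}$ with respect to $\mathscr{H}$'' is the traditional notion (the displayed condition \eqref{equsualnoncontex}, applied to the scenario whose vertices are the $x^{i}$'s) for \emph{some} extended behavior $\mathscr{B}$. So the whole content is that (i) the vertex set of $\mathscr{H}$ is exactly the index set $\{x^i : x\in X,\ C_i\ni x\}$ that labels the entries of the extended global distribution, (ii) a global distribution for the scenario $\mathscr{H}$ is, by definition of a global distribution on the vertex set of $\mathscr{H}$, precisely an extended global distribution of the form \eqref{eqextendedglobal}, and (iii) the marginalization constraints match up hyperedge-by-hyperedge.

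First I would fix the bijection between hyperedges of $\mathscr{H}$ and the constraints in Def.~\ref{def:ext_context}. The hyperedges of $\mathscr{H}$ come in two kinds. For a hyperedge of the second kind, namely $\{x^i_1,\dots,x^i_{|C_i|}\}$ coming from a context $C_i$ of $\mathrm{H}$, the extended behavior $\mathscr{B}$ assigns, by construction, exactly the distribution $p_{C_i}$ that $\mathrm{B}$ assigns to $C_i$; and the noncontextuality condition \eqref{equsualnoncontex} for this hyperedge, written out for the global distribution on $V(\mathscr{H})$, is exactly the marginalization condition \eqref{eqmarginalext}. For a hyperedge of the first kind, $\{x^{i_1},\dots,x^{i_l}\}$ coming from a vertex $x\in X$ shared by contexts $C_{i_1},\dots,C_{i_l}$, the extended behavior assigns a maximal coupling of those copies, and the noncontextuality condition for this hyperedge says that the global distribution on $V(\mathscr{H})$ marginalizes to that coupling — which is exactly the requirement \eqref{eqcorrelated} together with maximality as in Def.~\ref{def:max_coupling}. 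So a global distribution witnessing noncontextuality of $\mathscr{B}$ over $\mathscr{H}$ is the same object as an extended global distribution witnessing a maximally noncontextual description of $\mathrm{B}$, and conversely.

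Having set up the dictionary, the proof is two implications. For the ``if'' direction: given a noncontextual extended behavior $\mathscr{B}$, its global distribution $p$ over $V(\mathscr{H})$ is an extended global distribution of the form \eqref{eqextendedglobal}; its marginals on the context-hyperedges recover $\mathrm{B}$ by \eqref{eqmarginalext}; and its marginals on the ``copy'' hyperedges are maximal couplings (because those marginals equal the distributions $\mathscr{B}$ assigns there, which were chosen to be maximal couplings). Hence $p$ satisfies every clause of Def.~\ref{def:ext_context}, so $\mathrm{B}$ has a maximally noncontextual description. For the ``only if'' direction: given an extended global distribution $p$ realizing a maximally noncontextual description of $\mathrm{B}$, define $\mathscr{B}$ by taking, on each context-hyperedge, the distribution of $\mathrm{B}$ (equivalently the marginal of $p$), and on each copy-hyperedge the marginal of $p$ over those copies; the maximality clause in Def.~\ref{def:ext_context} guarantees the latter is a maximal coupling, so $\mathscr{B}$ is a legitimate extended behavior, and $p$ itself witnesses its noncontextuality over $\mathscr{H}$ via \eqref{equsualnoncontex}.

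The only point that needs a word of care — and I expect it to be the main (mild) obstacle — is the compatibility between the two families of marginalization constraints. One must check that a global distribution on $V(\mathscr{H})$ that marginalizes correctly onto every context-hyperedge \emph{also} has the property that its marginal on a copy-hyperedge $\{x^{i_1},\dots,x^{i_l}\}$ is consistent with the single-variable marginals $p(a^{i_j}_{k_j}\mid x^{i_j}_{k_j})$ inherited from the contexts, so that the notion of ``maximal coupling'' is being applied to the right marginals. This is automatic: the single-variable marginal of $x^{i_j}$ obtained from the copy-hyperedge and the one obtained from the context-hyperedge $C_{i_j}$ are both marginals of the same global $p$, hence equal. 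Once this consistency is noted, every other step is a direct substitution of definitions, and no genuine combinatorial or analytic difficulty arises; existence of at least one maximal coupling, needed so that the extended behavior $\mathscr{B}$ in the ``only if'' direction is non-vacuously well defined, is guaranteed by Thm.~\ref{teomaxcoupling}.
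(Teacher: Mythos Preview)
Your proposal is correct and matches the paper's treatment: the paper does not give a separate proof of this theorem, stating it explicitly as a rewriting of Def.~\ref{def:ext_context} (``With these definitions, we can rewrite Dfn.~\ref{def:ext_context} as the following theorem''), and your argument is precisely the careful definitional unpacking that justifies this identification. The only addition you supply beyond the paper is the explicit consistency check on single-variable marginals, which is indeed automatic and worth noting.
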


Thus, the problem of deciding if a behavior $\mathrm{B}$ is noncontextual in the extended sense is equivalent to the problem of 
finding a noncontextual extended behavior $\mathscr{B}$ which is noncontextual in the extended scenario $\mathscr{H}$. This gives, as a corollary,  a complete characterization of extended contextuality for the $n$-cycle scenario.

\subsection{The $n$-cycle scenario}
\label{sub:ncycle}
In the $n$-cycle scenario, $X=\left\{0,\ldots , n-1\right\}$ and two 
measurements $i$ and $j$ are compatible iff $j=i+1 \mod n$. The 
corresponding hypergraph $\mathrm{H}$ is a cycle with $n$ vertices. The 
extended hypergraph $\mathscr{H}$ is a  $2n$-cycle, with vertices 
$i^i, i^{i+1}$ and egdes $\left\{i^i, 
(i+1)^i\right\}, \left\{i^i, i^{i-1} \right\}, \ i=0, \ldots , n-1$ (see 
Fig. \ref{fig:ncycle}).

\begin{figure}[h!]
\begin{subfigure}[b]{0.4\textwidth}
\definecolor{zzccff}{rgb}{0,1,1}
\definecolor{zzccffa}{rgb}{0.2,0.3,1}
\definecolor{zzccffb}{rgb}{0.6,0.8,1}
\definecolor{zzccffc}{rgb}{0,0.6,1}
\definecolor{zzccffd}{rgb}{0,0.5,0.7}
\begin{tikzpicture}[scale=0.7, line cap=round,line join=round,>=triangle 45,x=1cm,y=1cm]
\draw[line width=2pt] (1,0) -- (5,0) -- (6.23606797749979,3.804226065180613) -- (3,6.155367074350506) -- (-0.23606797749978936,3.8042260651806146) -- cycle;
\draw [fill=zzccffb] (1,0) circle (12pt);
\draw[color=black] (1,0) node {$3$};
\draw [fill=zzccffa] (5,0) circle (12pt);
\draw[color=black] (5,0) node {$2$};
\draw [fill=zzccffd] (6.23606797749979,3.804226065180613) circle (12pt);
\draw[color=black] (6.23606797749979,3.804226065180613) node {$1$};
\draw [fill=zzccff] (3,6.155367074350506) circle (12pt);
\draw[color=black] (3,6.155367074350506) node {$0$};
\draw [fill=zzccffc] (-0.23606797749978936,3.8042260651806146) circle (12pt);
\draw[color=black] (-0.23606797749978936,3.8042260651806146) node {$4$};
\end{tikzpicture}
\caption{}
\end{subfigure}
\begin{subfigure}[b]{0.4\textwidth}
\definecolor{zzccff}{rgb}{0,1,1}
\definecolor{zzccffa}{rgb}{0.2,0.3,1}
\definecolor{zzccffb}{rgb}{0.6,0.8,1}
\definecolor{zzccffc}{rgb}{0,0.6,1}
\definecolor{zzccffd}{rgb}{0,0.5,0.7}
\begin{tikzpicture}[scale=0.7, line cap=round,line join=round,>=triangle 45,x=1cm,y=1cm]
\draw[line width=2pt] (16,0) -- (18,0) -- (19.618033988749893,1.1755705045849458) -- (20.23606797749979,3.0776835371752522) -- (19.618033988749893,4.979796569765559) -- (18,6.155367074350505) -- (16,6.155367074350506) -- (14.381966011250107,4.9797965697655595) -- (13.76393202250021,3.0776835371752536) -- (14.381966011250105,1.1755705045849467) -- cycle;
\begin{scriptsize}
\draw [fill=zzccffb] (16,0) circle (12pt);
\draw[color=black] (16,0) node {$3^2$};
\draw [fill=zzccffa] (18,0) circle (12pt);
\draw[color=black] (18,0) node {$2^2$};
\draw [fill=zzccffa] (19.618033988749893,1.1755705045849458) circle (12pt);
\draw[color=black] (19.618033988749893,1.1755705045849458) node {$2^1$};
\draw [fill=zzccffd] (20.23606797749979,3.0776835371752522) circle (12pt);
\draw[color=black] (20.23606797749979,3.0776835371752522) node {$1^1$};
\draw [fill=zzccffd] (19.618033988749893,4.979796569765559) circle (12pt);
\draw[color=black] (19.618033988749893,4.979796569765559) node {$1^0$};
\draw [fill=zzccff] (18,6.155367074350505) circle (12pt);
\draw[color=black] (18,6.155367074350505) node {$0^0$};
\draw [fill=zzccff] (16,6.155367074350506) circle (12pt);
\draw[color=black] (16,6.155367074350506) node {$0^4$};
\draw [fill=zzccffc] (14.381966011250107,4.9797965697655595) circle (12pt);
\draw[color=black] (14.381966011250107,4.9797965697655595) node {$4^4$};
\draw [fill=zzccffc] (13.76393202250021,3.0776835371752536) circle (12pt);
\draw[color=black] (13.76393202250021,3.0776835371752536) node {$4^3$};
\draw [fill=zzccffb] (14.381966011250105,1.1755705045849467) circle (12pt);
\draw[color=black] (14.381966011250105,1.1755705045849467) node {$3^3$};
\end{scriptsize}
\end{tikzpicture}
\caption{}
\end{subfigure}
\caption{\footnotesize{(a) The compatibility hypergraph $\mathrm{H}$ of 
the $5$-cycle 
scenario, which consists of five measurements $0, \ldots , 4$ and five 
contexts $\left\{i,i+1\right\}$, $i=0, \ldots, 4$, the sum being  taken 
$\mod 5$. (b)  The extended compatibility hypergraph $\mathscr{H}$ 
of the $5$-cycle scenario, which is a $10$-cycle with vertices $i^{i-1}, 
i^{i}$ and egdes $\left\{i^i, (i+1)^i\right\}, \left\{i^i, i^{i-1} 
\right\}, \ i=0, \ldots , 4$. }}
\label{fig:ncycle}
\end{figure}
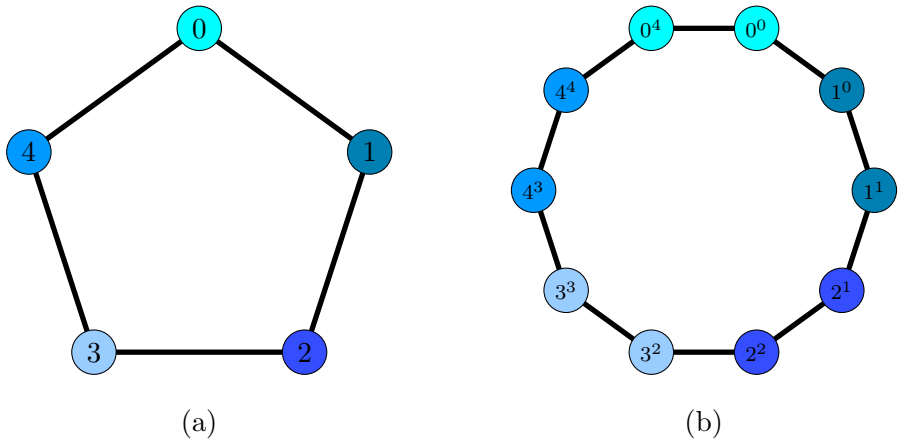

\begin{cor}
A behavior $B$ for the $n$-cycle scenario is noncontextual in the extended sense iff
\be 
s\left(\left\langle {i}^i(i+1)^i\right\rangle, 1-\left\langle 
i^{i}\right\rangle - \left\langle i^{i-1}\right\rangle\right)_{i=0, \ldots 
, n-1} \leq 2n-2,\label{eq:ncycle_ineq}
\ee
where
\be s\left(z_1, \ldots, z_k\right) = \max_{\gamma_i = \pm 1, \prod_i 
\gamma_i=-1 } \sum_{i=1}^k \gamma_i z_i. \label{eq:ncycle}\ee
\end{cor}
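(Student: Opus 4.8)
The strategy is to feed Theorem~\ref{thm:extended} into the known complete characterisation of ordinary noncontextuality for cycle scenarios. By Theorem~\ref{thm:extended}, $B$ is noncontextual in the extended sense if and only if some extended behavior $\mathscr{B}$ for $B$ is noncontextual with respect to the extended compatibility hypergraph $\mathscr{H}$, which for the $n$-cycle is the $2n$-cycle of Fig.~\ref{fig:ncycle}, with context edges $\{i^i,(i+1)^i\}$ and coupling edges $\{i^i,i^{i-1}\}$. In the $n$-cycle scenario every measurement lies in exactly two contexts, so every coupling is a coupling of two dichotomic variables; by the remark after Theorem~\ref{teomaxcoupling} such a coupling is unique, so $\mathscr{B}$ is uniquely determined by $B$. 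On a context edge $\mathscr{B}$ carries the distribution $B$ assigns to $C_i$, so that correlator is $\langle i^i(i+1)^i\rangle$; on a coupling edge Theorem~\ref{teomaxcoupling} forces $\langle i^i i^{i-1}\rangle = 1-\bigl|\langle i^i\rangle-\langle i^{i-1}\rangle\bigr|$, the maximal-coupling value for two $\pm1$ variables with the prescribed one-variable marginals. Moreover $\mathscr{B}$ is automatically non-disturbing and its one- and two-variable data are automatically mutually consistent, each hyperedge of $\mathscr{H}$ carrying a genuine joint distribution.

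Next I would apply the complete characterisation of ordinary noncontextuality for cycle scenarios: by Proposition~\ref{thm:nec_cut} together with its converse for cycles --- the cut polytope $\mathrm{CUT}(\nabla C_k)$ of a cycle with an apex is defined by its cycle inequalities, since the wheel $\nabla C_k$ is planar and hence $K_5$-minor free --- a behavior on $C_k$ with consistent one- and two-variable data is noncontextual if and only if it satisfies $s(\text{the }k\text{ correlators around }C_k)\leq k-2$, with $s$ as in~\eqref{eq:ncycle}. Applied to $\mathscr{B}$ on $C_{2n}$ this gives: $\mathscr{B}$ is noncontextual iff $s\bigl(\langle i^i(i+1)^i\rangle,\ \langle i^i i^{i-1}\rangle\bigr)_{i=0,\ldots,n-1}\leq 2n-2$, the $2n$ arguments taken in the cyclic order of the edges of $\mathscr{H}$; the apex triangle inequalities of $\mathrm{CUT}(\nabla C_{2n})$ hold because $\mathscr{B}$ carries genuine pairwise distributions, and the remaining suspension inequalities contribute nothing because for each $i$ the maximal coupling makes the triangle on $\{u,i^i,i^{i-1}\}$ tight, slaving the one-variable data on the coupling edges.

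Finally I would substitute $\langle i^i i^{i-1}\rangle = 1-\bigl|\langle i^i\rangle-\langle i^{i-1}\rangle\bigr|$ into this inequality and simplify, using that the maximisation over sign patterns with $\prod_i\gamma_i=-1$ in~\eqref{eq:ncycle} already sweeps both signs of every argument, so the absolute values are absorbed and the criterion collapses to~\eqref{eq:ncycle_ineq}; running the substitution backwards gives the converse. The step I expect to be the real work --- and the main obstacle --- is exactly this last reduction together with the claim that the suspension inequalities of $\nabla C_{2n}$ are redundant once the coupling edges sit at their extremal values: one must track carefully how the $\pm$-optimisation in $s$, under the single global parity constraint $\prod_i\gamma_i=-1$ on all $2n$ arguments, interacts with the replacement of each coupling correlator by its maximal value, and confirm that no apex cycle inequality of the wheel survives as an independent constraint. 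Everything else is a direct consequence of Theorems~\ref{thm:extended} and~\ref{teomaxcoupling}.
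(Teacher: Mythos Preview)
Your strategy is exactly the paper's: invoke Theorem~\ref{thm:extended}, use uniqueness of the extended behavior (each vertex lies in two contexts, two outcomes), plug in the maximal-coupling correlator, and appeal to the complete characterisation of noncontextuality on the $2n$-cycle. The paper does this more tersely by citation --- it quotes Ref.~\cite{KDL15} for the coupling formula and Ref.~\cite{AQBTC13} for the fact that \eqref{eq:ncycle_ineq} decides membership in $\mathrm{CUT}(\nabla C_{2n})$ --- whereas you unpack the cut-polytope step via planarity/$K_5$-minor-freeness and correctly observe that the apex-triangle facets are automatic once each hyperedge of $\mathscr{H}$ carries a genuine two-variable distribution. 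That extra analysis is sound and buys self-containment; the paper simply outsources it.

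There is, however, a genuine issue in your final ``absorption'' step. Your coupling formula $\langle i^i i^{i-1}\rangle = 1-\bigl|\langle i^i\rangle-\langle i^{i-1}\rangle\bigr|$ is the correct one for two $\pm1$ variables; the expression $1-\langle i^i\rangle-\langle i^{i-1}\rangle$ appearing in the corollary and in the paper's proof is a slip (it should be the absolute value of the \emph{difference}, not the sum). Your attempt to pass from one to the other cannot succeed: first, flipping the sign of a single argument of $s$ changes the parity constraint $\prod_i\gamma_i=-1$, so $s$ does \emph{not} ``sweep both signs of every argument'' independently; second, and more basically, $\bigl|\langle i^i\rangle-\langle i^{i-1}\rangle\bigr|$ and $\langle i^i\rangle+\langle i^{i-1}\rangle$ are simply different quantities, not related by a sign. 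The correct endpoint of the argument is the inequality with $1-\bigl|\langle i^i\rangle-\langle i^{i-1}\rangle\bigr|$ in the coupling slots; once you have that, there is nothing further to simplify, and the ``main obstacle'' you anticipate disappears.
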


\begin{proof}
 In this case the extended behavior $\mathscr{H}$ is unique and, as shown in 
Ref. \cite{KDL15}, for every context $\left\{i^{i-1}, i^{i}\right\}$ corresponding to $i \in X$  we have that maximal couplings satisfy:
\be  \left\langle i^{i-1}i^{i} \right\rangle=1-\left\langle i^{i-1}\right\rangle - \left\langle i^i\right\rangle.\ee
Hence, 
\be P_{\mathscr{B}}= \left(\left\langle i^i(i+1)^i\right\rangle, 1-\left\langle i^{i-1}\right\rangle - \left\langle i^i\right\rangle\right)_{i=0, \ldots , n-1}.\ee
As shown in Ref. \cite{AQBTC13}, 
Eq.~\eqref{eq:ncycle_ineq} is a necessary and sufficient condition 
for membership in $\mathrm{CUT}\left(\nabla\mathrm{C}_{2n}\right).$
Thm. \ref{thm:extended} implies the result.
\end{proof}

\section{From valid inequalities for $\nabla \mathrm{G}$ to valid inequalities for $\nabla \mathscr{G}$}
\label{sec:valid_ineq}

The problem of deciding if a given behavior is noncontextual in the 
extended sense is, in general, extremely difficult (see, for 
instance Ref.~\cite{LexSch03}). To completely solve it we need, first, to 
characterize the set of all extended behaviors $\mathscr{B}$ and, second, 
characterize the set of noncontextual behaviors in the extended scenario, 
which is, as we mentioned before, a complex task. Although we cannot  
solve the problem completely, except for very special situations as in 
Sub.~\ref{sub:ncycle}, we are able to find necessary conditions for the 
existence of a noncontextual extended behavior in \emph{any} scenario using 
the cut polytope. The first step in this direction consists in 
defining a useful and important graph, associated with a given scenario, 
which is going to be recurrently utilized from now on:

\begin{dfn}
 Given a scenario $\Gamma=(X,\mc{C},O)$, let $\mathscr{H}$ be the 
extended hypergraph associated with it. We call the 2-section of 
$\mathscr{H}$ the \emph{extended compatibility graph} associated with 
$\Gamma$, and denote it by $\mathscr{G}$.
 \label{def:ext_comp_graph}
\end{dfn}

Now, as another corollary of Thm. \ref{thm:extended}, we 
have:

\begin{cor}
If a behavior $\mathrm{B}$ is noncontextual in the extended sense, then 
  there is an extended behavior  $\mathscr{B}$
for  $\mathrm{B} $ such that $\mathrm{P}_{\mathscr{B}}$ 
  belongs to the cut polytope of $\nabla \mathscr{G}$ , where $\mathscr{G}$ 
is the  extended compatibility 
graph of the scenario.
\label{cor:ext_comp_graph}
\end{cor}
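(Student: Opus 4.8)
The plan is to combine Theorem~\ref{thm:extended} with Proposition~\ref{thm:nec_cut} in a nearly mechanical way. Suppose $\mathrm{B}$ is noncontextual in the extended sense. By Theorem~\ref{thm:extended}, there exists an extended behavior $\mathscr{B}$ for $\mathrm{B}$ which is noncontextual with respect to the compatibility scenario defined by the extended compatibility hypergraph $\mathscr{H}$. The first step is to observe that $\mathscr{B}$ is, in particular, a genuine behavior for the scenario $\Gamma'=(X',\mathscr{C},\{-1,1\})$ whose compatibility hypergraph is $\mathscr{H}$ and whose compatibility graph is, by Definition~\ref{def:ext_comp_graph}, exactly $\mathscr{G}$ (the $2$-section of $\mathscr{H}$). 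One should check here that the two defining axioms of a compatibility scenario survive the extended construction: every vertex $x^{i}$ of $\mathscr{H}$ lies in some hyperedge (namely the one coming from $C_i$, or the one grouping the copies of $x$), so the covering condition holds, and the hyperedge set of $\mathscr{H}$ is an antichain because distinct hyperedges of $\mathscr{H}$ either come from distinct contexts of $\Gamma$ (which form an antichain) or are of the ``vertex-copy'' type with a vertex set disjoint in superscripts from the context-type hyperedges — so no containment can occur. This is the routine bookkeeping and not the heart of the matter.

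Second, I would simply apply Proposition~\ref{thm:nec_cut} to the scenario $\Gamma'$ and the behavior $\mathscr{B}$: since $\mathscr{B}$ is noncontextual with respect to the compatibility graph $\mathscr{G}$, the associated vector $\mathrm{P}_{\mathscr{B}} \in \mathds{R}^{X'} \times \mathds{R}^{E(\mathscr{G})}$ — whose components are the single-variable expectations $\left\langle x^{i}\right\rangle$ and the pair expectations $\left\langle x^{i} y^{j}\right\rangle$ over edges of $\mathscr{G}$, exactly as in Eqs.~\eqref{eq:p1}--\eqref{eq:p2} — belongs to $\mathrm{CUT}\left(\nabla \mathscr{G}\right)$. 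This is precisely the conclusion we want, so no further argument is needed beyond instantiating the already-proved Proposition. I would phrase the proof in two or three sentences along these lines.

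I do not expect any real obstacle; the only thing to be careful about is making explicit that the definition of $\mathrm{P}_{\mathscr{B}}$ used in the statement is the one induced by the \emph{extended} graph $\mathscr{G}$ (otherwise the pairing between components of the vector and edges of $\nabla\mathscr{G}$ is ill-defined), and that the hypotheses of Proposition~\ref{thm:nec_cut} — that $\mathscr{G}$ genuinely is the compatibility graph of a bona fide scenario with outcome set $\{-1,1\}$ — are met. Since the outcome set is inherited unchanged from $\Gamma$ (we are in the two-outcome setting throughout this section) and the scenario axioms were verified in the first step, the corollary follows immediately. If one wishes to be slightly more self-contained, one can also remark that because maximal couplings need not be unique, the extended behavior $\mathscr{B}$ — and hence the vector $\mathrm{P}_{\mathscr{B}}$ — is in general not unique, which is why the statement asserts only the \emph{existence} of such a $\mathscr{B}$, mirroring the existential form of Theorem~\ref{thm:extended}.
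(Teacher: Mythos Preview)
Your proposal is correct and matches the paper's approach: the paper simply states this result as ``another corollary of Thm.~\ref{thm:extended}'' with no further proof, and your argument---apply Thm.~\ref{thm:extended} to obtain a noncontextual extended behavior $\mathscr{B}$ on the scenario with compatibility graph $\mathscr{G}$, then invoke Prop.~\ref{thm:nec_cut}---is exactly the intended one-line derivation. The additional bookkeeping you include (verifying the scenario axioms for $\mathscr{H}$, noting non-uniqueness of $\mathscr{B}$) is not in the paper but is harmless and arguably clarifying.
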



\subsection{Triangular elimination}

From valid inequalities for $\mathrm{CUT}\left(\nabla \mathrm{G}\right)$ 
it is possible to derive valid inequalities for $\mathrm{CUT}\left(\nabla 
\mathscr{G}\right)$ using the operation of \emph{triangular elimination}.

\begin{dfn}[Triangular Elimination for Graphs]
Let $\mathsf{G}=\left(\mathsf{V},\mathsf{E}\right)$ be a graph, $t$ an 
integer, and let $\mathsf{F}=\left\{u_iv_i \left| i =1, \ldots , 
t\right.\right\}$ be any subset of $\mathsf{E}$.
The graph $\mathsf{G}'=\left(\mathsf{V}',\mathsf{E}'\right)$ is a 
\emph{triangular elimination} of $\mathsf{G}$ with respect to $\mathsf{F}$ 
if $\mathsf{V}'=\mathsf{V} \cup \left\{w_1, w_2, \ldots, w_t\right\}$,
where $w_1, w_2, \ldots, w_t$ are new vertices not in $\mathsf{V}$, and 
$\mathsf{E}' \supseteq \left\{w_iu_i, w_iv_i \left| i = 1, \ldots ,  
t\right.\right\}$ and $\mathsf{E}'\cap \mathsf{E}= \mathsf{E}\setminus 
\mathsf{F}.$
 \label{def:triangular_elim_for_graphs}
\end{dfn} 

The graph $\mathsf{G}'$ is obtained from $\mathsf{G}$ by removing each 
edge $u_iv_i$ in $\mathsf{F}$ from $\mathsf{E}$ and replacing it with a 
new vertex $w_i$, which is connected to $u_i$ and $v_i$. Other edges 
connecting $w_i$ with other vertices other then $u_i$ and $v_i$ may or may 
not be added.
A simple example is shown in Fig. \ref{fig:te}.

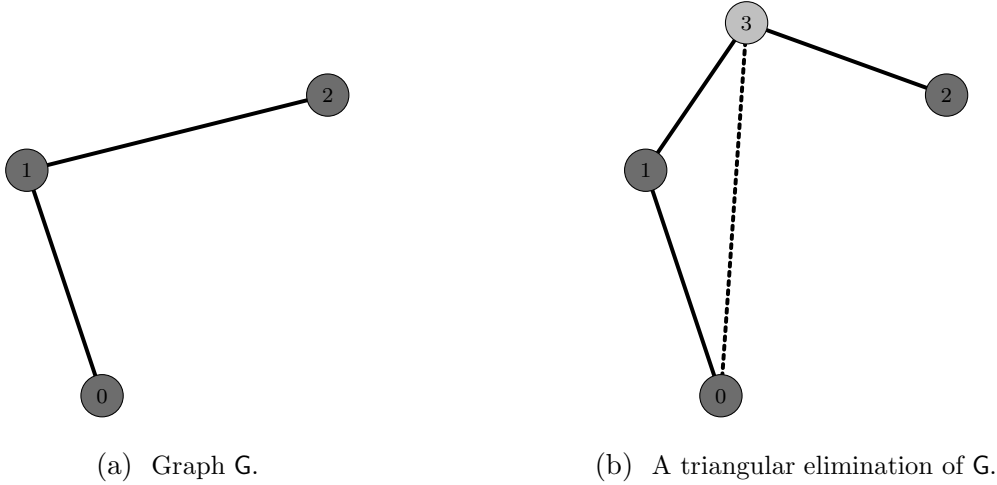
\begin{figure}[h!]
\begin{subfigure}[b]{.5\columnwidth}
  \centering
  \definecolor{ffwwzz}{rgb}{0.7529411764705882,0.7529411764705882,0.7529411764705882}
   \definecolor{wwzzff}{rgb}{0.4264705882352941,0.4264705882352941,0.4264705882352941}
  \begin{tikzpicture}[line cap=round,line join=round,>=triangle 45,x=1cm,y=1cm]
  \draw [line width=1.5pt] (4,-2)-- (3,1);
  \draw [line width=1.5pt] (3,1)-- (7,2);
  \begin{scriptsize}
  \draw [fill=wwzzff] (4,-2) circle (8pt);
  \draw[color=black] (4,-2) node {$0$};
  \draw [fill=wwzzff] (3,1) circle (8pt);
  \draw[color=black] (3,1) node {$1$};
  \draw [fill=wwzzff] (7,2) circle (8pt);
  \draw[color=black] (7,2) node {$2$};
  \end{scriptsize}
  \end{tikzpicture}
  \caption{\footnotesize{ Graph $\mathsf{G}$.}}
\end{subfigure}%
\begin{subfigure}[b]{.5\columnwidth}
  \centering
  \definecolor{ffwwzz}{rgb}{0.7529411764705882,0.7529411764705882,0.7529411764705882}
  \definecolor{wwzzff}{rgb}{0.4264705882352941,0.4264705882352941,0.4264705882352941}
  \begin{tikzpicture}[line cap=round,line join=round,>=triangle 45,x=1cm,y=1cm]
  \draw [line width=1.5pt] (15,-2)-- (14,1);
  \draw [line width=1.5pt] (14,1)-- (15.341264066064337,2.961439325753775);
  \draw [line width=1.5pt] (15.341264066064337,2.961439325753775)-- (18,2);
  \draw [line width=1.5pt, dotted] (15.341264066064337,2.961439325753775)-- (15,-2);
  \begin{scriptsize}
  \draw [fill=wwzzff] (15,-2) circle (8pt);
  \draw[color=black] (15,-2) node {$0$};
  \draw [fill=wwzzff] (14,1) circle (8pt);
  \draw[color=black] (14,1) node {$1$};
  \draw [fill=wwzzff] (18,2) circle (8pt);
  \draw[color=black] (18,2) node {$2$};
  \draw [fill=ffwwzz] (15.341264066064337,2.961439325753775) circle (8pt);
  \draw[color=black] (15.341264066064337,2.961439325753775) node {$3$};
  \end{scriptsize}
  \end{tikzpicture}
  \caption{\footnotesize{ A triangular elimination of $\mathsf{G}$.}}
\end{subfigure}
\caption{\footnotesize{A triangular elimination of graph $\mathsf{G}$ with 
respect to the edge $\left\{1,2\right\}$. This edge is removed, and a new 
vertex, labelled vertex $3$, is added.This new vertex is connected to both 
$1$ and $2$. The dashed edge connecting $0$ and $3$ may or may not be 
added. In both cases one ends with a valid triangular elimination 
for the graph $\mathsf{G}$}}
\label{fig:te}
\end{figure}

\begin{dfn}[Triangular elimination for inequalities]
Let $\mathsf{G}'=\left(\mathsf{V}', \mathsf{E}'\right)$ be a triangular 
elimination of $\mathsf{G}=\left(\mathsf{V}, \mathsf{E}\right)$ with 
respect to $\mathsf{F}=\left\{u_iv_i \left| i =1, \ldots , 
t\right.\right\}$, and suppose $A \in \mathbb{R}^{\mathsf{E}}$, $A' \in 
\mathbb{R}^{\mathsf{E}'}$, $b, b' \in \mathbb{R}$. The inequality $A' \cdot 
P' \leq b'$ is a \emph{triangular elimination} of inequality  $A \cdot P 
\leq b$ if it can be obtained from this last inequality  by summing  
positive multiples of inequalities 
\be
 -P_{u_iw_i} - P_{v_iw_i}- P_{u_iv_i} \leq 1,
 \label{eq:inq_te_1}
\ee
\be
 P_{u_iw_i} + P_{v_iw_i}- P_{u_iv_i} \leq 1
 \label{eq:inq_te_2}
\ee
or the other two inequalities obtained from  \eqref{eq:inq_te_2} by 
permuting $u_i, v_i$ and $w_i$.

\end{dfn}

\begin{prop}
\label{prop:tri}
 Let $\mathsf{G}'=\left(\mathsf{V}', \mathsf{E}'\right)$ be a triangular 
elimination of $\mathsf{G}=\left(\mathsf{V}, \mathsf{E}\right)$. 
 Let   $A \cdot P \leq b$ be a valid inequality for 
$\mathrm{CUT}\left(\mathsf{G}\right)$ and $A' \cdot P' \leq b'$ be a 
triangular elimination of
 $A \cdot P \leq b$. Then $A' \cdot P' \leq b'$ is valid for 
$\mathrm{CUT}\left(\mathsf{G}'\right)$.
\end{prop}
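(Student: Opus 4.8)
The plan is to establish validity of $A' \cdot P' \leq b'$ for $\mathrm{CUT}(\mathsf{G}')$ by checking it on the generating vertices of the cut polytope, i.e.\ on every cut vector $\delta_{\mathsf{G}'}(S')$ with $S' \subseteq \mathsf{V}'$. Since $\mathrm{CUT}(\mathsf{G}')$ is by definition the convex hull of these cut vectors and the inequality is linear, validity on all cut vectors implies validity on the whole polytope. So fix an arbitrary $S' \subseteq \mathsf{V}'$ and let $P' = \delta_{\mathsf{G}'}(S')$; the goal is $A' \cdot P' \leq b'$.

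First I would record the two structural facts we are allowed to use. By the definition of triangular elimination for inequalities, $A' \cdot P' \leq b'$ is obtained from $A \cdot P \leq b$ by adding nonnegative multiples $\lambda_i \geq 0$ of the ``triangle inequalities'' \eqref{eq:inq_te_1}--\eqref{eq:inq_te_2} (and their permutations) associated with the eliminated edges $u_i v_i \in \mathsf{F}$ and the new vertices $w_i$. Crucially these triangle inequalities are exactly the triangle (facet) inequalities of $\mathrm{CUT}(K_3)$ on $\{u_i,v_i,w_i\}$, so each is satisfied by \emph{every} cut vector of any graph containing the triangle, in particular by $P'=\delta_{\mathsf{G}'}(S')$ — this is the standard fact that for a cut, an odd number of the three edges of a triangle is cut. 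Second, restricting the cut vector $\delta_{\mathsf{G}'}(S')$ to the edge set $\mathsf{E} \setminus \mathsf{F} = \mathsf{E}' \cap \mathsf{E}$ gives precisely $\delta_{\mathsf{G}}(S)$ restricted to those same edges, where $S = S' \cap \mathsf{V}$, because the cut status of an edge $xy$ with $x,y \in \mathsf{V}$ depends only on $|S' \cap \{x,y\}| = |S \cap \{x,y\}|$.

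The heart of the argument is then a short computation: evaluate $A' \cdot P'$, group the terms according to how the summation of inequalities produced $A'$, and use the relation between $\delta_{\mathsf{G}'}(S')$ and $\delta_{\mathsf{G}}(S)$ to re-express the ``old'' part as (a piece of) $A \cdot \delta_{\mathsf{G}}(S)$. Writing the derivation as $A' \cdot P' \leq A \cdot P|_{\text{restricted}} + \sum_i \lambda_i \cdot (\text{LHS of triangle ineq at } w_i)$ and bounding each triangle term by its right-hand side $\lambda_i \cdot 1$, we get $A' \cdot P' \leq A \cdot \delta_{\mathsf{G}}(S) + \sum_i \lambda_i$. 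Since $A \cdot P \leq b$ is valid for $\mathrm{CUT}(\mathsf{G})$ and $\delta_{\mathsf{G}}(S)$ is a cut vector of $\mathsf{G}$, the first term is $\leq b$; and $b + \sum_i \lambda_i$ is exactly $b'$ by the way $b'$ was produced from $b$ in the summation of inequalities. Hence $A' \cdot P' \leq b'$.

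The one delicate point — and the step I expect to require the most care — is bookkeeping about which edges of $\mathsf{G}'$ carry coefficients in $A'$ and why the ``leftover'' (extra) edges incident to the $w_i$ (the optional edges like the dotted one in Fig.~\ref{fig:te}) do not spoil the argument: either they receive coefficient $0$ in $A'$ (since they were not touched by the summation), or, if the definition allows them to appear, one must observe that the summation of triangle inequalities only ever assigns coefficients to the edges $u_i w_i, v_i w_i, u_i v_i$, so any further edge of $\mathsf{E}'$ simply has coefficient $0$ in $A'$ and contributes nothing to $A' \cdot P'$. Making this precise amounts to carefully parsing Definitions of triangular elimination for graphs and for inequalities in tandem; once the coefficient structure of $A'$ is pinned down, the inequality chain above is immediate.
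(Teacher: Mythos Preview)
Your argument is correct and is the standard direct verification: evaluate on an arbitrary cut vector of $\mathsf{G}'$, extend it (via the same vertex set $S'$) to assign values to the eliminated edges $u_iv_i$, use the formal identity $A'\cdot P' = A\cdot P + \sum_i\lambda_i T_i$ coming from the summation, bound each triangle term $T_i$ by $1$, and bound $A\cdot P$ by $b$ using that the restriction to $\mathsf{E}$ is $\delta_{\mathsf{G}}(S'\cap\mathsf{V})$. The bookkeeping you flag at the end is exactly right: the summation only touches edges in $(\mathsf{E}\setminus\mathsf{F})\cup\{u_iw_i,v_iw_i\}$, so any optional extra edges of $\mathsf{E}'$ carry coefficient $0$ in $A'$ and are irrelevant; and the fact that $A'\in\mathbb{R}^{\mathsf{E}'}$ forces the $P_{u_iv_i}$ coefficients to cancel in the sum, which is what makes the formal identity work for \emph{any} choice of value assigned to $P_{u_iv_i}$.

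Two small remarks. First, the paper itself does not give a proof of this proposition: it is stated and attributed to Ref.~\cite{AIT06} (see the Remark immediately following Prop.~\ref{prop:tri}), so there is no ``paper's proof'' to compare against; what you wrote is essentially the argument one finds in that literature. Second, a slip in your parenthetical: for a cut, an \emph{even} number (zero or two) of the three edges of a triangle is cut, not an odd number --- equivalently, in the $\pm 1$ correlation convention the product $P_{u_iw_i}P_{v_iw_i}P_{u_iv_i}=1$. This is precisely the content of the four triangle inequalities \eqref{eq:inq_te_1}--\eqref{eq:inq_te_2} in the $\pm 1$ convention the paper is implicitly using when it works with expectation values $\langle xy\rangle$; your conclusion that each $T_i\le 1$ on cut vectors is unaffected.
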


\textbf{Remark:} We should remark that for our own purposes the 
content of Prop. \ref{prop:tri} above is enough (see 
Corollary \ref{cor:necessary_condt}). Nonetheless, in 
Ref.~\cite{AIT06} the authors have shown that the other implication in 
Prop. \ref{prop:tri} is also true. It means that if $A^{\prime} \cdot 
P^{\prime} \leq b^{\prime}$ is a valid inequality for 
$\mathrm{CUT}\left(\mathsf{G}^{\prime}\right)$, then $A \cdot P \leq b$ is 
valid for $\mathrm{CUT}\left(\mathsf{G}\right)$, provided that 
$\mathsf{G}^{\prime}$ and $A^{\prime} \cdot P^{\prime} \leq b^{\prime}$ be 
triangular 
eliminations of $\mathsf{G}$ and $A \cdot P \leq b$ respectively.





\subsection{Triangular elimination and extexted contextuality}

\begin{thm}
 Let $\Gamma$ be a 
compatibility scenario. If the compatibility hypergraph of $\Gamma$ 
coincides with its 2-section, \emph{i.e.} if $\mathrm{H} = \mathrm{G}$, 
then the extended compatibility graph $\mathscr{G}$ is a triangular 
elimination of the compatibility graph $\mathrm{G}$.
 Moreover, $\nabla \mathscr{G}$ is a triangular elimination of  $\nabla \mathrm{G}$.
 \label{thm:extd_graph_elimintion}
\end{thm}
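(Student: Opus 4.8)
The plan is to describe $\mathscr{G}$ and $\mathrm{G}$ explicitly under the hypothesis $\mathrm{H}=\mathrm{G}$ and then exhibit $\mathscr{G}$ as the output of the graph operation in Def.~\ref{def:triangular_elim_for_graphs} applied to a carefully chosen edge subset $\mathsf{F}$ of $\mathrm{G}$. When $\mathrm{H}=\mathrm{G}$, every context of $\Gamma$ is an edge of $\mathrm{G}$, so each $x\in X$ has degree equal to the number $l_x$ of contexts containing it. I would first handle the generic case in which every vertex of $\mathrm{G}$ belongs to exactly two contexts (equivalently, $\mathrm{G}$ has maximum degree two and is a disjoint union of cycles and paths), since that is the case actually needed in the applications and it makes the bookkeeping transparent; I would then remark how the argument extends. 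In the degree-two case, a vertex $x$ with incident edges $\{x,y\}=C_i$ and $\{x,z\}=C_j$ gets split in $\mathscr{H}$ into $x^i,x^j$ forming the new hyperedge (hence the new edge) $\{x^i,x^j\}$, while the contexts become $\{x^i,y^{(\cdot)}\}$ and $\{x^j,z^{(\cdot)}\}$. So $\mathscr{G}$ is obtained from $\mathrm{G}$ by subdividing: each cycle $C_n$ of $\mathrm{G}$ becomes $C_{2n}$ and each path becomes a longer path, exactly as in the $n$-cycle example of Fig.~\ref{fig:ncycle}.

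Next I would recast this subdivision as a triangular elimination. The natural choice is to take $\mathsf{F}$ to be the set of \emph{new} edges $\{x^i,x^j\}$ of $\mathscr{G}$ — but triangular elimination acts on $\mathrm{G}$, so instead I would run it the other way: choose $\mathsf{F}\subseteq\mathrm{E}$ to be a maximal matching hitting every vertex of $\mathrm{G}$ that must be split (in a union of even cycles, a perfect matching; in general one checks that each vertex of degree two lies on at least one chosen edge), and for each chosen edge $u_iv_i\in\mathsf{F}$ introduce the new vertex $w_i$ adjacent to $u_i$ and $v_i$ and to nothing else. One then verifies that the resulting graph $\mathsf{G}'$ has exactly the vertex set $\{x^i : x\in X, C_i\ni x\}$ and the edge set described above, i.e. $\mathsf{G}'\cong\mathscr{G}$ as labelled graphs, by matching each original vertex $x$ to the endpoint $x^i$ of the context not in $\mathsf{F}$ and each new vertex $w_i$ to the split copy $x^j$. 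The conditions $\mathsf{E}'\supseteq\{w_iu_i,w_iv_i\}$ and $\mathsf{E}'\cap\mathrm{E}=\mathrm{E}\setminus\mathsf{F}$ from Def.~\ref{def:triangular_elim_for_graphs} are then checked directly: removing $u_iv_i$ and inserting the length-two path through $w_i$ is precisely edge subdivision, and no extra $w_i$–edges are added.

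For the suspension statement, I would observe that forming $\nabla$ commutes with this construction up to one more round of triangular elimination: $\nabla\mathscr{G}$ has the apex $u$ joined to every $x^i$, whereas triangularly eliminating $\nabla\mathrm{G}$ with the same $\mathsf{F}$ only produces $w_i$ joined to $u_i,v_i$. The fix is to enlarge $\mathsf{F}$: for each split vertex $x$ with contexts $C_i,C_j$ also include the spoke $\{u,x\}\in\nabla\mathrm{E}$, so that its triangular elimination creates a further vertex that, together with the relabelling, supplies the missing edge from the apex to the split copy $x^j$; alternatively, use the freedom in Def.~\ref{def:triangular_elim_for_graphs} to add the optional extra edges $w_iu$ at the time $w_i$ is created. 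Either way one exhibits $\nabla\mathscr{G}$ as a triangular elimination of $\nabla\mathrm{G}$. The main obstacle I anticipate is purely combinatorial accounting in the non-generic case — vertices lying in three or more contexts, and the need to iterate splitting — where a single triangular elimination no longer suffices and one must argue that a finite sequence of them does, and that the optional-edge clause of Def.~\ref{def:triangular_elim_for_graphs} can be used at each stage to keep the apex adjacent to every copy; making the labelling bijection precise there, rather than the idea, is the delicate part.
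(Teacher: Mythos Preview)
There is a genuine gap in your degree-two ``base case''. With $\mathsf{F}$ a perfect matching of $C_n$ you add only $n/2$ new vertices, so the triangular elimination has $3n/2$ vertices, whereas $\mathscr{G}$ is $C_{2n}$ with $2n$ vertices. Concretely, for $C_4$ your construction yields a $6$-cycle, not the required $8$-cycle; your proposed bijection cannot work because each new vertex $w_i$ is adjacent to \emph{both} endpoints $u_i,v_i$, so it cannot simultaneously play the role of ``the second copy of $u_i$'' and ``the second copy of $v_i$''. If you want a single-step elimination for a cycle, the correct choice is $\mathsf{F}=\mathrm{E}$ (subdivide every edge), and then the original vertex $i$ is relabelled $i^i$ while the new vertex on edge $\{i,i+1\}$ becomes $(i+1)^i$.

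More importantly, the degree-two case is not the generic one for the paper's purposes: the $(3,3,2,2)$ and $(n,n,2,2)$ applications have $\mathrm{G}=K_{3,3}$ and $K_{n,n}$, where every vertex has degree $\geq 3$, and a single elimination cannot reach $\mathscr{G}$ on a vertex count alone (e.g.\ $|V(\mathscr{G})|=18$ for $K_{3,3}$, but $|V(K_{3,3})|+|E(K_{3,3})|=15$). The paper therefore does exactly what you relegate to ``the delicate part'': it processes the vertices of $X$ one at a time in a \emph{sequence} of triangular eliminations. At the step for $x$, one relabels $x$ as $x^1$, takes $\mathsf{F}=\{x^1y_2,\ldots,x^1y_n\}$ (all incident edges except one), introduces $x^2,\ldots,x^n$ as the new vertices $w_i$ (so $x^i$ is adjacent to $x^1$ and $y_i$ by the mandatory clause), and then uses the optional-edge freedom of Def.~\ref{def:triangular_elim_for_graphs} to complete the clique on $\{x^1,\ldots,x^n\}$. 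Iterating over all $x\in X$ yields $\mathscr{G}$, and the same argument with the apex $u$ treated as an ordinary vertex handles $\nabla\mathscr{G}$. Your final paragraph correctly anticipates this mechanism; the fix is to make it the main argument rather than an afterthought, and to drop the matching construction.
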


\begin{proof}
 We start with $\mathrm{G}=(X,E(\mathrm{G}))$ and $x_1 \in 
\mathrm{X}$. Let $E_{x_1}=\left\{ x_1y_1, x_1y_2, \ldots , x_1y_n 
\right\} \subset E(\mathrm{G})$ be the 
 set of all edges incident to $x_1$ and let 
$\mathrm{G}_1 = (V(\mathrm{G}_1),E(\mathrm{G}_1))$ be the graph 
obtained from $\mathrm{G}$ in the following way:
 remove from $E(\mathrm{G})$ all edges in $E_{x_1}$,  
from $X$ remove the vertex $x_1$, add vertices $x_1^1, x_1^2, \ldots 
, x_1^n$, edges 
 $\left\{ x_1^1y_1, x_1^2y_2, \ldots , x_1^ny_n \right\}$ and all edges $x_1^kx_1^l$ with $1 \leq k < l \leq n$. 
 The graph $\mathrm{G}_1$ is a triangular elimination of $\mathrm{G}$. Take 
now $x_2 \in V(\mathrm{G}_1) \setminus \left\{x_1^1, x_1^2, \ldots , 
x_1^n\right\}$
 and repeat the same procedure, obtaining graph $\mathrm{G}_2$. Proceeding 
analogously for every vertex in $\mathrm{X}$, and since it is 
finite, we get 
 $\mathscr{G}$ in the last step. Similar argument can be used with $\nabla \mathrm{G}$.
\end{proof}

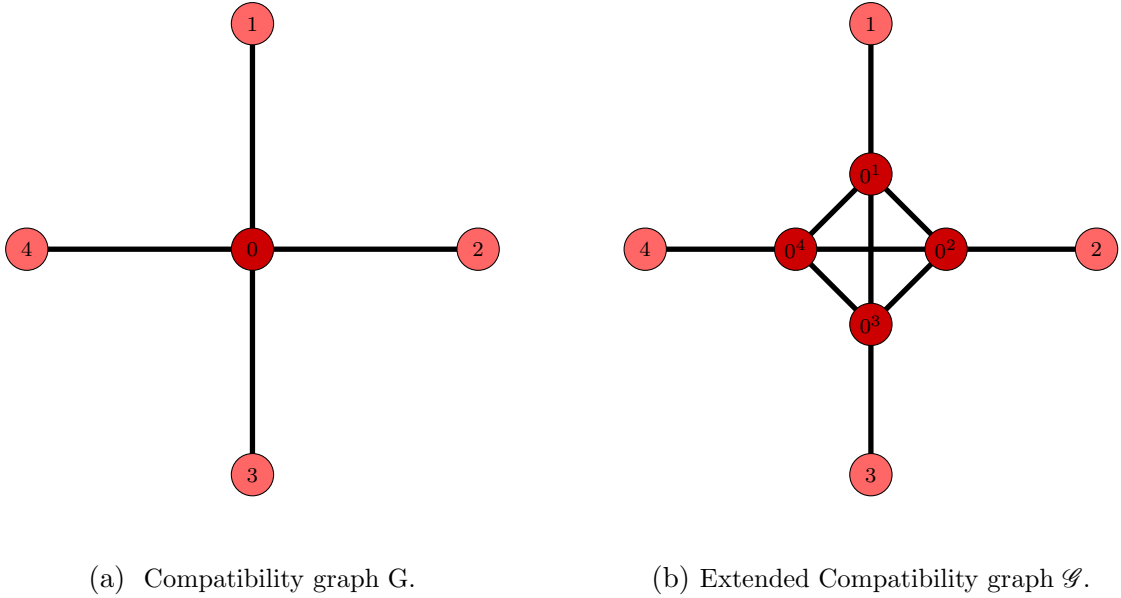
\begin{figure}[h!]
\begin{subfigure}{.5\columnwidth}
  \begin{center}
\definecolor{ffefdv}{rgb}{1,0.4,0.4}
\definecolor{ccqqqq}{rgb}{0.8,0,0}
\begin{tikzpicture}[line cap=round,line join=round,>=triangle 45,x=1cm,y=1cm]
\draw [line width=2pt] (3,-1)-- (3,2);
\draw [line width=2pt] (3,-1)-- (6,-1);
\draw [line width=2pt] (3,-1)-- (3,-4);
\draw [line width=2pt] (0,-1)-- (3,-1);
\begin{scriptsize}
\draw [fill=ccqqqq] (3,-1) circle (8pt);
\draw[color=black] (3,-1) node {$0$};
\draw [fill=ffefdv] (3,2) circle (8pt);
\draw[color=black] (3,2) node {$1$};
\draw [fill=ffefdv] (6,-1) circle (8pt);
\draw[color=black] (6,-1) node {$2$};
\draw [fill=ffefdv] (0,-1) circle (8pt);
\draw[color=black] (0,-1) node {$4$};
\draw [fill=ffefdv] (3,-4) circle (8pt);
\draw[color=black] (3,-4) node {$3$};
\end{scriptsize}
\end{tikzpicture}
  \end{center}
  \caption{\footnotesize{ Compatibility graph $\mathrm{G}$.}}
\end{subfigure}%
\begin{subfigure}{.5\columnwidth}
  \begin{center}
\definecolor{ffefdv}{rgb}{1,0.4,0.4}
\definecolor{ccqqqq}{rgb}{0.8,0,0}
\begin{tikzpicture}[line cap=round,line join=round,>=triangle 45,x=1cm,y=1cm]
\draw [line width=2pt] (17,0)-- (16,-1);
\draw [line width=2pt] (16,-1)-- (17,-2);
\draw [line width=2pt] (17,-2)-- (18,-1);
\draw [line width=2pt] (18,-1)-- (17,0);
\draw [line width=2pt] (16,-1)-- (18,-1);
\draw [line width=2pt] (14,-1)-- (16,-1);
\draw [line width=2pt] (17,0)-- (17,2);
\draw [line width=2pt] (18,-1)-- (20,-1);
\draw [line width=2pt] (17,-4)-- (17,-2);
\draw [line width=2pt] (17,0)-- (17,-2);
\begin{scriptsize}
\draw [fill=ffefdv] (14,-1) circle (8pt);
\draw[color=black] (14,-1) node {$4$};
\draw [fill=ffefdv] (20,-1) circle (8pt);
\draw[color=black] (20,-1) node {$2$};
\draw [fill=ffefdv] (17,2) circle (8pt);
\draw[color=black] (17,2) node {$1$};
\draw [fill=ffefdv] (17,-4) circle (8pt);
\draw[color=black] (17,-4) node {$3$};
\draw [fill=ccqqqq] (16,-1) circle (8pt);
\draw[color=black] (16,-1) node {$0^4$};
\draw [fill=ccqqqq] (18,-1) circle (8pt);
\draw[color=black] (18,-1) node {$0^2$};
\draw [fill=ccqqqq] (17,-2) circle (8pt);
\draw[color=black] (17,-2) node {$0^3$};
\draw [fill=ccqqqq] (17,0) circle (8pt);
\draw[color=black] (17,0) node {$0^1$};
\end{scriptsize}
\end{tikzpicture}
  \end{center}
  \caption{\footnotesize{Extended Compatibility graph $\mathscr{G}$.}}
\end{subfigure}
\caption{\footnotesize{The extended compatibility graph $\mathscr{G}$ is a 
triangular elimination of $\mathrm{G}$.
In this case, after relabelling vertex $0$ as $0_1$, we remove edges 
$\left\{0,2\right\}$, $\left\{0,3\right\}$ and $\left\{0,4\right\}$. We add 
vertex
$0_2$, connected to $0_1$ and $2$, vertex $0_3$, connected to $0_1$ and $3$ and vertex $4$, connected to $0_1$ and $0_4$. We also add all edges between
the vertices $0_i$ that are missing.}}
\label{fig:fig}
\end{figure}

As a direct consequence of Prop. \ref{prop:tri} and Thm. \ref{thm:extd_graph_elimintion}, we have:

\begin{cor}
Given a compatibility scenario $\Gamma$, suppose that 
$\mathrm{H}=\mathrm{G}$. A necessary condition for the behavior   
$\mathrm{B}$ to be noncontextual in the extended sense is that for every
extended behavior $\mathscr{B}$ and for every inequality
valid for $\mathrm{CUT}\left(\nabla \mathrm{G}\right)$,  its triangular eliminations are satisfied by 
 the vector $\mathrm{P}_{\mathscr{B}}$ corresponding to   $\mathscr{B}$.
\label{cor:necessary_condt} 
\end{cor}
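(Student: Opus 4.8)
The plan is to combine the two results just established, in the obvious pipeline. First I would recall the logical chain available to us: by Theorem~\ref{thm:extended}, the behavior $\mathrm{B}$ is noncontextual in the extended sense precisely when some extended behavior $\mathscr{B}$ is noncontextual (in the usual sense) with respect to the scenario defined by the extended hypergraph $\mathscr{H}$; and by Corollary~\ref{cor:ext_comp_graph} (itself a consequence of Theorem~\ref{thm:extended} together with Proposition~\ref{thm:nec_cut}), this forces the associated vector $\mathrm{P}_{\mathscr{B}}$ to lie in $\mathrm{CUT}\left(\nabla\mathscr{G}\right)$, where $\mathscr{G}$ is the extended compatibility graph. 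So, assuming $\mathrm{B}$ is extended-noncontextual, there exists an extended behavior $\mathscr{B}$ with $\mathrm{P}_{\mathscr{B}} \in \mathrm{CUT}\left(\nabla\mathscr{G}\right)$.

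Next I would invoke the structural hypothesis $\mathrm{H}=\mathrm{G}$. Under this assumption, Theorem~\ref{thm:extd_graph_elimintion} tells us that $\nabla\mathscr{G}$ is a triangular elimination of $\nabla\mathrm{G}$. Now take any inequality $A\cdot P\le b$ valid for $\mathrm{CUT}\left(\nabla\mathrm{G}\right)$ and any triangular elimination $A'\cdot P'\le b'$ of it (relative to the triangular elimination $\nabla\mathscr{G}$ of $\nabla\mathrm{G}$). By Proposition~\ref{prop:tri}, $A'\cdot P'\le b'$ is valid for $\mathrm{CUT}\left(\nabla\mathscr{G}\right)$. Since $\mathrm{P}_{\mathscr{B}}\in\mathrm{CUT}\left(\nabla\mathscr{G}\right)$, it satisfies $A'\cdot\mathrm{P}_{\mathscr{B}}\le b'$. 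This is exactly the asserted necessary condition: for some extended behavior $\mathscr{B}$ (and hence, reading the quantifier as the statement intends, for the extended behaviors realizing the noncontextual description), every triangular elimination of every valid inequality for $\mathrm{CUT}\left(\nabla\mathrm{G}\right)$ holds on $\mathrm{P}_{\mathscr{B}}$.

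There is essentially no computational content here — the corollary is a bookkeeping assembly of Theorem~\ref{thm:extd_graph_elimintion} and Proposition~\ref{prop:tri}, much as the ``As a direct consequence'' preamble advertises. The one point that deserves a careful sentence, and which I would flag as the main subtlety rather than obstacle, is the quantifier structure in the statement: the conclusion is phrased ``for every extended behavior $\mathscr{B}$,'' but what the argument actually delivers is a condition on the $\mathscr{B}$ witnessing extended-noncontextuality. I would resolve this by noting that when $\mathrm{B}$ is extended-noncontextual, the relevant $\mathscr{B}$'s are exactly those whose couplings are maximal and which are noncontextual over $\mathscr{H}$, and for any such $\mathscr{B}$ the vector $\mathrm{P}_{\mathscr{B}}$ lands in $\mathrm{CUT}\left(\nabla\mathscr{G}\right)$ by Corollary~\ref{cor:ext_comp_graph}; hence every triangular elimination is satisfied, giving the stated necessary condition as a test one can apply to the data. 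A short remark making explicit that the test is: ``there exists a choice of maximal couplings (equivalently, of $\mathscr{B}$) for which all these triangular-elimination inequalities hold'' would make the logical direction transparent and complete the proof.
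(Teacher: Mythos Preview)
Your proposal is correct and follows exactly the route the paper indicates: the corollary is stated there without proof, merely as ``a direct consequence of Prop.~\ref{prop:tri} and Thm.~\ref{thm:extd_graph_elimintion},'' and you have spelled out that pipeline (together with Corollary~\ref{cor:ext_comp_graph}) accurately. Your remark on the quantifier is apt and worth keeping, since the paper's own phrasing ``for every extended behavior $\mathscr{B}$'' is indeed loose in precisely the way you describe.
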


It is important to notice that  terms of the form of inequality  \eqref{eq:inq_te_2} added to $A \cdot P \leq b$ will be satisfied at equality if the behaviors are perfectly non-disturbing. Hence, there is one triangular elimination of 
$A \cdot P \leq b$ that is tight and reduces to the original inequality for non-disturbing behaviors.

When $\mathscr{B}$ is unique, we obtain a simple necessary condition for noncontextuality in the extended sense.
We calculate $\mathrm{P}_{\mathscr{B}}$ and substitute its entries in the inequalities for 
$\mathrm{CUT}\left(\nabla \mathscr{G}\right)$ obtained from the inequalities for
$\mathrm{CUT}\left(\nabla \mathrm{G}\right)$ via triangular elimination. If we find that some of them are not satisfied, we can conclude that
$\mathrm{B}$ is contextual in the extended sense.

In the case $\mathscr{B}$ is not unique, it may be impractical to determine all possible 
$\mathrm{P}_{\mathscr{B}}$ so we can not test directly if these vectors satisfy all triangular eliminations of a given inequality for
$\mathrm{CUT}\left(\nabla \mathrm{G}\right)$ or not.  Nevertheless, Thm. \ref{thm:necessary_max_coup} will help us circumvent this difficulty.

If  $A' \cdot P' \leq b'$ is a triangular elimination of $A \cdot P \leq b$, then  the left-hand-side can be written as a sum of
two terms $A'\cdot P' =  A_1 \cdot P_1 + A_2\cdot P_2$,  where $P_1$ is the projection of $P'$ that contains the entries depending only on the contexts in 
$\mathscr{H}$ that come from the contexts in $\mathrm{H}$ and $P_2$ is the projection of $P'$ that contains the terms
depending only on the contexts consisting on random variables that represent the same measurement. From $\mathrm{P}_{\mathrm{B}}$
we calculate $P_1$. To calculate $P_2$ explicitly we have to determine the maximal couplings for each pair of variables
 that represent the same measurement, which can be a hard task. Instead of doing this, we use the necessary condition 
satisfied for all maximal couplings presented in theorem \ref{thm:necessary_max_coup} to calculate which value of 
$A_2\cdot P_2$ is the worst, respecting the condition of maximal couplings. This proves the following:

\begin{thm}
 Let $A'\cdot P' =  A_1 \cdot P_1 + A_2\cdot P_2 \leq b'$ be a valid inequality for $\mathrm{CUT}\left(\nabla \mathscr{G}\right)$. Let $m$ be the minimum  of $A_2\cdot P_2$ over all 
 possible values of $P_2$ satisfying conditions given in Thm. \ref{thm:necessary_max_coup}. If 
 \be A_1 \cdot P_{\mathrm{B}} + m > b'\ee
 $\mathrm{P}_{\mathrm{B}}$ is contextual in the extended sense.
\end{thm}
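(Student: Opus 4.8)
The plan is to argue by contradiction, using the reductions already in place. First I would assume that $\mathrm{B}$ is noncontextual in the extended sense; then Corollary~\ref{cor:ext_comp_graph} supplies an extended behavior $\mathscr{B}$ for $\mathrm{B}$ with $\mathrm{P}_{\mathscr{B}}\in\mathrm{CUT}\left(\nabla\mathscr{G}\right)$, and since the inequality $A'\cdot P'\le b'$ is assumed valid for that polytope we get $A'\cdot\mathrm{P}_{\mathscr{B}}\le b'$.

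Next I would split the evaluation along the decomposition $A'\cdot P' = A_1\cdot P_1 + A_2\cdot P_2$ fixed above. For the first summand, I would use that $\mathscr{B}$ reproduces, on every context of $\mathscr{H}$ inherited from a context $C_i$ of $\mathrm{H}$, exactly the distribution $p_{C_i}$ of $\mathrm{B}$; hence the coordinates of $\mathrm{P}_{\mathscr{B}}$ that make up $P_1$ — the single-variable expectations $\langle x^i\rangle$, the edge expectations $\langle x^iy^i\rangle$ for edges lying inside such contexts, and the matching suspension coordinates — are literally the corresponding coordinates of $\mathrm{P}_{\mathrm{B}}$, so that $A_1\cdot P_1 = A_1\cdot\mathrm{P}_{\mathrm{B}}$, a quantity computable from $\mathrm{B}$ alone. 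For the second summand, the coordinates forming $P_2$ are the expectations attached to the contexts of $\mathscr{H}$ whose vertices are the copies $x^{i_1},\dots,x^{i_l}$ of a single measurement of $\mathrm{H}$; since an extended behavior assigns to each such context a \emph{maximal} coupling, $P_2$ necessarily satisfies the constraints of Theorem~\ref{thm:necessary_max_coup}, and therefore $A_2\cdot P_2\ge m$ by the very definition of $m$ as the minimum of $A_2\cdot P_2$ over that feasible region.

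Combining the two bounds gives
\[
 b'\ \ge\ A'\cdot\mathrm{P}_{\mathscr{B}}\ =\ A_1\cdot P_1 + A_2\cdot P_2\ \ge\ A_1\cdot\mathrm{P}_{\mathrm{B}} + m ,
\]
which contradicts the hypothesis $A_1\cdot\mathrm{P}_{\mathrm{B}} + m > b'$; hence $\mathrm{B}$ cannot be noncontextual in the extended sense and $\mathrm{P}_{\mathrm{B}}$ witnesses extended contextuality, as claimed.

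The step that needs care is the bookkeeping in the second paragraph: making the coordinate-wise identification of $P_1$ with a sub-vector of $\mathrm{P}_{\mathrm{B}}$ fully precise — in particular handling the suspension vertex $u$ correctly and checking that $P' = (P_1,P_2)$ really is a partition of the coordinates with nothing dropped or double-counted — and confirming that \emph{every} admissible extended behavior produces a $P_2$ inside the feasible region of Theorem~\ref{thm:necessary_max_coup}, so that the single estimate $A_2\cdot P_2\ge m$ dispatches all of them simultaneously and the criterion becomes independent of the (generally non-unique) choice of maximal coupling.
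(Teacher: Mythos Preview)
Your argument is correct and matches the paper's own reasoning, which is the short paragraph immediately preceding the theorem (ending with ``This proves the following:''). You have simply made explicit the contrapositive structure and the two bookkeeping points the paper leaves implicit, namely that $P_1$ is determined by $\mathrm{P}_{\mathrm{B}}$ alone and that every extended behavior yields a $P_2$ in the feasible region of Theorem~\ref{thm:necessary_max_coup}.
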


This gives a necessary condition for extended contextuality  that can be applied in any compatibility scenario.

\section{The $I_{3322}$ inequality}
\label{sec:I3322}

Our first example is the $(3,3,2,2)$ Bell scenario \cite{Foissart81,CG04}, 
where two distinct  parties perform three measurements each, each 
measurement with two outcomes. In this case each context has 
exactly two measurements, one form each party. With our notation, 
it means that
this scenario is described by \be 
\Gamma= \lbrace \{ A_1,A_2,A_3,B_1,B_2,B_3 \}, \{A_i B_j\}_{i \neq j}, 
\{-1,1\} \rbrace \ee and $\mathrm{H} = \mathrm{G}$.
The compatibility graph of this scenario is the complete bipartite graph $K_{3,3}$, shown in Fig. \ref{fig:I3322}.

\begin{figure}[h!]
  \begin{center}
 \definecolor{ffqqff}{rgb}{1,0,1}\definecolor{ffdxqq}{rgb}{1,0.8431372549019608,0}
 \definecolor{ffqqqq}{rgb}{1,0,0}\definecolor{qqzzqq}{rgb}{0,0.6,0}
 \definecolor{ubqqys}{rgb}{0.49411764705882354,0.2,0.7098039215686274}
 \definecolor{qqqqff}{rgb}{0.25,0.25,1}
 \begin{tikzpicture}[line cap=round,line join=round,>=triangle 45,x=1cm,y=1cm]
 \draw [line width=2pt] (2,2)-- (2,-1);
 \draw [line width=2pt] (5,2)-- (5,-1);
 \draw [line width=2pt] (8,2)-- (8,-1);
 \draw [line width=2pt] (2,2)-- (5,-1);
 \draw [line width=2pt] (2,2)-- (8,-1);
 \draw [line width=2pt] (5,2)-- (8,-1);
 \draw [line width=2pt] (8,2)-- (5,-1);
 \draw [line width=2pt] (5,2)-- (2,-1);
 \draw [line width=2pt] (8,2)-- (2,-1);
 \draw [fill=qqqqff] (2,2) circle (10pt);
 \draw[color=black] (2,2) node {$A_1$};
 \draw [fill=ubqqys] (5,2) circle (10pt);
 \draw[color=black] (5,2) node {$A_2$};
 \draw [fill=qqzzqq] (8,2) circle (10pt);
 \draw[color=black] (8,2) node {$A_3$};
 \draw [fill=ffqqqq] (2,-1) circle (10pt);
 \draw[color=black] (2,-1) node {$B_1$};
 \draw [fill=ffdxqq] (5,-1) circle (10pt);
 \draw[color=black] (5,-1) node {$B_2$};
 \draw [fill=ffqqff] (8,-1) circle (10pt);
 \draw[color=black] (8,-1) node {$B_3$};
\end{tikzpicture}
  \end{center}
  \caption{\footnotesize{Compatibility graph $\mathrm{G}$ of the 
$(3,3,2,2)$ Bell scenario. Measurements of first party are labeled $A_1, 
A_2, A_3$ and measurements of 
  the second party are labelled $B_1, B_2, B_3$.}}
  \label{fig:I3322}
\end{figure}
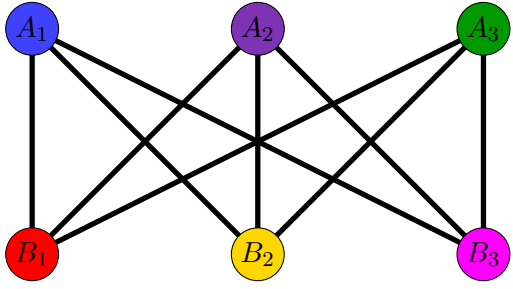

One of the facets of $\mathrm{CUT}\left(\nabla \mathrm{G}\right)$ is given by the so called $\mathrm{I}_{3322}$ inequality \cite{Foissart81,CG04}:
\begin{multline}
\label{eq:I3322}
\left\langle A_1 \right\rangle + \left\langle A_2 \right\rangle + \left\langle B_1 \right\rangle +\left\langle B_2 \right\rangle
-\left\langle A_1 B_1 \right\rangle-\left\langle A_1B_2 \right\rangle  \\
-\left\langle A_1 B_3\right\rangle
-\left\langle A_2B_1 \right\rangle-\left\langle A_2B_2 \right\rangle  + \left\langle A_2B_3 \right\rangle
-\left\langle A_3B_1 \right\rangle +\left\langle A_3B_2 \right\rangle \leq 4
\end{multline}

The extended compatibility graph $\mathscr{G}$ of this scenario is shown in Fig. \ref{fig:I3322_ext}. Each vertex $A_i$ becomes three
new vertices $A_i^1, A_i^2, A_i^3$ in $\mathscr{G}$, and similar for each $B_i$. Vertices $A_i^j$ and $B_j^i$ are connected. 
The vertices $A_i^1, A_i^2, A_i^3$ are connected for each $i$ and similar for  $B_i^1, B_i^2, B_i^3$.

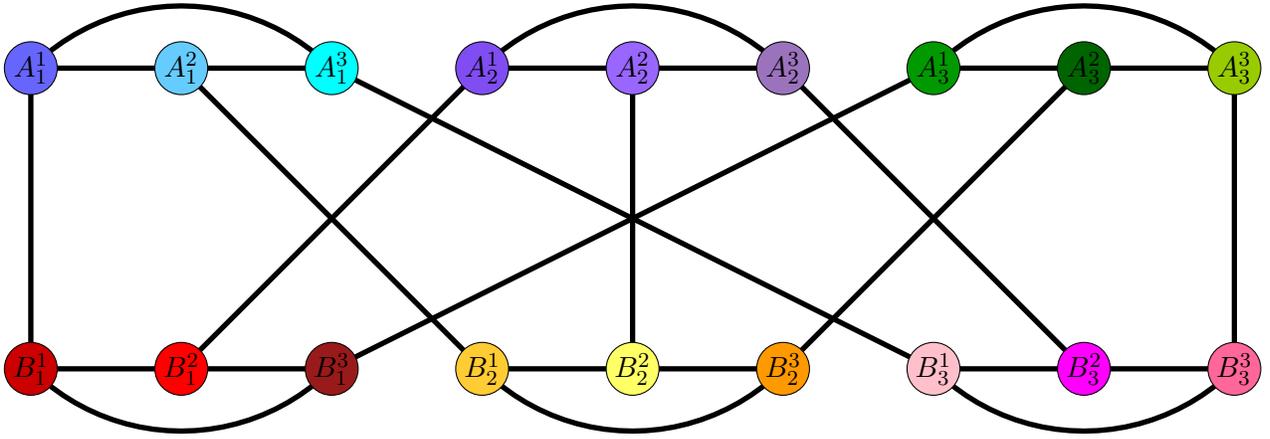
\begin{figure}[h!]
  \begin{center}
\definecolor{ffwwzz}{rgb}{1,0.4,0.6}
\definecolor{ffqqff}{rgb}{1,0,1}
\definecolor{ffcqcb}{rgb}{1,0.7529411764705882,0.796078431372549}
\definecolor{ffzzqq}{rgb}{1,0.6,0}
\definecolor{ffffww}{rgb}{1,1,0.4}
\definecolor{ffcctt}{rgb}{1,0.8,0.2}
\definecolor{yqqqqq}{rgb}{0.6019607843137255,0.1,0.1}
\definecolor{ffqqqq}{rgb}{1,0,0}
\definecolor{ccqqqq}{rgb}{0.8,0,0}
\definecolor{zzccqq}{rgb}{0.6,0.8,0}
\definecolor{qqwuqq}{rgb}{0,0.39215686274509803,0}
\definecolor{qqzzqq}{rgb}{0,0.6,0}
\definecolor{wyuqya}{rgb}{0.60784313725490196,0.45098039215686274,0.7411764705882353}
\definecolor{zzwwff}{rgb}{0.6,0.4,1}
\definecolor{wwqqcc}{rgb}{0.5,0.3,0.95}
\definecolor{qqffff}{rgb}{0,1,1}
\definecolor{wwccff}{rgb}{0.4,0.8,1}
\definecolor{ududff}{rgb}{0.40196078431372547,0.40196078431372547,1}
\begin{tikzpicture}[line cap=round,line join=round,>=triangle 45,x=1cm,y=1cm]
\draw [line width=2pt] (1,2)-- (1,-2);
\draw [line width=2pt] (1,2)-- (3,2);
\draw [line width=2pt] (3,2)-- (5,2);
\draw [line width=2pt] (3,2)-- (7,-2);
\draw [line width=2pt] (9,2)-- (9,-2);
\draw [line width=2pt] (11,2)-- (15,-2);
\draw [line width=2pt] (7,2)-- (3,-2);
\draw [line width=2pt] (5,2)-- (13,-2);
\draw [line width=2pt] (13,2)-- (5,-2);
\draw [line width=2pt] (15,2)-- (11,-2);
\draw [line width=2pt] (17,2)-- (17,-2);
\draw [line width=2pt] (7,2)-- (9,2);
\draw [line width=2pt] (9,2)-- (11,2);
\draw [line width=2pt] (13,2)-- (15,2);
\draw [line width=2pt] (15,2)-- (17,2);
\draw [line width=2pt] (13,-2)-- (15,-2);
\draw [line width=2pt] (15,-2)-- (17,-2);
\draw [line width=2pt] (7,-2)-- (9,-2);
\draw [line width=2pt] (9,-2)-- (11,-2);
\draw [line width=2pt] (1,-2)-- (3,-2);
\draw [line width=2pt] (3,-2)-- (5,-2);
\draw [shift={(3,0)},line width=2pt]  plot[domain=0.7853981633974483:2.356194490192345,variable=\t]({1*2.8284271247461903*cos(\t r)+0*2.8284271247461903*sin(\t r)},{0*2.8284271247461903*cos(\t r)+1*2.8284271247461903*sin(\t r)});
\draw [shift={(9,0)},line width=2pt]  plot[domain=0.7853981633974483:2.356194490192345,variable=\t]({1*2.8284271247461903*cos(\t r)+0*2.8284271247461903*sin(\t r)},{0*2.8284271247461903*cos(\t r)+1*2.8284271247461903*sin(\t r)});
\draw [shift={(15,0)},line width=2pt]  plot[domain=0.7853981633974483:2.356194490192345,variable=\t]({1*2.8284271247461903*cos(\t r)+0*2.8284271247461903*sin(\t r)},{0*2.8284271247461903*cos(\t r)+1*2.8284271247461903*sin(\t r)});
\draw [shift={(3,0)},line width=2pt]  plot[domain=3.9269908169872414:5.497787143782138,variable=\t]({1*2.8284271247461903*cos(\t r)+0*2.8284271247461903*sin(\t r)},{0*2.8284271247461903*cos(\t r)+1*2.8284271247461903*sin(\t r)});
\draw [shift={(9,0)},line width=2pt]  plot[domain=3.9269908169872414:5.497787143782138,variable=\t]({1*2.8284271247461903*cos(\t r)+0*2.8284271247461903*sin(\t r)},{0*2.8284271247461903*cos(\t r)+1*2.8284271247461903*sin(\t r)});
\draw [shift={(15,0)},line width=2pt]  plot[domain=3.9269908169872414:5.497787143782138,variable=\t]({1*2.8284271247461903*cos(\t r)+0*2.8284271247461903*sin(\t r)},{0*2.8284271247461903*cos(\t r)+1*2.8284271247461903*sin(\t r)});
\draw [fill=ududff] (1,2) circle (10pt);
\draw[color=black] (1,2) node {$A_1^1$};
\draw [fill=wwccff] (3,2) circle (10pt);
\draw[color=black] (3,2) node {$A_1^2$};
\draw [fill=qqffff] (5,2) circle (10pt);
\draw[color=black] (5,2) node {$A_1^3$};
\draw [fill=wwqqcc] (7,2) circle (10pt);
\draw[color=black] (7,2) node {$A_2^1$};
\draw [fill=zzwwff] (9,2) circle (10pt);
\draw[color=black] (9,2) node {$A_2^2$};
\draw [fill=wyuqya] (11,2) circle (10pt);
\draw[color=black] (11,2) node {$A_2^3$};
\draw [fill=qqzzqq] (13,2) circle (10pt);
\draw[color=black] (13,2) node {$A_3^1$};
\draw [fill=qqwuqq] (15,2) circle (10pt);
\draw[color=black] (15,2) node {$A_3^2$};
\draw [fill=zzccqq] (17,2) circle (10pt);
\draw[color=black] (17,2) node {$A_3^3$};
\draw [fill=ccqqqq] (1,-2) circle (10pt);
\draw[color=black] (1,-2) node {$B_1^1$};
\draw [fill=ffqqqq] (3,-2) circle (10pt);
\draw[color=black] (3,-2) node {$B_1^2$};
\draw [fill=yqqqqq] (5,-2) circle (10pt);
\draw[color=black] (5,-2) node {$B_1^3$};
\draw [fill=ffcctt] (7,-2) circle (10pt);
\draw[color=black] (7,-2) node {$B_2^1$};
\draw [fill=ffffww] (9,-2) circle (10pt);
\draw[color=black] (9,-2) node {$B_2^2$};
\draw [fill=ffzzqq] (11,-2) circle (10pt);
\draw[color=black] (11,-2) node {$B_2^3$};
\draw [fill=ffcqcb] (13,-2) circle (10pt);
\draw[color=black] (13,-2) node {$B_3^1$};
\draw [fill=ffqqff] (15,-2) circle (10pt);
\draw[color=black] (15,-2) node {$B_3^2$};
\draw [fill=ffwwzz] (17,-2) circle (10pt);
\draw[color=black] (17,-2) node {$B_3^3$};
\end{tikzpicture}
  \end{center}
  \caption{\footnotesize{Extended compatibility graph $\mathscr{G}$.}}
  \label{fig:I3322_ext}
\end{figure}

Applying  triangular elimination in the $\mathrm{I}_{3322}$ inequality, we can derive the following valid inequality for 
$\mathrm{CUT}\left(\nabla \mathscr{G}\right)$

\begin{multline}
 \left\langle A^1_1 \right\rangle + \left\langle A^1_2 \right\rangle + \left\langle B^1_1 \right\rangle +\left\langle B^1_2 \right\rangle
-\left\langle A^1_1 B^1_1 \right\rangle-\left\langle A^2_1B^1_2 \right\rangle-\left\langle A^3_1 B^1_3\right\rangle 
-\left\langle A^1_2B^2_1 \right\rangle \\ -\left\langle A^2_2B^2_2 \right\rangle+\left\langle A^3_2B^2_3 \right\rangle
 -\left\langle A^1_3B^3_1 \right\rangle +\left\langle A^2_3B^3_2 \right\rangle  
+ \left\langle A^1_1A_1^2 \right\rangle + \left\langle A^1_1A_1^3 \right\rangle + \left\langle A^1_2 A_2^2 \right\rangle  \\ + 
\left\langle A^1_2 A_2^3 \right\rangle
+ \left\langle A^1_3A_3^2 \right\rangle
+ \left\langle B^1_1B_1^2 \right\rangle + \left\langle B^1_1B_1^3 \right\rangle + \left\langle B^1_2B_2^2 \right\rangle + \left\langle B^1_2 B^1_3 \right\rangle
+ \left\langle B^1_3B_3^2 \right\rangle  \leq 14
\label{eq:I3322_ext}
\end{multline}

This inequality is tight and reduces to Ineq. \eqref{eq:I3322} for non-disturbing behaviors, since in this particular case we have that 
$\left\langle A^j_iA_i^k \right\rangle=1$ and $\left\langle B^j_iB_i^k \right\rangle=1$ for every $i, j,k$.

In this scenario, each measurement has two outcomes and belongs to 
three contexts, therefore each behavior $\mathrm{B}$ has a unique 
extended behavior $\mathscr{B}$ corresponding to it.  This, in turn, implies the following result:

\begin{cor}
 A necessary condition for extended noncontextuality of a behavior  $\mathrm{B}$  in the    $(3,3,2,2)$ Bell scenario is that the unique extended behavior
of  $\mathrm{B}$  satisfies the triangular elimination of the $\mathrm{I}_{3322}$ inequality given by Eq.
 \eqref{eq:I3322_ext}.
           
\end{cor}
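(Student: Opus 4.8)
The plan is to derive this corollary as an immediate consequence of the machinery already assembled in the preceding sections, with essentially no new computation. First I would note that the $(3,3,2,2)$ Bell scenario satisfies the hypothesis $\mathrm{H}=\mathrm{G}$: every context $\{A_iB_j\}$ with $i\neq j$ has exactly two elements, so the compatibility hypergraph coincides with its $2$-section. Hence Thm.~\ref{thm:extd_graph_elimintion} applies and $\nabla\mathscr{G}$ is a triangular elimination of $\nabla\mathrm{G}$. Since the $\mathrm{I}_{3322}$ inequality \eqref{eq:I3322} is a (facet-defining, hence valid) inequality for $\mathrm{CUT}\left(\nabla\mathrm{G}\right)$, Prop.~\ref{prop:tri} guarantees that its triangular elimination \eqref{eq:I3322_ext} is valid for $\mathrm{CUT}\left(\nabla\mathscr{G}\right)$; one should check that \eqref{eq:I3322_ext} is indeed obtained from \eqref{eq:I3322} by the prescribed sequence of additions of inequalities of the form \eqref{eq:inq_te_1}, \eqref{eq:inq_te_2} associated with the edges removed in passing from $\mathrm{G}$ to $\mathscr{G}$ (this is the only routine verification, and it is already asserted in the text preceding the corollary).

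Next I would invoke Corollary~\ref{cor:ext_comp_graph}: if $\mathrm{B}$ is noncontextual in the extended sense, there is an extended behavior $\mathscr{B}$ with $\mathrm{P}_{\mathscr{B}}\in\mathrm{CUT}\left(\nabla\mathscr{G}\right)$, and therefore $\mathrm{P}_{\mathscr{B}}$ satisfies every valid inequality for that polytope, in particular \eqref{eq:I3322_ext}. The remaining point is the uniqueness of the extended behavior. Here I would argue that each measurement $A_i$ (resp.\ $B_j$) appears in exactly three contexts and has exactly two outcomes, so the relevant coupling is of three two-valued random variables; by the remark following Thm.~\ref{teomaxcoupling}, the maximal coupling of three binary variables is unique, hence $\mathscr{B}$ is uniquely determined by $\mathrm{B}$. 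Combining, the condition ``$\mathrm{P}_{\mathscr{B}}$ satisfies \eqref{eq:I3322_ext}'' is well-defined (it does not depend on a choice of $\mathscr{B}$) and is necessary for extended noncontextuality, which is exactly the statement.

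The main obstacle, such as it is, is purely bookkeeping: one must exhibit the explicit sequence of triangular eliminations that turns \eqref{eq:I3322} into \eqref{eq:I3322_ext}, i.e.\ identify for each original edge $\langle A_iB_j\rangle$ which split vertex $A_i^j$, $B_j^i$ it is reassigned to, and verify that the added $\langle A_i^jA_i^k\rangle$ and $\langle B_j^iB_j^k\rangle$ terms appear with the right signs and that the right-hand side inflates from $4$ to $14$ accordingly. Since Def.~\ref{def:triangular_elim_for_graphs} and the triangular-elimination-for-inequalities definition allow exactly these moves, and Thm.~\ref{thm:extd_graph_elimintion} already records that $\mathscr{G}$ is reachable from $\mathrm{G}$ by such a sequence, no genuine difficulty arises; one could alternatively just cite \eqref{eq:I3322_ext} as given and the remark immediately below it (that it is tight and reduces to \eqref{eq:I3322} when $\langle A_i^jA_i^k\rangle=\langle B_i^jB_i^k\rangle=1$) as the verification. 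The rest is a two-line invocation of Corollary~\ref{cor:ext_comp_graph} together with the uniqueness of $\mathscr{B}$.
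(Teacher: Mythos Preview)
Your proposal is correct and follows essentially the same approach as the paper: the paper does not give a separate proof, but simply records (in the sentence preceding the corollary) that each measurement has two outcomes and lies in three contexts, so the extended behavior is unique, and then reads off the statement from the machinery of Prop.~\ref{prop:tri}, Thm.~\ref{thm:extd_graph_elimintion} and Cor.~\ref{cor:necessary_condt}/\ref{cor:ext_comp_graph}. One small cleanup: the restriction ``$i\neq j$'' you copied from the scenario description is a typo in the paper (the compatibility graph is $K_{3,3}$, so all nine pairs $\{A_iB_j\}$ are contexts); your count of three contexts per measurement is correct only with the full set of nine contexts, so drop the $i\neq j$.
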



\section{Chained Inequalitites}
\label{sec:chain_ineq}

We consider now the $(n,n,2,2)$ Bell scenario with $2$ parties, $n$ 
measurements per party, each measurements with $2$ outcomes. Also in this 
case each context has exactly two measurements, one from each party, and 
$\mathrm{H}= \mathrm{G}$.  Once again, sticking to our notation, we 
describe such a scenario with \be \Gamma=\lbrace 
\{A_1,...,A_n,B_1,...,B_n\}, \{A_i A_j\}_{i \neq j}, \{-1,1\}    \rbrace 
\ee The compatibility graph $\mathrm{G}$ is the complete bipartite graph 
$K_{n,n}$. A family of noncontextuality inequalities for these 
scenarios consists of the so called \emph{Chained Inequalities} 
\cite{BC90}, given by

\be
\left\langle A_1B_2\right\rangle + \left\langle B_1A_2\right\rangle + \ldots + \left\langle B_{n-1}A_n\right\rangle 
+\left\langle  A_nB_n \right\rangle - \left\langle B_n A_1\right\rangle \leq 2n-2.
\label{eq:chained}\ee

Each vertex $A_i$ becomes $n$
new vertices $A_i^1, A_i^2, \ldots , A_i^n$ in  the extended compatibility graph $\mathscr{G}$, and similar for each $B_i$. Vertices $A_i^j$ and $B_j^i$ are connected. 
The vertices $A_i^1, A_i^2, \ldots , A_i^n$ are connected for each $i$ and similar for $B_i^1, B_i^2, \ldots , B_i^n$.

Applying  triangular elimination in the  inequality \eqref{eq:chained}, we can derive the following valid inequality for 
$\mathrm{CUT}\left(\nabla \mathscr{G}\right)$, which is tight and reduces to Ineq. \eqref{eq:chained} for no-disturbing behaviors:

\begin{multline}
\left\langle A_1^2B^1_2\right\rangle + \left\langle B^2_1A^1_2\right\rangle + \ldots + \left\langle B^n_{n-1}A^{n-1}_n\right\rangle 
+\left\langle  A^n_nB^n_n \right\rangle - \left\langle B^1_n A^n_1\right\rangle 
+\left\langle  A^1_1A^n_1 \right\rangle 
+\left\langle  A^1_2A^2_2 \right\rangle + \ldots  \\
+\left\langle  A^{n-1}_nA^n_n \right\rangle
+\left\langle  B^1_1B^2_1 \right\rangle
+\left\langle  B^2_2B^3_2 \right\rangle +\ldots
+\left\langle  B^1_nB^n_n \right\rangle
\leq  4n-2. 
\label{eq:chained_ext}
\end{multline}

In this scenario, each measurement belongs to $n$ contexts, therefore each 
behavior $\mathrm{B}$ may have several extended behaviors $\mathscr{B}$ 
corresponding to it. Given such $\mathscr{B}$, we construct the 
vector $\mathrm{P}_{\mathscr{B}}$. Let $P_1$ be the projection of 
$\mathrm{P}_{\mathscr{B}}$ over the entries corresponding to contexts
$A_i^jB_j^i$ and $P_2$ be the projection of $\mathrm{P}_{\mathscr{B}}$ 
over the entries corresponding to contexts $A_i^jA_i^k$ and $B_i^jB_i^k$. 
$P_1$ depends only in $\mathrm{P}_{\mathrm{B}}$ and hence is the same for 
all extended behaviors $\mathrm{P}_{\mathscr{B}}$. The projection $P_2$ 
depends  on the choice of  maximal coupling for each pair $A_i^jA_i^k$ and 
$B_i^jB_i^k$.

The left-hand side of inequality \eqref{eq:chained_ext} can be divided in two parts. The first part contains the terms
\begin{widetext}
\be \left\langle A_1^2B^1_2\right\rangle + \left\langle 
B^2_1A^1_2\right\rangle + \ldots + \left\langle 
B^n_{n-1}A^{n-1}_n\right\rangle 
+\left\langle  A^n_nB^n_n \right\rangle - \left\langle B^1_n 
A^n_1\right\rangle \ee
\end{widetext}
and depends only on $P_1$, and hence only on $\mathrm{P}_{\mathrm{B}}$. The second part 
contains the terms
\begin{widetext}
\be
\left\langle  A^1_1A^n_1 \right\rangle
+\left\langle  A^1_2A^2_2 \right\rangle + \ldots 
+\left\langle  A^{n-1}_nA^n_n \right\rangle
+\left\langle  B^1_1B^2_1 \right\rangle
+\left\langle  B^2_2B^3_2 \right\rangle +\ldots
+\left\langle  B^1_nB^n_n \right\rangle
\label{eq:chained_ext_part2}
\ee
\end{widetext}
and depends only on $P_2$. No matter which extended behavior we have, the projection $P_2$ must necessarily satisfy the constraint given in Thm.
\ref{thm:necessary_max_coup}. Let $m$ be the minimum of the second term \eqref{eq:chained_ext_part2} over all vectors $P_2$ satisfying 
Thm.
\ref{thm:necessary_max_coup}. 

\begin{cor}
 A necessary condition for extended noncontextuality  of a behavior $\mathrm{B}$ in the $(n,n,2,2)$ Bell scenario is that the inequality
 \be
  \left\langle A_1^2B^1_2\right\rangle + \left\langle B^2_1A^1_2\right\rangle + \ldots + \left\langle B^n_{n-1}A^{n-1}_n\right\rangle 
+\left\langle  A^n_nB^n_n \right\rangle 
 - \left\langle B^1_n A^n_1\right\rangle + m \leq  
2\left(2n-1\right)
 \ee
is satisfied by the projection $P_1$ of the extended behaviors $\mathscr{B}$ for $\mathrm{B}$.
\end{cor}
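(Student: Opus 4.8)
The plan is to obtain this corollary as a specialization of the last theorem of Sec.~\ref{sec:valid_ineq}---the one whose hypothesis reads $A_1\cdot P_{\mathrm{B}}+m>b'$---applied to the chained inequality~\eqref{eq:chained}. First I would record that for the $(n,n,2,2)$ Bell scenario the compatibility hypergraph coincides with its $2$-section, $\mathrm{H}=\mathrm{G}=K_{n,n}$, so Thm.~\ref{thm:extd_graph_elimintion} applies: $\mathscr{G}$ is a triangular elimination of $\mathrm{G}$ and $\nabla\mathscr{G}$ is a triangular elimination of $\nabla\mathrm{G}$. Since inequality~\eqref{eq:chained_ext} was constructed precisely as a triangular elimination of~\eqref{eq:chained}, and~\eqref{eq:chained} is valid for $\mathrm{CUT}(\nabla\mathrm{G})$, Prop.~\ref{prop:tri} guarantees that~\eqref{eq:chained_ext} is valid for $\mathrm{CUT}(\nabla\mathscr{G})$; note its right-hand side equals $4n-2=2(2n-1)$.

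Next I would invoke Cor.~\ref{cor:ext_comp_graph}: if $\mathrm{B}$ is noncontextual in the extended sense, there is an extended behavior $\mathscr{B}$ for $\mathrm{B}$ with $\mathrm{P}_{\mathscr{B}}\in\mathrm{CUT}(\nabla\mathscr{G})$, and hence $\mathrm{P}_{\mathscr{B}}$ satisfies~\eqref{eq:chained_ext}. I would then split the left-hand side of~\eqref{eq:chained_ext} into $A_1\cdot P_1+A_2\cdot P_2$ exactly as in the discussion preceding the statement: $P_1$ collects the entries $\langle A_i^jB_j^i\rangle$ carried by the contexts inherited from $\mathrm{H}$, while $P_2$ collects the entries $\langle A_i^jA_i^k\rangle$ and $\langle B_i^jB_i^k\rangle$ carried by the cliques $\{A_i^1,\dots,A_i^n\}$ and $\{B_i^1,\dots,B_i^n\}$. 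Because the extended behavior assigns to each context $A_i^jB_j^i$ of $\mathscr{H}$ the very distribution that $\mathrm{B}$ assigns to $A_iB_j$, the summand $A_1\cdot P_1$ equals the $m$-free part of the inequality in the statement and depends on $\mathrm{P}_{\mathrm{B}}$ alone, hence is common to all extended behaviors of $\mathrm{B}$.

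Finally I would bound $A_2\cdot P_2$ from below uniformly over the choice of $\mathscr{B}$. By definition of an extended behavior, the distribution on each context $A_i^jA_i^k$ (and each $B_i^jB_i^k$) is a maximal coupling, so its pairwise marginals satisfy the constraints of Thm.~\ref{thm:necessary_max_coup}; rewriting those probability constraints in the expectation-value coordinates~\eqref{eq:p2} exhibits $P_2$ as a feasible point of the program whose minimum defines $m$, so $A_2\cdot P_2\ge m$. Combining the estimates, $A_1\cdot P_{\mathrm{B}}+m\le A_1\cdot P_1+A_2\cdot P_2$; the right-hand side is the left-hand side of~\eqref{eq:chained_ext}, which is at most $2(2n-1)$, so $A_1\cdot P_{\mathrm{B}}+m\le 2(2n-1)$---precisely the claimed inequality. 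The one step needing care is this last translation: one must check that the bounds of Thm.~\ref{thm:necessary_max_coup}, restricted to the particular index pairs $(j,k)$ that actually occur in~\eqref{eq:chained_ext_part2} rather than to all $\binom{n}{2}$ edges of each clique, still cut out a region over which the linear functional $A_2\cdot P_2$ attains the finite value $m$; everything else is a mechanical specialization of the general theorem already in hand, together with the fact, noted just before the statement, that $P_1$ does not depend on the chosen maximal couplings.
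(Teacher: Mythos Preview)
Your proposal is correct and follows essentially the same route as the paper: the corollary is stated immediately after the discussion that splits the left-hand side of~\eqref{eq:chained_ext} into the $P_1$- and $P_2$-parts and defines $m$, so the intended argument is exactly the specialization of the last theorem of Sec.~\ref{sec:valid_ineq} to inequality~\eqref{eq:chained_ext} that you spell out. Your added remark about the well-definedness of $m$ is a reasonable sanity check but not a real obstruction, since the constraints of Thm.~\ref{thm:necessary_max_coup} bound each relevant entry of $P_2$ from both sides.
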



\section{Scenarios with contexts with more than three measurements}
\label{sec:more_than_3}

When there are contexts with more then three 
measurements, $\mathrm{H} \neq \mathrm{G}$ and $\mathscr{G}$ is not 
a triangular elimination of 
$\mathrm{G}$. Nevertheless we can still generate valid inequalities 
for $\mathrm{CUT}\left(\nabla \mathscr{G}\right)$ from valid inequalities 
for $\mathrm{CUT}\left(\nabla \mathrm{G}\right)$ using two strategies: the 
first one is to use a graph operation called \emph{vertex 
splitting}~\cite{AIT06,DL97,BM86};
the second one is to use triangular elimination combined with a graph 
operation called \emph{edge contraction}~\cite{AIT06,DL97,BM86}.

\subsection{Vertex splitting}

\begin{dfn}[Vertex splitting for graphs]
Let $\mathsf{G}=\left(\mathsf{V}, \mathsf{E}\right)$ be a graph, $w \in 
\mathsf{V}$ and $\left(\mathsf{S}, \mathsf{T}, \mathsf{B}\right)$
be a partition of the neighbours of $w$. The graph $\mathsf{G}'=\left(\mathsf{V}', \mathsf{E}'\right)$ is obtained from $\mathsf{G}$ 
by \emph{splitting} vertex $w$ into $s$ and $t$, for $s,t \notin 
\mathsf{V}$,  with respect to the 
partition $\left(\mathsf{S}, \mathsf{T}, \mathsf{B}\right)$
if $$\mathsf{V}'= \left(\mathsf{V} \setminus \{w\} \right) \cup \left\{s,t\right\}$$ and
\be \mathsf{E}'= \left(\mathsf{E} \setminus \delta\left(w\right)\right) \cup \left(s: S \cup B\right) \cup \left(t: T \cup B\right) \cup \left\{st\right\},\ee
where $\delta\left(w\right)$ is the set of neighbours of $w$, $\left(s: S \cup B\right)$ is the set of all edges connecting $s$ to the vertices in $S \cup B$ and
$\left(t: T \cup B\right)$ is the set of all edges connecting $t$ to the vertices in $T \cup B$. 
\end{dfn}

In other words, the graph $\mathsf{G}'$ is the graph obtained from 
$\mathsf{G}$ removing the vertex $w$ and replacing it by vertices $s$ and 
$t$, which are connected.
The vertices in $S$ are connected only to $s$, the vertices in $T$ are connected only to $t$ and the vertices in $B$ are connected to both $s$ and $t$.
Figures \ref{fig:splitting1}-\ref{fig:splitting2} illustrate a 
simple example of this operation.

\begin{figure}[h!]
\begin{subfigure}[b]{.5\columnwidth}
  \begin{center}
\definecolor{ffwwqq}{rgb}{1,0.6,0}
\definecolor{ffdxqq}{rgb}{1,0.8431372549019608,0}
\definecolor{zzwwqq}{rgb}{0.8,0.6431372549019608,0}
\definecolor{ffefdv}{rgb}{1,0.9372549019607843,0.8352941176470589}
\begin{tikzpicture}[line cap=round,line join=round,>=triangle 45,x=1cm,y=1cm]
\draw [line width=2pt] (1,0)-- (5,2);
\draw [line width=2pt] (1,0)-- (5,0);
\draw [line width=2pt] (1,0)-- (5,-2);
\draw [line width=1pt, dashed] (5,2) circle (0.7cm);
\draw [line width=1pt, dashed] (5,0) circle (0.7cm);
\draw [line width=1pt, dashed] (5,-2) circle (0.7cm);
\draw [fill=ffefdv] (1,0) circle (10pt);
\draw[color=black] (1,0) node {$w$};
\draw [fill=zzwwqq] (5,2) circle (10pt);
\draw [fill=ffdxqq] (5,0) circle (10pt);
\draw [fill=ffwwqq] (5,-2) circle (10pt);
\draw[color=black] (6,2) node {$S$};
\draw[color=black] (6,0) node {$B$};
\draw[color=black] (6,-2) node {$T$};
\end{tikzpicture}
  \end{center}
  \caption{\footnotesize{Graph $\mathsf{G}$, vertex $w$ and partition 
$\left(\mathsf{S}, \mathsf{T}, \mathsf{B}\right)$ of 
$\delta{(w)}$.}\label{fig:splitting1}}
\end{subfigure}%
\begin{subfigure}[b]{.5\columnwidth}
  \begin{center}
\definecolor{ffwwqq}{rgb}{1,0.6,0}
\definecolor{ffdxqq}{rgb}{1,0.8431372549019608,0}
\definecolor{zzwwqq}{rgb}{0.8,0.6431372549019608,0}
\definecolor{ffefdv}{rgb}{1,0.9372549019607843,0.8352941176470589}
\begin{tikzpicture}[line cap=round,line join=round,>=triangle 45,x=1cm,y=1cm]
\draw [line width=2pt] (11,1)-- (15,2);
\draw [line width=2pt] (11,-1)-- (15,-2);
\draw [line width=2pt] (11,-1)-- (11,1);
\draw [line width=2pt] (11,1)-- (15,0);
\draw [line width=2pt] (11,-1)-- (15,0);
\draw [fill=ffefdv] (11,1) circle (10pt);
\draw[color=black] (11,1) node {$s$};
\draw [fill=ffefdv] (11,-1) circle (10pt);
\draw[color=black] (11,-1) node {$t$};
\draw [fill=zzwwqq] (15,2) circle (10pt);
\draw [fill=ffdxqq] (15,0) circle (10pt);
\draw [fill=ffwwqq] (15,-2) circle (10pt);
\end{tikzpicture}
  \end{center}
  \caption{\footnotesize{Vertex splitting of  $\mathsf{G}$ with respect 
to $w$ and the partition $\left(\mathsf{S}, \mathsf{T}, 
\mathsf{B}\right)$.} \label{fig:splitting2}}
\end{subfigure}
\end{figure}

\begin{dfn}[Vertex splitting for inequalities]
 Let $\mathsf{G}=\left(\mathsf{V}, \mathsf{E}\right)$ be a graph, $w \in \mathsf{V}$, $\left(\mathsf{S}, \mathsf{T}, \mathsf{B}\right)$
be a partition of the neighbours of $w$ and $A \cdot P \leq b$ be an inequality valid for $\mathrm{CUT}\left(\mathsf{G}\right)$.
Assume without loss of generality that $\sum_{v \in \mathsf{T}} \left| A_{wv} \right| \leq \sum_{v \in \mathsf{S}} \left| A_{wv}\right|.$
Define $A'$ in the following way:
\begin{eqnarray}
 A'_{st} & = & -\sum_{v \in T} \left|A_{wv}\right| \\
 A'_{tv}&=& 0, \ v \in \mathsf{B}\\
 A'_{tv}&=&A_{wv},  \ v \in \mathsf{T}\\
 A'_{sv}&=&A_{wv}, \ v \in \mathsf{S} \cup \mathsf{B}\\
 A'_{uv}&=&A_{uv}, \ uv \in \mathsf{E}'\setminus \left[\delta(s) \cup \delta(t)\right].
\end{eqnarray}
The inequality $A' \cdot P' \leq b$ is called the \emph{vertex splitting} 
of $A \cdot P \leq b$ with respect 
to $w \in \mathsf{V}$ and $\left(\mathsf{S}, \mathsf{T}, \mathsf{B}\right)$.
\end{dfn}

\begin{prop}
\label{prop:split}
Let graph $\mathsf{G}'$ and inequality $A' \cdot P' \leq b$ be 
vertex splittings of  $\mathsf{G}$ and $A \cdot P \leq b$ (resp.) with 
respect to 
$w \in \mathsf{V}$ and $\left(\mathsf{S}, \mathsf{T}, \mathsf{B}\right)$. 
If $A \cdot P \leq b$ is a valid inequality 
for $\mathrm{CUT}\left(\mathsf{G}\right)$, then 
 $A' \cdot P' \leq b$ is a valid inequality for 
$\mathrm{CUT}\left(\mathsf{G}'\right)$.
%
\end{prop}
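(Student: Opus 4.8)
The plan is to verify the inequality $A'\cdot P'\leq b$ directly on the extreme points of $\mathrm{CUT}\left(\mathsf{G}'\right)$. Since this polytope is the convex hull of the cut vectors $\delta_{\mathsf{G}'}\left(S'\right)$, $S'\subseteq\mathsf{V}'$, and the inequality is linear, it is enough to produce, for each $S'\subseteq\mathsf{V}'$, a subset $S\subseteq\mathsf{V}$ with
\[
A'\cdot\delta_{\mathsf{G}'}\left(S'\right)\;\leq\;A\cdot\delta_{\mathsf{G}}\left(S\right),
\]
and then to invoke the hypothesis that $A\cdot P\leq b$ is valid on $\mathrm{CUT}\left(\mathsf{G}\right)$, in particular at $\delta_{\mathsf{G}}\left(S\right)$. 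Throughout I identify $\mathsf{V}'\setminus\{s,t\}$ with $\mathsf{V}\setminus\{w\}$, so that every edge of $\mathsf{G}'$ not incident to $s$ or $t$ is literally an edge of $\mathsf{G}$ carrying the same coefficient in $A'$ and in $A$.

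First I would fix the correspondence $S'\mapsto S$. Using $\delta_{\mathsf{G}'}\left(S'\right)=\delta_{\mathsf{G}'}\left(\mathsf{V}'\setminus S'\right)$, I may assume $s\in S'$ whenever $s$ and $t$ are separated by $S'$. Define $S$ by keeping $S'\cap\left(\mathsf{V}\setminus\{w\}\right)$ and putting $w\in S$ exactly when $s\in S'$; thus $w$ always ends up on the same side of the cut as $s$, and the $wv$-component of $\delta_{\mathsf{G}}\left(S\right)$ equals the $sv$-component of $\delta_{\mathsf{G}'}\left(S'\right)$ for every $v\neq w$. Writing $\delta':=\delta_{\mathsf{G}'}\left(S'\right)$ and $\delta:=\delta_{\mathsf{G}}\left(S\right)$, the verification splits on the value of $\delta'_{st}$. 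If $\delta'_{st}=0$, then $w$, $s$, $t$ are all on the same side, the term $A'_{st}\delta'_{st}$ vanishes, and the edges $sv$ with $v\in\mathsf{S}\cup\mathsf{B}$ and $tv$ with $v\in\mathsf{T}$ reproduce exactly $A_{wv}\delta_{wv}$ (here one uses $A'_{tv}=0$ for $v\in\mathsf{B}$, so each $\mathsf{B}$-edge is counted once, through $s$); summing gives $A'\cdot\delta'=A\cdot\delta\leq b$. If $\delta'_{st}=1$, then $s\in S'$ and $t\notin S'$, hence $\delta'_{tv}=1-\delta'_{sv}$ for every $v$; using $A'_{st}=-\sum_{v\in\mathsf{T}}\left|A_{wv}\right|$, $\delta_{wv}=\delta'_{sv}$, and that the $\mathsf{S}\cup\mathsf{B}$-edges again contribute $A_{wv}\delta'_{sv}=A_{wv}\delta_{wv}$,
\[
A'\cdot\delta'-A\cdot\delta=A'_{st}+\sum_{v\in\mathsf{T}}A_{wv}\left(\delta'_{tv}-\delta_{wv}\right)=-\sum_{v\in\mathsf{T}}\left|A_{wv}\right|+\sum_{v\in\mathsf{T}}A_{wv}\left(1-2\delta'_{sv}\right).
\]
Since $1-2\delta'_{sv}\in\{-1,+1\}$, each summand is at most $\left|A_{wv}\right|$, so the right-hand side is $\leq 0$ and again $A'\cdot\delta'\leq A\cdot\delta\leq b$.

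I expect the case $\delta'_{st}=1$ to be the only delicate point: one must track the edge $st$ with its negative coefficient and check that the sign flips forced on the $\mathsf{T}$-edges by ``reversing'' the cut are exactly absorbed by $A'_{st}$, via the elementary bound $A_{wv}\left(1-2\delta'_{sv}\right)\leq\left|A_{wv}\right|$. Note that this argument never invokes the normalization $\sum_{v\in\mathsf{T}}\left|A_{wv}\right|\leq\sum_{v\in\mathsf{S}}\left|A_{wv}\right|$ assumed in the definition of the vertex splitting of an inequality; that hypothesis is only needed if one also wants the converse implication (cf.\ Ref.~\cite{AIT06}). The remaining ingredients — the edge identification, the reduction by complementation to $s\in S'$, and the case $\delta'_{st}=0$ — are routine bookkeeping, and the proof is then complete.
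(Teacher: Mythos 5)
Your proof is correct. Note that the paper itself does not prove Proposition~\ref{prop:split}: it is stated without proof and implicitly deferred to the literature on the cut polytope (\cite{AIT06,DL97,BM86}), so your argument is a self-contained verification rather than a variant of an in-paper proof. Your route is the standard and natural one: since $\mathrm{CUT}\left(\mathsf{G}'\right)$ is the convex hull of the cut vectors, it suffices to dominate $A'\cdot\delta_{\mathsf{G}'}\left(S'\right)$ by $A\cdot\delta_{\mathsf{G}}\left(S\right)$ for a suitable $S$, and your choice (place $w$ on the side of $s$, after using complementation to assume $s\in S'$ when $s,t$ are separated) makes all edges not incident to $s,t$ match exactly, the $\mathsf{S}\cup\mathsf{B}$-edges match through $s$, and the only discrepancy sit in the $\mathsf{T}$-edges and the edge $st$; the bound $A_{wv}\left(1-2\delta'_{sv}\right)\leq\left|A_{wv}\right|$ then shows the coefficient $A'_{st}=-\sum_{v\in\mathsf{T}}\left|A_{wv}\right|$ absorbs the worst case, which is precisely the crux. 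Two small remarks: (i) your computation is in the $0/1$ cut-vector convention of the paper's definition of $\mathrm{CUT}$, which is the correct reading of the statement (with $\pm1$ correlations the same coefficient recipe would require a different right-hand side, so the convention is not an idle point); (ii) your observation that the normalization $\sum_{v\in\mathsf{T}}\left|A_{wv}\right|\leq\sum_{v\in\mathsf{S}}\left|A_{wv}\right|$ is not needed for validity is accurate, but its role in Ref.~\cite{AIT06} is to preserve tightness/facet-defining properties rather than to enable the converse validity implication, which already follows by contracting the edge $st$.
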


\begin{thm}
 The extended compatibility graph $\mathscr{G}$ and its suspension graph $\nabla \mathscr{G}$ can be obtained from the compatibility graph $\mathrm{G}$ and $\nabla \mathrm{G}$, respectively, using a sequence of vertex splitting operations.
 \label{thm:split}
\end{thm}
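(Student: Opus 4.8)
The plan is to build $\mathscr{G}$ from $\mathrm{G}$ by processing the vertices of $X$ one at a time, ``fanning out'' a vertex $x$ that lies in $d(x)$ contexts into its $d(x)$ copies by means of $d(x)-1$ successive binary vertex splittings (a vertex in a single context needs no splitting, and when $\mathrm{H}=\mathscr{H}$ there is nothing to prove). Fix an order $x_1,\dots,x_n$ of $X$ and, for $0\le k\le n$, let $\mathscr{G}_k$ be the graph whose vertex set is obtained from $X$ by replacing each of $x_1,\dots,x_k$ by its copies $x_m^i$ (with $m\le k$ and $C_i\ni x_m$), and in which two vertices are adjacent precisely when they are two copies of the same $x_m$ ($m\le k$) or they both lie in the hyperedge that a common context has become. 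Then $\mathscr{G}_0=\mathrm{G}$ and $\mathscr{G}_n=\mathscr{G}$, so it is enough to pass from $\mathscr{G}_{k-1}$ to $\mathscr{G}_k$ by vertex splittings.

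For this step write $x:=x_k$, let $C_1,\dots,C_l$ (local numbering) be the contexts containing $x$, and let $C_i^{\ast}$ denote what $C_i$ has become inside $\mathscr{G}_{k-1}$, so that the neighbourhood of $x$ is $\delta(x)=\bigcup_{i=1}^{l}(C_i^{\ast}\setminus\{x\})$. First I split $x$ into $s$ (renamed $x^1$) and $t$ with respect to the partition
\be
S=(C_1^\ast\setminus\{x\})\setminus\!\!\bigcup_{i\ge 2}C_i^\ast,\qquad
T=\Bigl(\bigcup_{i\ge 2}C_i^\ast\setminus\{x\}\Bigr)\setminus C_1^\ast,\qquad
B=\Bigl(C_1^\ast\cap\!\bigcup_{i\ge 2}C_i^\ast\Bigr)\setminus\{x\},
\ee
after which $x^1$ is adjacent exactly to $(C_1^\ast\setminus\{x\})\cup\{t\}$ and $t$ to $\bigl(\bigcup_{i\ge 2}C_i^\ast\setminus\{x\}\bigr)\cup\{x^1\}$. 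I then split $t$ into $x^2$ and a new $t$ with the analogous partition built from $C_2^\ast$ against $C_3^\ast,\dots,C_l^\ast$, and so on; at the $j$-th step the copies $x^1,\dots,x^{j-1}$ already produced are all placed in the ``both'' class $B$, so that they stay adjacent to every later copy. After $l-1$ splittings $x$ has been replaced by a clique on $x^1,\dots,x^l$ with each $x^i$ additionally adjacent to $C_i^\ast\setminus\{x\}$, which is precisely the local picture of $\mathscr{G}_k$ around the copies of $x$ (when $|C_i|\le 2$ for all $i$ this reproduces the triangular-elimination description of Theorem~\ref{thm:extd_graph_elimintion}).

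It remains to see that nothing else changes and that the cross-adjacencies come out right. A neighbour of $x$ that is an already-produced copy $y^{a}$ belongs to exactly one of $C_1^\ast,\dots,C_l^\ast$, namely $C_a^\ast$, hence lands in $S$ or in $B$ and ends adjacent to $x^a$ alone, as $\mathscr{G}$ requires; a neighbour that is a still-unsplit $x_m$ with $m>k$ is routed according to which of $C_1,\dots,C_l$ contain it, and when $x_m$ is processed later the copies $x^1,\dots,x^l$ are in turn distributed among the parts of $x_m$'s splittings so that $x^i\sim x_m^{\,i}$ holds exactly when $x_m\in C_i$. Edges between two vertices that are neither equal nor adjacent to $x$ are untouched by a splitting, so the induction closes and $\mathscr{G}$ is obtained. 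For the suspension statement one runs the same sequence of splittings inside $\nabla\mathrm{G}$: the apex $u$ is a neighbour of every $x_k$, at each elementary splitting we assign $u$ to the class $B$, and since $u$ is never itself split it ends adjacent to all copies, turning $\nabla\mathrm{G}$ into $\nabla\mathscr{G}$.

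I expect the main obstacle to be the verification in the last paragraph: one must check, simultaneously for every pair of vertices of $\mathscr{G}$, that the cascade of elementary splittings creates an edge exactly when $\mathscr{G}$ has one, the delicate cases being copies $x^i,y^j$ of distinct vertices whose originals share more than one context, and copies of vertices processed before versus after $x$. Making the partition $(S,T,B)$ explicit as above, together with the rule that $B$ absorbs everything already built, turns this into a routine but somewhat lengthy inductive check.
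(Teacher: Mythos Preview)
Your proposal is correct and follows essentially the same approach as the paper's own proof: process the vertices of $X$ one at a time, and for a vertex $x$ lying in contexts $C_1,\dots,C_l$ perform $l-1$ successive binary splittings that peel off one context per step, always placing the already-produced copies $x^1,\dots,x^{j-1}$ in the ``both'' class $B$ so that they form a clique. Your explicit partitions $(S,T,B)$ coincide with the paper's (modulo what appears to be a typo in the paper's $B_{k+1}$), and your final paragraph carries out the cross-adjacency verification that the paper leaves implicit; in particular your observation that an already-split neighbour $y^a$ lies in exactly one $C_i^\ast$ is the key point that makes the induction close.
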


\begin{proof}
 Choose $x \in \mathrm{X}$ and let $C_1, \ldots , C_n$ be the contexts 
containing $x$. Then $\delta(x)$ contains \st{the} measurements in 
$\left[\cup_i C_i\right] \setminus C_1$. Starting with $\mathrm{G}$, the 
first operation is splitting $x$ into $x_1$ and $x_1'$ with respect 
to the partition \be\left(S_1 = C_1\setminus \left[\cup_{i>1} C_i\right] ,
 T_1 = \left[\cup_{i>1} C_i \right] \setminus C_1, B_1= \left[\cup_{i>1} 
C_i\right] \cap C_1\right).\ee Vertex  $x_1$ is connected to $S_1$, vertex 
$x_1'$ is connected to $T_1$ and both $x_1$ and $x_1'$ are connected to 
$B_1$. With this operation, we set $x_1$ as the copy of $x$ in 
$\mathscr{G}$ corresponding to context $C_1$. The next operation is split 
$x_1'$ into vertices $x_2$ and $x_2'$ with respect to partition \be 
\left(S_2=C_2\setminus\left[ \cup_{i>2}\right] C_i ,
 T_2=\left[\cup_{i>2} C_i\right] \setminus C_2,B_2=\left[\left[\cup_{i=2}^n 
C_i\right] \cap C_2 \right]\cup \left\{x_1\right\}\right).\ee With this 
operation, we set $x_2$ as the copy of $x$ in $\mathscr{G}$
 corresponding to context $C_2$. We proceed analogously, in each step 
splitting vertex $x_k'$ into $x_{k+1}$ and $x_{k+1}'$ with respect to the 
 partition 
 \begin{multline}\left( S_{k+1}=C_{k+1}\setminus \left[\cup_{i>{k+1}} C_i\right] , T_{k+1}=\left[\cup_{i>{k+1}} C_i \right]\setminus C_{k+1},\right. \\ \left.B_{k+1}=\left[\left[\cup_{i=k+1}^n C_i \right] \cap C_{k+1}\right]\cup \left\{x_1, \ldots , x_k\right\}\right).
 \end{multline} 
 With this chain of operations we eliminate vertex $x$
 and add the clique $x_1, \ldots , x_n$, each $x_i$ connected only to the vertices in context $C_i$ and the other $x_j$. Applying the same procedure
 to the other vertices in $X$ we recover $\mathscr{G}$.  A similar argument can be used for $\nabla \mathscr{G}$.
 \end{proof}

 A simple example of the procedure described in the previous proof is shown in Fig. \ref{fig:ex_splitting}.
 
 \begin{figure}[h!]
\begin{subfigure}{.33\columnwidth}
  \begin{center}
 \definecolor{qqccqq}{rgb}{0.5,0.95,0.5}
 \definecolor{zzccqq}{rgb}{0.7,0.9,0.2}
 \definecolor{wwccqq}{rgb}{0.4,0.8,0}
 \definecolor{qqwuqq}{rgb}{0,0.39215686274509803,0}
 \begin{tikzpicture}[scale=0.75,line cap=round,line join=round,>=triangle 45,x=1cm,y=1cm]
 \draw [line width=2pt] (1,4)-- (5,4);
 \draw [line width=2pt] (5,4)-- (3,2);
 \draw [line width=2pt] (3,2)-- (1,4);
 \draw [line width=2pt] (3,2)-- (1,0);
 \draw [line width=2pt] (1,0)-- (3,-2);
 \draw [line width=2pt] (3,-2)-- (5,0);
 \draw [line width=2pt] (5,0)-- (3,2);
 \draw [line width=2pt] (3,-2)-- (3,2);
 \draw [fill=qqwuqq] (3,2) circle (10pt);
 \draw[color=black] (3,2) node {$0$};
 \draw [fill=wwccqq] (1,4) circle (10pt);
 \draw[color=black] (1,4) node {$1$};
 \draw [fill=wwccqq] (5,4) circle (10pt);
 \draw[color=black] (5,4) node {$2$};
 \draw [fill=zzccqq] (1,0) circle (10pt);
 \draw[color=black] (1,0) node {$3$};
 \draw [fill=zzccqq] (5,0) circle (10pt);
 \draw[color=black] (5,0) node {$4$};
 \draw [fill=qqccqq] (3,-2) circle (10pt);
 \draw[color=black] (3,-2) node {$5$};
 \end{tikzpicture}
  \end{center}
  \caption{\footnotesize{Graph $\mathrm{G}$.}}
\end{subfigure}%
\begin{subfigure}{.33\columnwidth}
  \begin{center}
   \definecolor{qqccqq}{rgb}{0.5,0.95,0.5}
 \definecolor{zzccqq}{rgb}{0.7,0.9,0.2}
 \definecolor{wwccqq}{rgb}{0.4,0.8,0}
 \definecolor{qqwuqq}{rgb}{0,0.39215686274509803,0}
 \begin{tikzpicture}[scale=0.75,line cap=round,line join=round,>=triangle 45,x=1cm,y=1cm]
 \draw [line width=2pt] (1,4)-- (5,4);
 \draw [line width=2pt] (5,4)-- (3,2);
 \draw [line width=2pt] (3,2)-- (1,4);
 \draw [line width=2pt] (3,2)-- (1,0);
 \draw [line width=2pt] (1,0)-- (3,-2);
 \draw [line width=2pt] (3,-2)-- (5,0);
 \draw [line width=2pt] (0,-2)-- (3,-2);
  \draw [line width=2pt] (0,-2)-- (1,0);
    \draw [line width=2pt] (5,0)-- (1,0);
 \draw [fill=qqwuqq] (3,2) circle (10pt);
 \draw[color=black] (3,2) node {$0_1$};
 \draw [fill=wwccqq] (1,4) circle (10pt);
 \draw[color=black] (1,4) node {$1$};
 \draw [fill=wwccqq] (5,4) circle (10pt);
 \draw[color=black] (5,4) node {$2$};
 \draw [fill=qqwuqq] (1,0) circle (10pt);
 \draw[color=black] (1,0) node {$0_1'$};
 \draw [fill=zzccqq] (5,0) circle (10pt);
 \draw[color=black] (5,0) node {$4$};
 \draw [fill=qqccqq] (3,-2) circle (10pt);
 \draw[color=black] (3,-2) node {$5$};
 \draw [fill=zzccqq] (0,-2) circle (10pt);
 \draw[color=black] (0,-2) node {$3$};
 \end{tikzpicture}
  \end{center}
  \caption{\footnotesize{First splitting operation.}}
  \end{subfigure}
  \begin{subfigure}{.33\columnwidth}
  \begin{center}
   \definecolor{qqccqq}{rgb}{0.5,0.95,0.5}
 \definecolor{zzccqq}{rgb}{0.7,0.9,0.2}
 \definecolor{wwccqq}{rgb}{0.4,0.8,0}
 \definecolor{qqwuqq}{rgb}{0,0.39215686274509803,0}
 \begin{tikzpicture}[scale=0.75,line cap=round,line join=round,>=triangle 45,x=1cm,y=1cm]
 \draw [line width=2pt] (1,4)-- (5,4);
 \draw [line width=2pt] (5,4)-- (3,2);
\draw [line width=2pt] (3,2)-- (1,4);
 \draw [line width=2pt] (3,2)-- (1,0);
 \draw [line width=2pt] (1,0)-- (3,-2);
 \draw [line width=2pt] (3,-2)-- (5,0);
 \draw [line width=2pt] (5,0)-- (3,2);
 \draw [line width=2pt] (0,-2)-- (3,-2);
  \draw [line width=2pt] (0,-2)-- (1,0);
   \draw [line width=2pt] (6,-2)-- (3,-2);
  \draw [line width=2pt] (6,-2)-- (5,0);
    \draw [line width=2pt] (5,0)-- (1,0);
 \draw [fill=qqwuqq] (3,2) circle (10pt);
 \draw[color=black] (3,2) node {$0_1$};
 \draw [fill=wwccqq] (1,4) circle (10pt);
 \draw[color=black] (1,4) node {$1$};
 \draw [fill=wwccqq] (5,4) circle (10pt);
 \draw[color=black] (5,4) node {$2$};
 \draw [fill=qqwuqq] (1,0) circle (10pt);
 \draw[color=black] (1,0) node {$0_2$};
 \draw [fill=qqwuqq] (5,0) circle (10pt);
 \draw[color=black] (5,0) node {$0_3$};
 \draw [fill=qqccqq] (3,-2) circle (10pt);
 \draw[color=black] (3,-2) node {$5$};
 \draw [fill=zzccqq] (0,-2) circle (10pt);
 \draw[color=black] (0,-2) node {$3$};
  \draw [fill=zzccqq] (6,-2) circle (10pt);
 \draw[color=black] (6,-2) node {$4$};
 \end{tikzpicture}
  \end{center}
  \caption{\footnotesize{Second splitting operation.}}
  \end{subfigure}
  \caption{\footnotesize{(a) The compatibility graph of the scenario with 
measurements $0,\ldots , 5$ and contexts $C_1=\left\{0,1,2\right\}$, 
$C_2=\left\{0,3,5\right\}$
  and $C_3=\left\{0,4,5\right\}$. (b) Applying vertex splitting to vertex $0$ with respect to the partition $S_1= \left\{1,2\right\}$, $T_1=\left\{3,4,5\right\}$, $B_1=\emptyset$. Vertex $0_1$ is the copy of $0$ in $\mathscr{G}$ corresponding to context $C_1$. (c) Applying vertex splitting to vertex $0_1'$ with respect to the partition $S_2=
   \left\{3\right\}$, $T_1=\left\{4\right\}$, $B_1=\left\{0_1, 5\right\}$. This step generates vertices $0_2$ and $0_2'=0_3$, corresponding to contexts $C_2$ and $C_3$ 
   respectively. Applying a similar procedure to vertex $5$ we get 
$\mathscr{G}$.}  \label{fig:ex_splitting}}
\end{figure}

 Combining Prop. \ref{prop:split} and Thm. \ref{thm:split}, we have:
 
 \begin{cor}
 From valid inequalities for $\mathrm{CUT}\left(\nabla \mathrm{G}\right)$ we can generate necessary conditions for extended noncontextuality using 
 vertex splitting.
 \end{cor}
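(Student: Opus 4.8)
The plan is to chain Thm.~\ref{thm:split} (a purely graph-theoretic statement) with Prop.~\ref{prop:split} (the inequality-transfer statement) and then appeal to Corollary~\ref{cor:ext_comp_graph} to read off a necessary condition. Fix a compatibility scenario $\Gamma$ with compatibility graph $\mathrm{G}$, and let $A\cdot P\leq b$ be any inequality valid for $\mathrm{CUT}\!\left(\nabla\mathrm{G}\right)$. By Thm.~\ref{thm:split}, $\nabla\mathscr{G}$ is obtained from $\nabla\mathrm{G}$ by a finite sequence of vertex-splitting operations; write $\nabla\mathrm{G}=\mathsf{G}_0,\mathsf{G}_1,\ldots,\mathsf{G}_N=\nabla\mathscr{G}$, where $\mathsf{G}_{k+1}$ is obtained from $\mathsf{G}_k$ by splitting a vertex $w_k$ with respect to a partition $(\mathsf{S}_k,\mathsf{T}_k,\mathsf{B}_k)$ of its neighbourhood $\delta_{\mathsf{G}_k}(w_k)$ (these $w_k$ and partitions being exactly the ones exhibited in the proof of Thm.~\ref{thm:split}).

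First I would run an induction on $k$ to produce, at each stage, a valid inequality for $\mathrm{CUT}\!\left(\mathsf{G}_k\right)$. Set $A^{(0)}\cdot P^{(0)}\leq b := A\cdot P\leq b$, which is valid for $\mathrm{CUT}\!\left(\mathsf{G}_0\right)$ by hypothesis. Given a valid inequality $A^{(k)}\cdot P^{(k)}\leq b$ for $\mathrm{CUT}\!\left(\mathsf{G}_k\right)$, let $A^{(k+1)}\cdot P^{(k+1)}\leq b$ be its vertex splitting with respect to $w_k$ and $(\mathsf{S}_k,\mathsf{T}_k,\mathsf{B}_k)$ in the sense of the definition preceding Prop.~\ref{prop:split}; that proposition then gives that $A^{(k+1)}\cdot P^{(k+1)}\leq b$ is valid for $\mathrm{CUT}\!\left(\mathsf{G}_{k+1}\right)$. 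After $N$ steps we arrive at an inequality $A^{(N)}\cdot P^{(N)}\leq b$ valid for $\mathrm{CUT}\!\left(\nabla\mathscr{G}\right)$.

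Finally I would convert this into the advertised necessary condition. By Corollary~\ref{cor:ext_comp_graph}, if a behavior $\mathrm{B}$ is noncontextual in the extended sense then some extended behavior $\mathscr{B}$ of $\mathrm{B}$ has $\mathrm{P}_{\mathscr{B}}\in\mathrm{CUT}\!\left(\nabla\mathscr{G}\right)$, whence $A^{(N)}\cdot\mathrm{P}_{\mathscr{B}}\leq b$. Contrapositively, if $A^{(N)}\cdot\mathrm{P}_{\mathscr{B}}>b$ for every extended behavior $\mathscr{B}$ of $\mathrm{B}$, then $\mathrm{B}$ is contextual in the extended sense; this is precisely a necessary condition for extended noncontextuality, obtained from the original inequality for $\mathrm{CUT}\!\left(\nabla\mathrm{G}\right)$ by vertex splitting.

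The main obstacle is not conceptual but a matter of bookkeeping inside the induction: one must check that the vertex-splitting-for-inequalities operation is actually applicable at every step. Concretely, the definition preceding Prop.~\ref{prop:split} normalizes so that $\sum_{v\in\mathsf{T}_k}\lvert A^{(k)}_{w_kv}\rvert\leq\sum_{v\in\mathsf{S}_k}\lvert A^{(k)}_{w_kv}\rvert$; this is without loss of generality, since swapping the roles of the two new vertices $s$ and $t$ (equivalently, swapping $\mathsf{S}_k$ and $\mathsf{T}_k$) leaves $\mathsf{G}_{k+1}$ unchanged up to isomorphism and preserves validity. One also must verify that the triples $(\mathsf{S}_k,\mathsf{T}_k,\mathsf{B}_k)$ coming from Thm.~\ref{thm:split} — in which the already-created copies $x_1,\ldots,x_k$ of a vertex are placed in the block $\mathsf{B}_k$ — really do partition $\delta_{\mathsf{G}_k}(w_k)$, so that the hypotheses of Prop.~\ref{prop:split} are met verbatim. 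Once these routine checks are in place, the finite iteration goes through and yields the claimed necessary condition.
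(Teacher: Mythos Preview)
Your proposal is correct and follows exactly the paper's approach: the paper simply states that the corollary follows by ``combining Prop.~\ref{prop:split} and Thm.~\ref{thm:split}'', and your induction makes that combination explicit, with the additional (and appropriate) appeal to Corollary~\ref{cor:ext_comp_graph} to convert the resulting valid inequality for $\mathrm{CUT}(\nabla\mathscr{G})$ into a necessary condition for extended noncontextuality. The bookkeeping checks you flag (the $\mathsf{S}/\mathsf{T}$ normalization and that the triples from Thm.~\ref{thm:split} partition the neighbourhood) are routine and do not alter the argument.
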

 
 \subsection{Triangular Elimination and Edge Contraction}
 
 
 \begin{dfn}[Edge contraction for graphs]
 Let $\mathsf{G}=\left(\mathsf{V},\mathsf{E}\right)$ be a graph, $w 
\notin \mathsf{V}$, and $uv \in E$. The graph 
$\mathsf{G}'=\left(\mathsf{V}',\mathsf{E}'\right)$
 is a \emph{contraction} of $\mathsf{G}$ at edge $uv$ if $\mathsf{V}'= 
\left[\mathsf{V}\setminus \left\{u,v\right\}\right]\cup\left\{w\right\}$ 
and 
 $ \displaystyle \mathsf{E}'= \left[\mathsf{E} \setminus \left[ \left\{uv\right\}\cup  \left\{ux| x \in \delta(u)\right\} \cup  \left\{vx| x \in \delta(v)\right\}\right]
\right] \cup \left\{wx| x \in \delta(u) \cup \delta(v)\right\}.$
 \label{dfn:edge_contr_graphs}  
 \end{dfn}

 A simple example of this operation is shown in 
Fig.\ref{fig:contraction}.

  \begin{figure}[h!]
\begin{subfigure}{.5\columnwidth}
  \begin{center}
\definecolor{ccccff}{rgb}{0.8,0.8,1}
\definecolor{wwqqcc}{rgb}{0.4,0,0.8}
\begin{tikzpicture}[line cap=round,line join=round,>=triangle 45,x=1cm,y=1cm]
\draw [line width=2pt] (0,0)-- (0,2);
\draw [line width=2pt] (0,2)-- (2,1);
\draw [line width=2pt] (2,1)-- (0,0);
\draw [line width=2pt] (2,1)-- (4,1);
\draw [line width=2pt] (6,0)-- (4,1);
\draw [line width=2pt] (4,1)-- (6,2);
\draw [line width=2pt] (6,2)-- (6,0);
\draw [fill=wwqqcc] (0,2) circle (10pt);
\draw [fill=wwqqcc] (0,0) circle (10pt);
\draw [fill=ccccff] (2,1) circle (10pt);
\draw[color=black] (2,1) node {$0$};
\draw [fill=ccccff] (4,1) circle (10pt);
\draw[color=black] (4,1) node {$1$};
\draw [fill=wwqqcc] (6,2) circle (10pt);
\draw [fill=wwqqcc] (6,0) circle (10pt);
\end{tikzpicture}
  \end{center}
  \caption{\footnotesize{Graph $\mathsf{G}$.}}
\end{subfigure}%
\begin{subfigure}{.5\columnwidth}
  \begin{center}
\definecolor{ccccff}{rgb}{0.8,0.8,1}
\definecolor{wwqqcc}{rgb}{0.4,0,0.8}
\begin{tikzpicture}[line cap=round,line join=round,>=triangle 45,x=1cm,y=1cm]
\draw [line width=2pt] (10,0)-- (10,2);
\draw [line width=2pt] (10,2)-- (12,1);
\draw [line width=2pt] (12,1)-- (14,2);
\draw [line width=2pt] (14,2)-- (14,0);
\draw [line width=2pt] (14,0)-- (12,1);
\draw [line width=2pt] (12,1)-- (10,0);
\begin{scriptsize}
\draw [fill=wwqqcc] (10,2) circle (10pt);
\draw [fill=wwqqcc] (10,0) circle (10pt);
\draw [fill=ccccff] (12,1) circle (10pt);
\draw [fill=wwqqcc] (14,2) circle (10pt);
\draw [fill=wwqqcc] (14,0) circle (10pt);
\end{scriptsize}
\end{tikzpicture}
  \end{center}
  \caption{\footnotesize{Graph $\mathsf{G}'$.}}
  \end{subfigure}
  \caption{\footnotesize{Contraction of graph $\mathsf{G}$ at the edge 
connecting vertices $0$ and $1$.}  \label{fig:contraction}}
\end{figure}
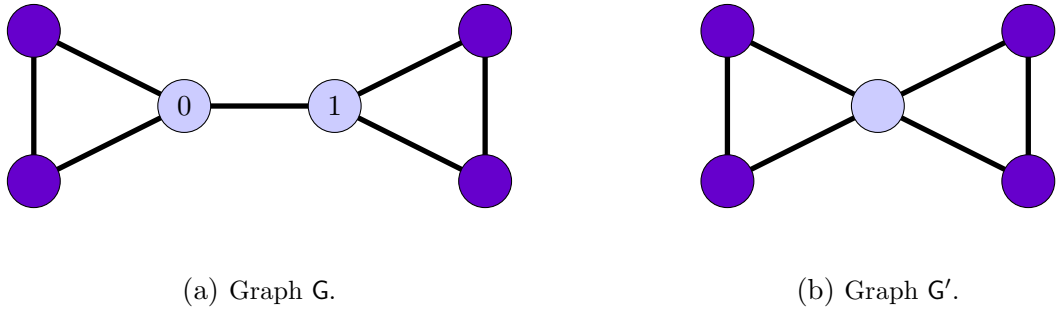

\begin{dfn}[Edge contraction for inequalities]
 Let $\mathsf{G}=\left(\mathsf{V}, \mathsf{E}\right)$ be a graph, $uv \in \mathsf{E}$
 and $A \cdot P \leq b$ be an inequality valid for $\mathrm{CUT}\left(\mathsf{G}\right)$.
Define $A'$ in the following way:
\begin{eqnarray}
 A'_{xy} & = & A_{xy}, \ x,y \neq w \\
 A'_{wx}&=& A_{ux}, \ x \in \delta(u) \setminus \delta(v)\\
 A'_{wx}&=&A_{vx},  \ x \in \delta(v) \setminus \delta(u)\\
 A'_{wx}&=&A_{ux}+ A_{vx}, \ x \in \delta(u) \cap 
\delta(v).
\end{eqnarray}
The inequality $A' \cdot P \leq b$ is called the \emph{contraction} of $A 
\cdot P \leq b$ at the edge $uv$.
\end{dfn}

\begin{prop}[Edge contraction lemma~\cite{AIT06,DL97}]
 If $\mathsf{G}'$ and $A' \cdot P \leq b$ are contractions of  $\mathsf{G}$ and $A \cdot P \leq b$, respectively,  at edge $uv$ and 
 $A \cdot P \leq b$  is a valid for $\mathrm{CUT}\left(\mathsf{G}\right)$, the inequality $A' \cdot P \leq b$ is valid for
 $\mathrm{CUT}\left(\mathsf{G}'\right)$.
\end{prop}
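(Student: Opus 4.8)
The plan is to prove validity on the extreme points of $\mathrm{CUT}\left(\mathsf{G}'\right)$ — the cut vectors $\delta_{\mathsf{G}'}(S')$, $S' \subseteq \mathsf{V}'$ — and then extend to the whole polytope by convexity, exactly in the spirit of the argument behind Prop.~\ref{prop:tri}. The device is a \emph{lift} of cuts. Given $S' \subseteq \mathsf{V}'$, I would define $S \subseteq \mathsf{V}$ by ``un-contracting'' $w$: keep every vertex of $\mathsf{V}' \setminus \{w\} = \mathsf{V} \setminus \{u,v\}$ exactly as it is in $S'$, and place $u$ and $v$ \emph{together} on one side, namely $u,v \in S$ if $w \in S'$ and $u,v \notin S$ otherwise. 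The point is that under this lift the cut vector $\delta_{\mathsf{G}'}(S')$ of $\mathsf{G}'$ pulls back to the cut vector $\delta_{\mathsf{G}}(S)$ of $\mathsf{G}$ in a way that matches the coefficient transformation used to build $A'$.

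The next step is a coordinate-by-coordinate check that $A \cdot \delta_{\mathsf{G}}(S) = A' \cdot \delta_{\mathsf{G}'}(S')$. For an edge $xy$ with $x,y \notin \{u,v\}$ nothing has changed: $\delta_{\mathsf{G}}(S)_{xy} = \delta_{\mathsf{G}'}(S')_{xy}$ and $A_{xy} = A'_{xy}$. The edge $uv$ of $\mathsf{G}$ contributes nothing, since $u$ and $v$ lie on the same side of $S$, so $\delta_{\mathsf{G}}(S)_{uv} = 0$. For $x \in \delta(u) \setminus \delta(v)$, since $u \in S \iff w \in S'$ we get $\delta_{\mathsf{G}}(S)_{ux} = \delta_{\mathsf{G}'}(S')_{wx}$, which is paired with $A'_{wx} = A_{ux}$; symmetrically for $x \in \delta(v) \setminus \delta(u)$. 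Finally, for $x \in \delta(u) \cap \delta(v)$ both edges $ux$ and $vx$ exist in $\mathsf{G}$ and both satisfy $\delta_{\mathsf{G}}(S)_{ux} = \delta_{\mathsf{G}}(S)_{vx} = \delta_{\mathsf{G}'}(S')_{wx}$, so their combined contribution $(A_{ux} + A_{vx})\,\delta_{\mathsf{G}'}(S')_{wx}$ equals $A'_{wx}\,\delta_{\mathsf{G}'}(S')_{wx}$ by the defining formula for $A'$. Summing over all edges of $\mathsf{G}$ gives the desired identity.

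Once this identity is in hand the conclusion is immediate: $A \cdot P \leq b$ is valid for $\mathrm{CUT}\left(\mathsf{G}\right)$ and $\delta_{\mathsf{G}}(S)$ is one of its vertices, so $A' \cdot \delta_{\mathsf{G}'}(S') = A \cdot \delta_{\mathsf{G}}(S) \leq b$; since this holds for every $S' \subseteq \mathsf{V}'$ and $\mathrm{CUT}\left(\mathsf{G}'\right)$ is the convex hull of the cut vectors, $A' \cdot P \leq b$ holds on all of $\mathrm{CUT}\left(\mathsf{G}'\right)$.

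The main obstacle is purely bookkeeping: one must organize the case split over the neighbourhood of $w$ (neighbours of $u$ only, of $v$ only, of both, plus the now-absent edge $uv$) so that every edge of $\mathsf{G}$ is counted once and matched with the right coordinate of $\mathsf{G}'$, and one should note explicitly that the contracted graph is still simple, so that $A'_{wx}$ is a single well-defined coefficient rather than a sum over parallel edges. Beyond this careful accounting there is no genuine difficulty; the proof reduces to the cut-vector computation above.
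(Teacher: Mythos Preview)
Your proof is correct. The paper does not actually prove this proposition; it is stated with a citation to \cite{AIT06,DL97} and used without proof, as a standard fact about cut polytopes. Your argument --- lifting each cut $S' \subseteq \mathsf{V}'$ to a cut $S \subseteq \mathsf{V}$ by placing $u$ and $v$ together on the side of $w$, checking edge-by-edge that $A \cdot \delta_{\mathsf{G}}(S) = A' \cdot \delta_{\mathsf{G}'}(S')$, and then extending by convexity --- is exactly the standard proof found in those references, and your case analysis is complete and accurate.
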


\begin{thm}
 The extended compatibility graph $\mathscr{G}$ ant its 
suspension graph $\nabla \mathscr{G}$ can be obtained from  $\mathrm{G}$ 
and $\nabla \mathrm{G}$, respectively, using triangular 
 elimination and edge contraction.
\end{thm}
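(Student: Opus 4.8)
The plan is to piggyback on Theorem~\ref{thm:split}, which has just shown that $\mathscr{G}$ and $\nabla\mathscr{G}$ are produced from $\mathrm{G}$ and $\nabla\mathrm{G}$ by a finite sequence of vertex splittings. It then suffices to prove the purely graph-theoretic fact that \emph{a single vertex splitting can be simulated by one triangular elimination followed by a sequence of edge contractions}; concatenating these simulations over all the splittings in the proof of Theorem~\ref{thm:split} yields the required sequence of triangular eliminations and edge contractions carrying $\mathrm{G}$ to $\mathscr{G}$, and likewise for the suspensions.

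To prove this fact, fix a graph $\mathsf{G}$, a vertex $w$, and a partition $(\mathsf{S},\mathsf{T},\mathsf{B})$ of $\delta(w)$ with $\mathsf{T}\neq\emptyset$ (which, by the antichain property of $\mathcal{C}$, holds for every splitting appearing in the proof of Theorem~\ref{thm:split}). First I would perform a triangular elimination with respect to $\mathsf{F}=\{wv\mid v\in\mathsf{T}\}$: delete each such edge and introduce a fresh vertex $m_v$ adjacent to $w$ and to $v$, and --- exploiting the latitude in Definition~\ref{def:triangular_elim_for_graphs} to attach arbitrary further edges at the new vertices --- also make the $m_v$ pairwise adjacent and join each $m_v$ to every vertex of $\mathsf{B}$. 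Second, I would contract the clique on $\{m_v\mid v\in\mathsf{T}\}$ to a single vertex by $|\mathsf{T}|-1$ edge contractions; call the survivor $t$ and relabel $w$ as $s$. By Definition~\ref{dfn:edge_contr_graphs} the vertex $t$ inherits the neighbourhood $\bigcup_{v\in\mathsf{T}}\bigl(\delta(m_v)\setminus\{m_{v'}\}_{v'\in\mathsf{T}}\bigr)=\{s\}\cup\mathsf{T}\cup\mathsf{B}$, while $s$ has lost precisely its edges into $\mathsf{T}$ and gained the edge $st$, so $\delta(s)=\mathsf{S}\cup\mathsf{B}\cup\{t\}$, and no edge disjoint from $w$ has changed; this is exactly the graph obtained from $\mathsf{G}$ by splitting $w$ into $s,t$ with respect to $(\mathsf{S},\mathsf{T},\mathsf{B})$.

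For the suspensions there is nothing new to do: in the proof of Theorem~\ref{thm:split} the apex $u$ is a neighbour of every vertex being split and is always placed in the part $\mathsf{B}$ of the partition, and the simulation above keeps every element of $\mathsf{B}$ adjacent to \emph{both} new copies, so $u$ stays adjacent to all copies of each split vertex --- exactly the adjacency pattern of $\nabla\mathscr{G}$.

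I expect the one delicate point --- the ``main obstacle'' --- to be the careful verification that the triangular elimination above is admissible in the strict sense of Definition~\ref{def:triangular_elim_for_graphs}, i.e.\ that the only old edges destroyed are those in $\mathsf{F}$ (so all edges inside $\mathsf{S}$, $\mathsf{T}$, $\mathsf{B}$ and from $w$ to $\mathsf{S}\cup\mathsf{B}$ survive) and that the auxiliary edges pinned at the $m_v$ are of the permitted new-vertex type, together with the bookkeeping of neighbourhoods through the successive contractions --- one must check that no $m_v$, and hence no $t$, ever becomes adjacent to $\mathsf{S}$ or to a vertex outside $\delta(w)$, and that no self-loop is created. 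Once these routine checks are in place the theorem follows, and as a by-product one sees that, at the level of graphs, the triangular-elimination/edge-contraction route subsumes the vertex-splitting route.
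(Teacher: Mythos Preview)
Your argument is correct but takes a genuinely different route from the paper. The paper does not pass through Theorem~\ref{thm:split} at all: it applies the triangular-elimination procedure of Theorem~\ref{thm:extd_graph_elimintion} verbatim to $\mathrm{G}$, observes that when some context has three or more elements this produces one copy of each vertex $v$ for every \emph{neighbour} in $\delta(v)$ rather than one copy for every \emph{context} containing $v$, and then merges the surplus copies within each context by edge contractions to recover $\mathscr{G}$ (and similarly for the suspensions). You instead factor through the vertex-splitting construction and prove the general graph-theoretic lemma that any single vertex splitting with $\mathsf{T}\neq\emptyset$ can be simulated by one triangular elimination (on $\mathsf{F}=\{wv:v\in\mathsf{T}\}$, using the freedom in Definition~\ref{def:triangular_elim_for_graphs} to attach the new vertices to each other and to $\mathsf{B}$) followed by $|\mathsf{T}|-1$ edge contractions. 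The paper's route is shorter and reuses the explicit construction already given in Theorem~\ref{thm:extd_graph_elimintion}; your route is slightly longer but yields the cleaner abstract statement that, at the level of graphs, vertex splitting is subsumed by triangular elimination plus edge contraction --- which, as you note, is of independent interest and makes the logical dependence between the three operations transparent.
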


\begin{proof}
 When some contexts have three elements or more, the problem with the construction of Thm. \ref{thm:extd_graph_elimintion} is that we have a copy for 
 $v \in X$ for each vertex in $\delta(v)$ instead of one copy for each context containing $v$. From this graph we can obtain $\mathrm{G}$ 
 identifying these extra copies contracting the corresponding edges. A similar argument can be used for $\nabla \mathscr{G}$.
\end{proof}

A simple example of this procedure is shown in Fig. \ref{fig:te+c}. As an  corollary, we have the following: 

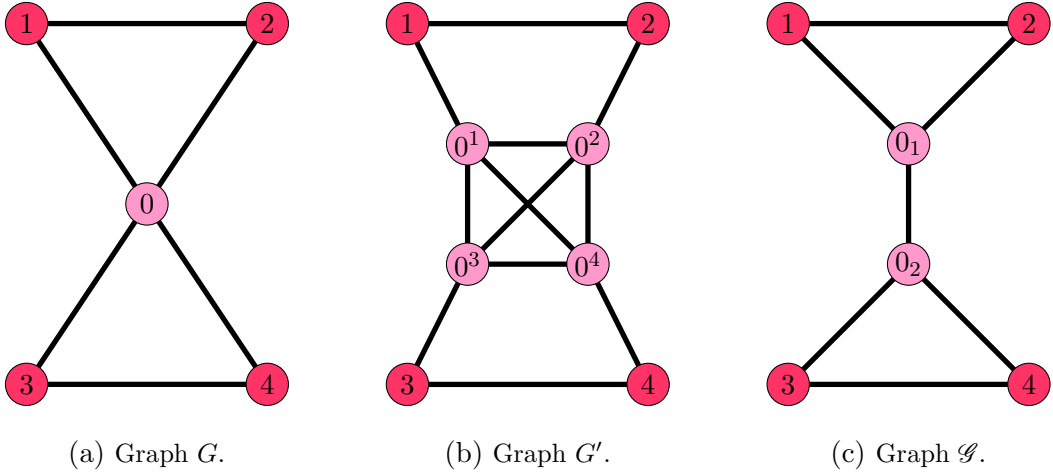
\begin{figure}
\begin{subfigure}{.3\columnwidth}
\definecolor{ffzzcc}{rgb}{1,0.6,0.8}
\definecolor{ffttww}{rgb}{1,0.2,0.4}
\begin{tikzpicture}[scale=0.8,line cap=round,line join=round,>=triangle 45,x=1cm,y=1cm, rotate=90]
\draw [line width=2pt] (-2,0)-- (-2,4);
\draw [line width=2pt] (-2,4)-- (1,2);
\draw [line width=2pt] (1,2)-- (-2,0);
\draw [line width=2pt] (1,2)-- (4,4);
\draw [line width=2pt] (4,4)-- (4,0);
\draw [line width=2pt] (4,0)-- (1,2);
\draw [fill=ffttww] (-2,4) circle (10pt);
\draw[color=black] (-2,4) node {$3$};
\draw [fill=ffttww] (-2,0) circle (10pt);
\draw[color=black] (-2,0) node {$4$};
\draw [fill=ffzzcc] (1,2) circle (10pt);
\draw[color=black] (1,2) node {$0$};
\draw [fill=ffttww] (4,4) circle (10pt);
\draw[color=black] (4,4) node {$1$};
\draw [fill=ffttww] (4,0) circle (10pt);
\draw[color=black] (4,0) node {$2$};
\end{tikzpicture}
\caption{\footnotesize{Graph $G$.}}
\end{subfigure}
\begin{subfigure}{.3\columnwidth}
\definecolor{ffzzcc}{rgb}{1,0.6,0.8}
\definecolor{ffttww}{rgb}{1,0.2,0.4}
\begin{tikzpicture}[scale=0.8,line cap=round,line join=round,>=triangle 45,x=1cm,y=1cm, rotate=90]
\draw [line width=2pt] (8,0)-- (10,1);
\draw [line width=2pt] (10,3)-- (8,4);
\draw [line width=2pt] (10,3)-- (12,3);
\draw [line width=2pt] (12,3)-- (12,1);
\draw [line width=2pt] (12,1)-- (10,1);
\draw [line width=2pt] (10,1)-- (10,3);
\draw [line width=2pt] (10,3)-- (12,1);
\draw [line width=2pt] (12,3)-- (10,1);
\draw [line width=2pt] (8,0)-- (8,4);
\draw [line width=2pt] (12,3)-- (14,4);
\draw [line width=2pt] (12,1)-- (14,0);
\draw [line width=2pt] (14,0)-- (14,4);
\draw [fill=ffttww] (8,4) circle (10pt);
\draw[color=black] (8,4) node {$3$};
\draw [fill=ffttww] (8,0) circle (10pt);
\draw[color=black] (8,0) node {$4$};
\draw [fill=ffzzcc] (10,3) circle (10pt);
\draw[color=black] (10,3) node {$0^3$};
\draw [fill=ffzzcc] (10,1) circle (10pt);
\draw[color=black] (10,1) node {$0^4$};
\draw [fill=ffzzcc] (12,3) circle (10pt);
\draw[color=black] (12,3) node {$0^1$};
\draw [fill=ffzzcc] (12,1) circle (10pt);
\draw[color=black] (12,1) node {$0^2$};
\draw [fill=ffttww] (14,4) circle (10pt);
\draw[color=black] (14,4) node {$1$};
\draw [fill=ffttww] (14,0) circle (10pt);
\draw[color=black] (14,0) node {$2$};
\end{tikzpicture}
\caption{\footnotesize{Graph $G'$.}}
\end{subfigure}
\begin{subfigure}{.3\columnwidth}
\definecolor{ffzzcc}{rgb}{1,0.6,0.8}
\definecolor{ffttww}{rgb}{1,0.2,0.4}
\begin{tikzpicture}[scale=0.8,line cap=round,line join=round,>=triangle 45,x=1cm,y=1cm, rotate=90]
\draw [line width=2pt] (5,-6)-- (3,-4);
\draw [line width=2pt] (3,-4)-- (3,-8);
\draw [line width=2pt] (3,-8)-- (5,-6);
\draw [line width=2pt] (5,-6)-- (7,-6);
\draw [line width=2pt] (7,-6)-- (9,-4);
\draw [line width=2pt] (9,-4)-- (9,-8);
\draw [line width=2pt] (9,-8)-- (7,-6);
\draw [fill=ffttww] (3,-4) circle (10pt);
\draw[color=black] (3,-4) node {$3$};
\draw [fill=ffttww] (3,-8) circle (10pt);
\draw[color=black] (3,-8) node {$4$};
\draw [fill=ffttww] (9,-8) circle (10pt);
\draw[color=black] (9,-8) node {$2$};
\draw [fill=ffttww] (9,-4) circle (10pt);
\draw[color=black] (9,-4) node {$1$};
\draw [fill=ffzzcc] (5,-6) circle (10pt);
\draw[color=black] (5,-6) node {$0_2$};
\draw [fill=ffzzcc] (7,-6) circle (10pt);
\draw[color=black] (7,-6) node {$0_1$};
\end{tikzpicture}
\caption{\footnotesize{Graph $\mathscr{G}$.}}
\end{subfigure}
\caption{\footnotesize{(a) Compatibility graph of the scenario with 
measurements $0, \ldots , 4$ and contexts $C_1=\left\{0,1,2\right\}$ and 
$C_2=\left\{0,3,4\right\}$.
(b) The graph $G'$ obtained from $G$ after applying the procedure described in Thm. \ref{thm:extd_graph_elimintion}. Notice that this is not the 
the extended compatibility graph of the scenario, since there are four copies of $0$ instead of two. (c) The extended compatibility graph of the scenario is obtained
contracting the edges $0^10^2$, which gives vertex $0_1$ (the copy of 
vertex $0$ corresponding to context $C_1$), and $0^30^4$, which gives 
vertex $0_2$ (the copy of vertex $0$ corresponding to context $C_2$).}}
\label{fig:te+c}
\end{figure}

\begin{cor}
 Valid inequalities for $\mathscr{G}$ can be generated combining triangular elimination and edge contraction
 of  valid inequalities for $\mathrm{G}$.
\end{cor}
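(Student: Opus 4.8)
The plan is to chain together the structural theorem just proved — that $\mathscr{G}$ (resp.\ $\nabla\mathscr{G}$) is obtained from $\mathrm{G}$ (resp.\ $\nabla\mathrm{G}$) by a finite sequence of triangular eliminations followed by edge contractions — with the two inequality-lifting results already available: Proposition~\ref{prop:tri} for triangular elimination and the edge contraction lemma stated above. The argument is purely a matter of composing these operations in the right order.

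Concretely, I would fix a valid inequality $A\cdot P\le b$ for $\mathrm{CUT}(\mathrm{G})$ and write the sequence of graph operations furnished by the structural theorem as $\mathrm{G}=\mathsf{G}_0\to\mathsf{G}_1\to\cdots\to\mathsf{G}_N=\mathscr{G}$, where each step is either a triangular elimination with respect to some set of edges of $\mathsf{G}_k$ or a contraction at some edge $u_kv_k$ of $\mathsf{G}_k$. One then builds inductively inequalities $A_k\cdot P\le b_k$, each valid for $\mathrm{CUT}(\mathsf{G}_k)$: set $(A_0,b_0)=(A,b)$; at a triangular-elimination step let $A_{k+1}\cdot P\le b_{k+1}$ be a triangular elimination of $A_k\cdot P\le b_k$, which is valid for $\mathrm{CUT}(\mathsf{G}_{k+1})$ by Proposition~\ref{prop:tri}; at a contraction step let $A_{k+1}\cdot P\le b_k$ be the contraction of $A_k\cdot P\le b_k$ at $u_kv_k$, which is valid for $\mathrm{CUT}(\mathsf{G}_{k+1})$ by the edge contraction lemma. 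After $N$ steps this yields an inequality valid for $\mathrm{CUT}(\mathscr{G})$, and running the identical argument along the chain $\nabla\mathrm{G}\to\cdots\to\nabla\mathscr{G}$ gives the suspension-graph version that is actually used to witness extended contextuality.

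The only point requiring a little care — and the main, though mild, obstacle — is consistency of the bookkeeping along the chain: one must verify that whenever the structural theorem prescribes a contraction at $u_kv_k$, this edge really belongs to $\mathsf{G}_k$ and has not been altered by an earlier step, and likewise that the edges used at each triangular-elimination step are genuine edges of the current graph. In the construction preceding the corollary this is transparent, since all triangular eliminations are performed first and create, for each original vertex, a clique of copies $x^{i_1},x^{i_2},\dots$; the contraction phase then only merges pairs of these copies along clique edges, which are disjoint from the data modified at other vertices, so the operations do not interfere and the composition is well defined.

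Finally, in analogy with Corollary~\ref{cor:necessary_condt}, I would point out that by choosing at each triangular-elimination step the \emph{tight} lifting assembled from inequalities of the form~\eqref{eq:inq_te_2} — which hold with equality on non-disturbing behaviors, because then $\langle x^i x^j\rangle=1$ for any two copies of the same measurement — and noting that edge contraction leaves the right-hand side untouched, the family of valid inequalities for $\mathrm{CUT}(\nabla\mathscr{G})$ produced in this way always contains one that reduces to the original $A\cdot P\le b$ whenever the non-disturbance condition holds. This is exactly what guarantees that the resulting necessary conditions for extended noncontextuality recover the usual noncontextuality inequalities in the ideal, non-disturbing case.
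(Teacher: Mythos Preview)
Your proposal is correct and follows exactly the approach implicit in the paper: the corollary is stated there without proof, as an immediate consequence of the preceding structural theorem together with Proposition~\ref{prop:tri} and the edge contraction lemma, and your argument simply spells out this composition. The additional remarks you make about bookkeeping and about recovering the original inequality in the non-disturbing case are sound elaborations beyond what the paper writes, but they do not depart from its line of reasoning.
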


This provides another tool to derive necessary conditions for extended noncontextuality in any scenario.

\section{The Peres-Mermin inequality}

Although the cut polytope provides a powerful tool to derive necessary conditions for contextuality, both in the standard and in the extended sense, 
it is not enough to characterize completely the set of noncontextual distributions in scenarios with contexts containing more then two random variables, since  there are
contextual behaviors that can not be detected when we look only to the
binary expectation values of Eq. \eqref{eq:p2}, that is, there are 
contextual  behaviors $B$ for which 
$P_B\in \mathrm{CUT}\left( \nabla G\right)$ \cite{GWAN11}. 

With this in mind, it would be useful to find strategies to derive necessary conditions for extended contextuality from inequalities that involve expectation values 
with more than two random variables. In what follows, we show that this is 
possible with  a simple procedure, similar to triangular elimination, using 
the Peres-Mermin inequality as an example. 

The Peres-Mermin square is a contextuality scenario with nine measurements 
$A_i$, $i=1, \ldots 9$, with outcomes $\pm 1$, and compatibility hypergraph 
shown in Fig. \eqref{fig:Peres_Mermin}. These measurements can be chosen 
in quantum theory in such a way that the product of the three measurements 
in each line and in the first two columns is equal to the 
identity operator $I$, while the product of the measurements in the last 
column is equal to $-I$. 

\begin{figure}[h!]
  \begin{center}
\definecolor{yqqqqq}{rgb}{0.5019607843137255,0,0}
\definecolor{qqzzff}{rgb}{0,0.6,1}
\definecolor{wwqqcc}{rgb}{0.6,0,1}
\definecolor{qqffqq}{rgb}{0,1,0}
\definecolor{ffttww}{rgb}{1,0.2,0.4}
\definecolor{qqffff}{rgb}{0,1,1}
\definecolor{ffwwqq}{rgb}{1,0.4,0}
\definecolor{ffdxqq}{rgb}{1,0.8431372549019608,0}
\definecolor{qqwuqq}{rgb}{0,0.39215686274509803,0}
\definecolor{ffqqqq}{rgb}{1,0,0}
\definecolor{qqqqff}{rgb}{0.2,0.3,1}
\begin{tikzpicture}[scale=0.8,line cap=round,line join=round,>=triangle 45,x=1cm,y=1cm]
\draw [line width=1pt,dash pattern=on 1pt off 3pt,color=qqzzff] (-1,2)-- (-1,0);
\draw [line width=1pt,dash pattern=on 1pt off 3pt,color=qqzzff] (-1,0)-- (9.020188903166318,0.009276223793417249);
\draw [line width=1pt,dash pattern=on 1pt off 3pt,color=qqzzff] (9.020188903166318,0.009276223793417249)-- (9,2);
\draw [line width=1pt,dash pattern=on 1pt off 3pt,color=qqzzff] (9,2)-- (-1,2);
\draw [line width=1pt,dash pattern=on 1pt off 3pt,color=qqzzff] (-1,-1)-- (-1,-3);
\draw [line width=1pt,dash pattern=on 1pt off 3pt,color=qqzzff] (-1,-3)-- (9,-3);
\draw [line width=1pt,dash pattern=on 1pt off 3pt,color=qqzzff] (9,-3)-- (9,-1);
\draw [line width=1pt,dash pattern=on 1pt off 3pt,color=qqzzff] (9,-1)-- (-1,-1);
\draw [line width=1pt,dash pattern=on 1pt off 3pt,color=qqzzff] (-1,-4)-- (-1,-6);
\draw [line width=1pt,dash pattern=on 1pt off 3pt,color=qqzzff] (-1,-6)-- (9,-6);
\draw [line width=1pt,dash pattern=on 1pt off 3pt,color=qqzzff] (9,-6)-- (9,-4);
\draw [line width=1pt,dash pattern=on 1pt off 3pt,color=qqzzff] (9,-4)-- (-1,-4);
\draw [line width=1pt,dash pattern=on 1pt off 3pt,color=qqzzff] (0,3)-- (2,3);
\draw [line width=1pt,dash pattern=on 1pt off 3pt,color=qqzzff] (2,3)-- (2,-7);
\draw [line width=1pt,dash pattern=on 1pt off 3pt,color=qqzzff] (2,-7)-- (0,-7);
\draw [line width=1pt,dash pattern=on 1pt off 3pt,color=qqzzff] (0,-7)-- (0,3);
\draw [line width=1pt,dash pattern=on 1pt off 3pt,color=qqzzff] (3,3)-- (5,3);
\draw [line width=1pt,dash pattern=on 1pt off 3pt,color=qqzzff] (5,3)-- (5,-7);
\draw [line width=1pt,dash pattern=on 1pt off 3pt,color=qqzzff] (5,-7)-- (3,-7);
\draw [line width=1pt,dash pattern=on 1pt off 3pt,color=qqzzff] (3,-7)-- (3,3);
\draw [line width=1pt,dash pattern=on 1pt off 3pt,color=yqqqqq] (6,3)-- (8,3);
\draw [line width=1pt,dash pattern=on 1pt off 3pt,color=yqqqqq] (8,3)-- (8,-7);
\draw [line width=1pt,dash pattern=on 1pt off 3pt,color=yqqqqq] (8,-7)-- (6,-7);
\draw [line width=1pt,dash pattern=on 1pt off 3pt,color=yqqqqq] (6,-7)-- (6,3);
\begin{large}
\draw [qqqqff,fill=qqqqff] (1,1) circle (20pt);
\draw[color=black] (1,1) node {$A_1$};
\draw [ffqqqq, fill=ffqqqq] (4,1) circle (20pt);
\draw[color=black] (4,1) node {$A_2$};
\draw [fill=qqwuqq, qqwuqq] (7,1) circle (20pt);
\draw[color=black] (7,1) node {$A_3$};
\draw [ffdxqq,fill=ffdxqq] (1,-2) circle (20pt);
\draw[color=black] (1,-2) node {$A_4$};
\draw [ffwwqq,fill=ffwwqq] (4,-2) circle (20pt);
\draw[color=black] (4,-2) node {$A_5$};
\draw [fill=qqffff, qqffff] (7,-2) circle (20pt);
\draw[color=black] (7,-2) node {$A_6$};
\draw [fill=ffttww, ffttww] (1,-5) circle (20pt);
\draw[color=black] (1,-5) node {$A_7$};
\draw [fill=qqffqq, qqffqq] (4,-5) circle (20pt);
\draw[color=black] (4,-5) node {$A_8$};
\draw [fill=wwqqcc, wwqqcc] (7,-5) circle (20pt);
\draw[color=black] (7,-5) node {$A_9$};
\end{large}
\end{tikzpicture}
  \end{center}
  \caption{\footnotesize{Compatibility hypergraph $\mathrm{H}$ of the 
Peres-Mermin  scenario.}}
  \label{fig:Peres_Mermin}
\end{figure}

For this scenario, every noncontextual behavior must satisfy the inequality
\be\left\langle A_1A_2A_3\right\rangle + \left\langle A_4A_5A_6\right\rangle + \left\langle A_7A_8A_9\right\rangle +
\left\langle A_1A_4A_7\right\rangle + \left\langle A_2A_5A_8\right\rangle - \left\langle A_3A_6A_9\right\rangle  \leq  4\label{eq:peres-mermim}\ee
while for all quantum behaviors the left hand side is equal to $6$. This is one of the famous examples of 
\emph{state independent contextuality}: for this choice of measurements, all quantum states yield noncontextual behaviors.

\begin{figure}[h!]
  \begin{center}
  \definecolor{yqqqqq}{rgb}{0.5019607843137255,0,0}
  \definecolor{qqzzff}{rgb}{0,0.6,1}
  \definecolor{qqffqq}{rgb}{0,1,0}
  \definecolor{wwqqcc}{rgb}{0.6,0,1}
  \definecolor{ttffqq}{rgb}{0.2,1,0}
  \definecolor{ffttww}{rgb}{1,0.2,0.4}
  \definecolor{qqffff}{rgb}{0,1,1}
  \definecolor{ffxfqq}{rgb}{1,0.4980392156862745,0}
  \definecolor{ffdxqq}{rgb}{1,0.8431372549019608,0}
  \definecolor{qqwuqq}{rgb}{0,0.39215686274509803,0}
  \definecolor{ffqqqq}{rgb}{1,0,0}
  \definecolor{qqqqff}{rgb}{0.2,0.3,1}
  \begin{tikzpicture}[scale=0.75,line cap=round,line join=round,>=triangle 45,x=1cm,y=1cm]
\draw [line width=1pt,dash pattern=on 1pt off 3pt,color=qqzzff] (1,4)-- (1,2);
\draw [line width=1pt,dash pattern=on 1pt off 3pt,color=qqzzff] (1,2)-- (7,2);
\draw [line width=1pt,dash pattern=on 1pt off 3pt,color=qqzzff] (7,2)-- (7,4);
\draw [line width=1pt,dash pattern=on 1pt off 3pt,color=qqzzff] (7,4)-- (1,4);
\draw [line width=1pt,dash pattern=on 1pt off 3pt,color=qqzzff] (8,4)-- (8,2);
\draw [line width=1pt,dash pattern=on 1pt off 3pt,color=qqzzff] (14,2)-- (8,2);
\draw [line width=1pt,dash pattern=on 1pt off 3pt,color=qqzzff] (1,1)-- (1,-1);
\draw [line width=1pt,dash pattern=on 1pt off 3pt,color=qqzzff] (1,-1)-- (7,-1);
\draw [line width=1pt,dash pattern=on 1pt off 3pt,color=qqzzff] (7,-1)-- (7,1);
\draw [line width=1pt,dash pattern=on 1pt off 3pt,color=qqzzff] (7,1)-- (1,1);
\draw [line width=1pt,dash pattern=on 1pt off 3pt,color=qqzzff] (8,1)-- (8,-1);
\draw [line width=1pt,dash pattern=on 1pt off 3pt,color=qqzzff] (8,-1)-- (14,-1);
\draw [line width=1pt,dash pattern=on 1pt off 3pt,color=qqzzff] (14,1)-- (8,1);
\draw [line width=1pt,dash pattern=on 1pt off 3pt,color=qqzzff] (14,-1)-- (14,1);
\draw [line width=1pt,dash pattern=on 1pt off 3pt,color=qqzzff] (14,2)-- (14,4);
\draw [line width=1pt,dash pattern=on 1pt off 3pt,color=qqzzff] (14,4)-- (8,4);
\draw [line width=1pt,dash pattern=on 1pt off 3pt,color=qqzzff] (15,4)-- (15,2);
\draw [line width=1pt,dash pattern=on 1pt off 3pt,color=yqqqqq] (15,1)-- (15,-1);
\draw [line width=1pt,dash pattern=on 1pt off 3pt,color=yqqqqq] (15,-1)-- (21,-1);
\draw [line width=1pt,dash pattern=on 1pt off 3pt,color=yqqqqq] (21,-1)-- (21,1);
\draw [line width=1pt,dash pattern=on 1pt off 3pt,color=yqqqqq] (21,1)-- (15,1);
\draw [line width=1pt,dash pattern=on 1pt off 3pt,color=qqzzff] (15,2)-- (21,2);
\draw [line width=1pt,dash pattern=on 1pt off 3pt,color=qqzzff] (21,2)-- (21,4);
\draw [line width=1pt,dash pattern=on 1pt off 3pt,color=qqzzff] (21,4)-- (15,4);
\draw [line width=1pt,dash pattern=on 1pt off 3pt] (2,3)-- (2,0);
\draw [line width=1pt,dash pattern=on 1pt off 3pt] (4,3)-- (9,0);
\draw [line width=1pt,dash pattern=on 1pt off 3pt] (6,3)-- (16,0);
\draw [line width=1pt,dash pattern=on 1pt off 3pt] (16,3)-- (6,0);
\draw [line width=1pt,dash pattern=on 1pt off 3pt] (18,3)-- (13,0);
\draw [line width=1pt,dash pattern=on 1pt off 3pt] (11,3)-- (11,0);
\draw [line width=1pt,dash pattern=on 1pt off 3pt] (13,3)-- (18,0);
\draw [line width=1pt,dash pattern=on 1pt off 3pt] (20,3)-- (20,0);
\draw [line width=1pt,dash pattern=on 1pt off 3pt] (9,3)-- (4,0);
\draw [fill=qqqqff] (2,3) circle (18pt);
\draw[color=black] (2,3) node {$A_1^1$};
\draw [fill=ffqqqq] (4,3) circle (18pt);
\draw[color=black] (4,3) node {$A_2^1$};
\draw [fill=qqwuqq] (6,3) circle (18pt);
\draw[color=black] (6,3) node {$A_3^1$};
\draw [fill=ffdxqq] (9,3) circle (18pt);
\draw[color=black] (9,3) node {$A_4^2$};
\draw [fill=ffxfqq] (11,3) circle (18pt);
\draw[color=black] (11,3) node {$A_5^2$};
\draw [fill=qqffff] (13,3) circle (18pt);
\draw[color=black] (13,3) node {$A_6^2$};
\draw [fill=ffttww] (16,3) circle (18pt);
\draw[color=black] (16,3) node {$A_7^3$};
\draw [fill=ttffqq] (18,3) circle (18pt);
\draw[color=black] (18,3) node {$A_8^3$};
\draw [fill=wwqqcc] (20,3) circle (18pt);
\draw[color=black] (20,3) node {$A_9^3$};
\draw [fill=qqqqff] (2,0) circle (18pt);
\draw[color=black] (2,0) node {$A_1^4$};
\draw [fill=ffdxqq] (4,0) circle (18pt);
\draw[color=black] (4,0) node {$A_4^4$};
\draw [fill=ffttww] (6,0) circle (18pt);
\draw[color=black] (6,0) node {$A_7^4$};
\draw [fill=ffqqqq] (9,0) circle (18pt);
\draw[color=black] (9,0) node {$A_2^5$};
\draw [fill=ffxfqq] (11,0) circle (18pt);
\draw[color=black] (11,0) node {$A_5^5$};
\draw [fill=qqffqq] (13,0) circle (18pt);
\draw[color=black] (13,0) node {$A_8^5$};
\draw [fill=qqwuqq] (16,0) circle (18pt);
\draw[color=black] (16,0) node {$A_3^6$};
\draw [fill=qqffff] (18,0) circle (18pt);
\draw[color=black] (18,0) node {$A_6^6$};
\draw [fill=wwqqcc] (20,0) circle (18pt);
\draw[color=black] (20,0) node {$A_9^6$};
\end{tikzpicture}
  \end{center}
  \caption{\footnotesize{Extended compatibility hypergraph $\mathscr{H}$ 
of the Peres-Mermin contextuality scenario.}}
  \label{fig:Peres_Mermin_Ext}
\end{figure}

The extended compatibility hypergraph for this scenario is shown in Fig. 
\ref{fig:Peres_Mermin_Ext}. Labeling the hyperedges 
of $\mathrm{H}$ defined by the rows in Fig. \ref{fig:Peres_Mermin} as $1, 
2,3$ and the hyperedges defined by the columns as $4,5,6$,
each measurement $A_i$ is divided in two new vertices of $\mathscr{H}$ $A_i^j$ and $A_i^k$, where $j\in \{1,2,3\}$ and $k\in \{4,5,6\}$ according
to the row and column $A_i$ belongs to. Although the tools provided by the 
$\mathrm{CUT}$ polytope can not be used in this case, since the inequality \eqref{eq:peres-mermim} involves mean values of the product of
three measurements instead of two, some ideas of Sec. \ref{sec:valid_ineq} can be used in similar way  to derive valid inequalities for  
the extended scenario from it.

We start with the Ineq. \ref{eq:peres-mermim}, substituting each $A_i$ with its copy $A_i^j$ with $j\in \{1,2,3\}$:

\be \left\langle A_1^1A_2^1A_3^1\right\rangle + \left\langle A_4^2A_5^2A_6^2\right\rangle + \left\langle A_7^3A_8^3A_9^3\right\rangle +
\left\langle A_1^1A_4^2A_7^3\right\rangle + \left\langle A_2^1A_5^2A_8^3\right\rangle - \left\langle A_3^1A_6^2A_9^3\right\rangle  \leq  4\ee
valid for all noncontextual extended behaviors.

To eliminate the term $\left\langle A_1^1A_4^2A_7^3\right\rangle$
 we use 
 \be A_1^1A_4^2A_7^3 = A_1^4A_4^4A_7^4 +\Delta A_1 A_4^4 A_7^4 + A_1^1\Delta A_4 A_7^4 +A_1^1A_4^2\Delta A_7\ee
 where $\Delta A_1 = A_1^1 - A_1^4$ and similar for  $\Delta A_4$ and  $\Delta A_7$.
 From this we get
 \begin{eqnarray} \left\langle A_1^1A_2^1A_3^1\right\rangle + \left\langle A_4^2A_5^2A_6^2\right\rangle + \left\langle A_7^3A_8^3A_9^3\right\rangle +
\left\langle A_1^4A_4^4A_7^4\right\rangle + \left\langle A_2^1A_5^2A_8^3\right\rangle - \left\langle A_3^1A_6^2A_9^3\right\rangle & \leq & \nonumber \\
4 - \left\langle \Delta A_1 A_4^4 A_7^4  \right\rangle -  \left\langle A_1^1\Delta A_4 A_7^4 \right\rangle - \left\langle A_1^1A_4^2\Delta A_7\right\rangle&\leq &\\
 4+ \sum_{i=1}^4\left|\Delta A_i\right|& &\end{eqnarray}
 
Proceeding analogously with the other terms, we get the inequality
 \be\left\langle A_1^1A_2^1A_3^1\right\rangle + \left\langle A_4^2A_5^2A_6^2\right\rangle + \left\langle A_7^3A_8^3A_9^3\right\rangle +
\left\langle A_1^4A_4^4A_7^4\right\rangle + \left\langle A_2^5A_5^5A_8^5\right\rangle - \left\langle A_3^6A_6^6A_9^6\right\rangle  \leq 
 4+ \sum_{i=1}^9\left|\Delta A_i\right|\ee
 valid for all noncontextual extended behaviors. This inequality is tight 
and reduces to the original Peres-Mermin Ineq. \eqref{eq:peres-mermim} for 
non-disturbing
 behaviors.
 

\section{Discussion}
\label{sec:discussion}
Apart from its primal importance in the foundations of quantum physics, 
contextuality has been discovered as a potential resource for quantum 
computing  \cite{Raussendorf13, HWVE14, DGBR14}, random number certification 
\cite{UZZWYDDK13}, and several other tasks in the particular case of Bell 
scenarios \cite{BCPSW13}.
 Within these both fundamental and applied perspectives, certifying contextuality  experimentally is undoubtedly an important primitive. It is then crucial  to develop a robust theoretical framework for contextuality that can be easily applied to real experiments. This should include the possibility of treating
  sets of random variables that do not satisfy the assumption of  
\emph{non-disturbance}, which will be hardly satisfied in  experimental 
implementations~\cite{KDL15}.

Here we have further developed the extended definition of noncontextuality of Ref. \cite{KDL15}, which can be applied in situations where the 
non-distrubance condition does not hold, rewriting it in graph-theoretical 
terms. We then explore the geometrical aspects of the graph approach to 
contextuality to derive necessary conditions for extended contextuality 
that can be tested directly with  experimental data in any 
 contextuality experiment and which reduce to traditional necessary conditions for noncontextuality if the non-disturbance condition is satisfied. 

It would be interesting to give a characterization of which of these 
inequalities are facet-defining. In Ref. \cite{AIT06}, several results 
regarding this issue were proved, but unfortunately our scenarios do not 
satisfy the hypotheses needed for the validity of such results.
 A more ambitious problem would be  to identify which scenarios can be
completely characterized with these procedures, the $n$-cycle scenarios 
being an important example. We leave these inquiries for future work,
hoping that our results might motivate further research in these directions.

\begin{acknowledgments}
The Authors thank  Jan-\AA{}ke Larsson and Ad\'an Cabello for valuable discussions.
This work was done during the Post-doctoral Summer Program of Instituto de Matem\'atica Pura e Aplicada (IMPA) 2017. 
BA and CD thank IMPA for its support and 
hospitality.
BA acknowledges financial support from the Brazilian ministries MEC and 
MCTIC and CNPq. CD acknowledges financial support from CAPES and CNPq. 
RO aknowlodges  the financial support of
Bolsa de Produtividade  em Pesquisa
from CNPq.
\end{acknowledgments}

\bibliography{biblio}

\begin{thebibliography}{44}%
\makeatletter
\providecommand \@ifxundefined [1]{%
 \@ifx{#1\undefined}
}%
\providecommand \@ifnum [1]{%
 \ifnum #1\expandafter \@firstoftwo
 \else \expandafter \@secondoftwo
 \fi
}%
\providecommand \@ifx [1]{%
 \ifx #1\expandafter \@firstoftwo
 \else \expandafter \@secondoftwo
 \fi
}%
\providecommand \natexlab [1]{#1}%
\providecommand \enquote  [1]{``#1''}%
\providecommand \bibnamefont  [1]{#1}%
\providecommand \bibfnamefont [1]{#1}%
\providecommand \citenamefont [1]{#1}%
\providecommand \href@noop [0]{\@secondoftwo}%
\providecommand \href [0]{\begingroup \@sanitize@url \@href}%
\providecommand \@href[1]{\@@startlink{#1}\@@href}%
\providecommand \@@href[1]{\endgroup#1\@@endlink}%
\providecommand \@sanitize@url [0]{\catcode `\\12\catcode `\$12\catcode
  `\&12\catcode `\#12\catcode `\^12\catcode `\_12\catcode `\%12\relax}%
\providecommand \@@startlink[1]{}%
\providecommand \@@endlink[0]{}%
\providecommand \url  [0]{\begingroup\@sanitize@url \@url }%
\providecommand \@url [1]{\endgroup\@href {#1}{\urlprefix }}%
\providecommand \urlprefix  [0]{URL }%
\providecommand \Eprint [0]{\href }%
\providecommand \doibase [0]{http://dx.doi.org/}%
\providecommand \selectlanguage [0]{\@gobble}%
\providecommand \bibinfo  [0]{\@secondoftwo}%
\providecommand \bibfield  [0]{\@secondoftwo}%
\providecommand \translation [1]{[#1]}%
\providecommand \BibitemOpen [0]{}%
\providecommand \bibitemStop [0]{}%
\providecommand \bibitemNoStop [0]{.\EOS\space}%
\providecommand \EOS [0]{\spacefactor3000\relax}%
\providecommand \BibitemShut  [1]{\csname bibitem#1\endcsname}%
\let\auto@bib@innerbib\@empty
\bibitem [{\citenamefont {Kujala}, \citenamefont {Dzhafarov},\ and\
  \citenamefont {Larsson}(2015)}]{KDL15}%
  \BibitemOpen
  \bibfield  {author} {\bibinfo {author} {\bibfnamefont {J.~V.}\ \bibnamefont
  {Kujala}}, \bibinfo {author} {\bibfnamefont {E.~N.}\ \bibnamefont
  {Dzhafarov}}, \ and\ \bibinfo {author} {\bibfnamefont {J.-A.}\ \bibnamefont
  {Larsson}},\ }\bibfield  {title} {\enquote {\bibinfo {title} {Necessary and
  sufficient conditions for extended noncontextuality in a broad class of
  quantum mechanical systems},}\ }\href
  {https://doi.org/10.1103/PhysRevLett.115.150401} {\bibfield  {journal}
  {\bibinfo  {journal} {Phys. Rev. Lett.}\ }\textbf {\bibinfo {volume} {115}},\
  \bibinfo {pages} {150401} (\bibinfo {year} {2015})}\BibitemShut {NoStop}%
\bibitem [{\citenamefont {Specker}(1960)}]{Specker60}%
  \BibitemOpen
  \bibfield  {author} {\bibinfo {author} {\bibfnamefont {E.~P.}\ \bibnamefont
  {Specker}},\ }\bibfield  {title} {\enquote {\bibinfo {title} {Die logik nicht
  gleichzeitig entscheidbarer aussagen},}\ }\href@noop {} {\bibfield  {journal}
  {\bibinfo  {journal} {Dialectica}\ }\textbf {\bibinfo {volume} {14}},\
  \bibinfo {pages} {239} (\bibinfo {year} {1960})}\BibitemShut {NoStop}%
\bibitem [{\citenamefont {Bell}(1966)}]{Bell66}%
  \BibitemOpen
  \bibfield  {author} {\bibinfo {author} {\bibfnamefont {J.~S.}\ \bibnamefont
  {Bell}},\ }\bibfield  {title} {\enquote {\bibinfo {title} {On the problem of
  hidden variables in quantum mechanics},}\ }\href@noop {} {\bibfield
  {journal} {\bibinfo  {journal} {Rev. Mod. Phys.}\ }\textbf {\bibinfo {volume}
  {38}},\ \bibinfo {pages} {447--452} (\bibinfo {year} {1966})}\BibitemShut
  {NoStop}%
\bibitem [{\citenamefont {Kochen}\ and\ \citenamefont {Specker}(1967)}]{KS67}%
  \BibitemOpen
  \bibfield  {author} {\bibinfo {author} {\bibfnamefont {S.}~\bibnamefont
  {Kochen}}\ and\ \bibinfo {author} {\bibfnamefont {E.}~\bibnamefont
  {Specker}},\ }\bibfield  {title} {\enquote {\bibinfo {title} {The problem of
  hidden variables in quantum mechanics},}\ }\href@noop {} {\bibfield
  {journal} {\bibinfo  {journal} {J. Math. Mech.}\ }\textbf {\bibinfo {volume}
  {17}},\ \bibinfo {pages} {59--87} (\bibinfo {year} {1967})}\BibitemShut
  {NoStop}%
\bibitem [{\citenamefont {Fine}(1982)}]{Fine82}%
  \BibitemOpen
  \bibfield  {author} {\bibinfo {author} {\bibfnamefont {A.}~\bibnamefont
  {Fine}},\ }\bibfield  {title} {\enquote {\bibinfo {title} {Hidden variables,
  joint probability, and the {B}ell inequalities},}\ }\href@noop {} {\bibfield
  {journal} {\bibinfo  {journal} {Phys. Rev. Lett.}\ }\textbf {\bibinfo
  {volume} {48}},\ \bibinfo {pages} {291--295} (\bibinfo {year}
  {1982})}\BibitemShut {NoStop}%
\bibitem [{\citenamefont {Abramsky}\ and\ \citenamefont
  {Brandenburger}(2011)}]{AB11}%
  \BibitemOpen
  \bibfield  {author} {\bibinfo {author} {\bibfnamefont {A.}~\bibnamefont
  {Abramsky}}\ and\ \bibinfo {author} {\bibfnamefont {A.}~\bibnamefont
  {Brandenburger}},\ }\bibfield  {title} {\enquote {\bibinfo {title} {The
  sheaf-theoretic structure of non-locality and contextuality},}\ }\href@noop
  {} {\bibfield  {journal} {\bibinfo  {journal} {New J. Phys.}\ }\textbf
  {\bibinfo {volume} {13}} (\bibinfo {year} {2011})}\BibitemShut {NoStop}%
\bibitem [{\citenamefont {Nawareg}\ \emph {et~al.}(2013)\citenamefont
  {Nawareg}, \citenamefont {Bisesto}, \citenamefont {D'Ambrosio}, \citenamefont
  {Amselem}, \citenamefont {Sciarrino}, \citenamefont {Bourennane},\ and\
  \citenamefont {Cabello}}]{NBAASBC13}%
  \BibitemOpen
  \bibfield  {author} {\bibinfo {author} {\bibfnamefont {M.}~\bibnamefont
  {Nawareg}}, \bibinfo {author} {\bibfnamefont {F.}~\bibnamefont {Bisesto}},
  \bibinfo {author} {\bibfnamefont {V.}~\bibnamefont {D'Ambrosio}}, \bibinfo
  {author} {\bibfnamefont {E.}~\bibnamefont {Amselem}}, \bibinfo {author}
  {\bibfnamefont {F.}~\bibnamefont {Sciarrino}}, \bibinfo {author}
  {\bibfnamefont {M.}~\bibnamefont {Bourennane}}, \ and\ \bibinfo {author}
  {\bibfnamefont {A.}~\bibnamefont {Cabello}},\ }\bibfield  {title} {\enquote
  {\bibinfo {title} {Bounding quantum theory with the exclusivity principle in
  a two-city experiment},}\ }\href@noop {} {\bibfield  {journal} {\bibinfo
  {journal} {arxiv:}\ }\textbf {\bibinfo {volume} {quant-ph/1311.3495}}
  (\bibinfo {year} {2013})}\BibitemShut {NoStop}%
\bibitem [{\citenamefont {Cabello}(2013{\natexlab{a}})}]{Cabello13}%
  \BibitemOpen
  \bibfield  {author} {\bibinfo {author} {\bibfnamefont {A.}~\bibnamefont
  {Cabello}},\ }\bibfield  {title} {\enquote {\bibinfo {title} {Simple
  explanation of the quantum violation of a fundamental inequality},}\
  }\href@noop {} {\bibfield  {journal} {\bibinfo  {journal} {Phys. Rev. Lett.}\
  }\textbf {\bibinfo {volume} {110}},\ \bibinfo {pages} {060402} (\bibinfo
  {year} {2013}{\natexlab{a}})}\BibitemShut {NoStop}%
\bibitem [{\citenamefont {Cabello}(2013{\natexlab{b}})}]{Cabello13c}%
  \BibitemOpen
  \bibfield  {author} {\bibinfo {author} {\bibfnamefont {A.}~\bibnamefont
  {Cabello}},\ }\bibfield  {title} {\enquote {\bibinfo {title} {New scenarios
  in which {S}pecker's principle explains the maximum quantum contextuality},}\
  }\href@noop {} {\bibfield  {journal} {\bibinfo  {journal} {submitted
  (February 28, 2013) to the \emph{Proc. of the 2013 Biennial Meeting of the
  Spanish Royal Society of Physics}.}\ } (\bibinfo {year}
  {2013}{\natexlab{b}})}\BibitemShut {NoStop}%
\bibitem [{\citenamefont {Cabello}, \citenamefont {Severini},\ and\
  \citenamefont {Winter}(2014)}]{CSW14}%
  \BibitemOpen
  \bibfield  {author} {\bibinfo {author} {\bibfnamefont {A.}~\bibnamefont
  {Cabello}}, \bibinfo {author} {\bibfnamefont {S.}~\bibnamefont {Severini}}, \
  and\ \bibinfo {author} {\bibfnamefont {A.}~\bibnamefont {Winter}},\
  }\bibfield  {title} {\enquote {\bibinfo {title} {Graph-theoretic approach to
  quantum correlations},}\ }\href@noop {} {\bibfield  {journal} {\bibinfo
  {journal} {Phys. Rev. Lett.}\ }\textbf {\bibinfo {volume} {112}},\ \bibinfo
  {pages} {040401} (\bibinfo {year} {2014})}\BibitemShut {NoStop}%
\bibitem [{\citenamefont {Amaral}, \citenamefont {{Terra Cunha}},\ and\
  \citenamefont {Cabello}(2014)}]{ATC14}%
  \BibitemOpen
  \bibfield  {author} {\bibinfo {author} {\bibfnamefont {B.}~\bibnamefont
  {Amaral}}, \bibinfo {author} {\bibfnamefont {M.}~\bibnamefont {{Terra
  Cunha}}}, \ and\ \bibinfo {author} {\bibfnamefont {A.}~\bibnamefont
  {Cabello}},\ }\bibfield  {title} {\enquote {\bibinfo {title} {Exclusivity
  principle forbids sets of correlations larger than the quantum set},}\
  }\href@noop {} {\bibfield  {journal} {\bibinfo  {journal} {Phys. Rev. A}\
  }\textbf {\bibinfo {volume} {89}},\ \bibinfo {pages} {030101} (\bibinfo
  {year} {2014})}\BibitemShut {NoStop}%
\bibitem [{\citenamefont {Amaral}(2014)}]{Amaral14}%
  \BibitemOpen
  \bibfield  {author} {\bibinfo {author} {\bibfnamefont {B.}~\bibnamefont
  {Amaral}},\ }\emph {\bibinfo {title} {The Exclusivity principle and the set o
  quantum distributions}},\ \href {http://eprints.ucm.es/1961/} {Ph.D.
  thesis},\ \bibinfo  {school} {Universidade Federal de Minas Gerais} (\bibinfo
  {year} {2014})\BibitemShut {NoStop}%
\bibitem [{\citenamefont {Raussendorf}(2013)}]{Raussendorf13}%
  \BibitemOpen
  \bibfield  {author} {\bibinfo {author} {\bibfnamefont {R.}~\bibnamefont
  {Raussendorf}},\ }\bibfield  {title} {\enquote {\bibinfo {title}
  {Contextuality in measurement-based quantum computation},}\ }\href {\doibase
  10.1103/PhysRevA.88.022322} {\bibfield  {journal} {\bibinfo  {journal} {Phys.
  Rev. A}\ }\textbf {\bibinfo {volume} {88}},\ \bibinfo {pages} {022322}
  (\bibinfo {year} {2013})}\BibitemShut {NoStop}%
\bibitem [{\citenamefont {Howard}\ \emph {et~al.}(2014)\citenamefont {Howard},
  \citenamefont {Wallman}, \citenamefont {Veitch},\ and\ \citenamefont
  {Emerson}}]{HWVE14}%
  \BibitemOpen
  \bibfield  {author} {\bibinfo {author} {\bibfnamefont {M.}~\bibnamefont
  {Howard}}, \bibinfo {author} {\bibfnamefont {J.}~\bibnamefont {Wallman}},
  \bibinfo {author} {\bibfnamefont {V.}~\bibnamefont {Veitch}}, \ and\ \bibinfo
  {author} {\bibfnamefont {J.}~\bibnamefont {Emerson}},\ }\bibfield  {title}
  {\enquote {\bibinfo {title} {Contextuality supplies the /`magic/' for quantum
  computation},}\ }\href {http://dx.doi.org/10.1038/nature13460} {\bibfield
  {journal} {\bibinfo  {journal} {Nature}\ }\textbf {\bibinfo {volume} {510}},\
  \bibinfo {pages} {351-- 355} (\bibinfo {year} {2014})}\BibitemShut {NoStop}%
\bibitem [{\citenamefont {Delfosse}\ \emph {et~al.}(2015)\citenamefont
  {Delfosse}, \citenamefont {Allard~Guerin}, \citenamefont {Bian},\ and\
  \citenamefont {Raussendorf}}]{DGBR14}%
  \BibitemOpen
  \bibfield  {author} {\bibinfo {author} {\bibfnamefont {N.}~\bibnamefont
  {Delfosse}}, \bibinfo {author} {\bibfnamefont {P.}~\bibnamefont
  {Allard~Guerin}}, \bibinfo {author} {\bibfnamefont {J.}~\bibnamefont {Bian}},
  \ and\ \bibinfo {author} {\bibfnamefont {R.}~\bibnamefont {Raussendorf}},\
  }\bibfield  {title} {\enquote {\bibinfo {title} {Wigner function negativity
  and contextuality in quantum computation on rebits},}\ }\href {\doibase
  10.1103/PhysRevX.5.021003} {\bibfield  {journal} {\bibinfo  {journal} {Phys.
  Rev. X}\ }\textbf {\bibinfo {volume} {5}},\ \bibinfo {pages} {021003}
  (\bibinfo {year} {2015})}\BibitemShut {NoStop}%
\bibitem [{\citenamefont {Um}\ \emph {et~al.}(2013)\citenamefont {Um},
  \citenamefont {Zhang}, \citenamefont {Zhang}, \citenamefont {Wang},
  \citenamefont {Yangchao}, \citenamefont {Deng}, \citenamefont {Duan},\ and\
  \citenamefont {Kim}}]{UZZWYDDK13}%
  \BibitemOpen
  \bibfield  {author} {\bibinfo {author} {\bibfnamefont {M.}~\bibnamefont
  {Um}}, \bibinfo {author} {\bibfnamefont {X.}~\bibnamefont {Zhang}}, \bibinfo
  {author} {\bibfnamefont {J.}~\bibnamefont {Zhang}}, \bibinfo {author}
  {\bibfnamefont {Y.}~\bibnamefont {Wang}}, \bibinfo {author} {\bibfnamefont
  {S.}~\bibnamefont {Yangchao}}, \bibinfo {author} {\bibfnamefont {D.~L.}\
  \bibnamefont {Deng}}, \bibinfo {author} {\bibfnamefont {L.}~\bibnamefont
  {Duan}}, \ and\ \bibinfo {author} {\bibfnamefont {K.}~\bibnamefont {Kim}},\
  }\bibfield  {title} {\enquote {\bibinfo {title} {Experimental certification
  of random numbers via quantum contextuality},}\ }\href
  {http://dx.doi.org/10.1038/srep0162} {\bibfield  {journal} {\bibinfo
  {journal} {Sci. Rep.}\ }\textbf {\bibinfo {volume} {3}} (\bibinfo {year}
  {2013})}\BibitemShut {NoStop}%
\bibitem [{\citenamefont {Brunner}\ \emph {et~al.}(2013)\citenamefont
  {Brunner}, \citenamefont {Cavalcanti}, \citenamefont {Pironio}, \citenamefont
  {Scarani},\ and\ \citenamefont {Wehner}}]{BCPSW13}%
  \BibitemOpen
  \bibfield  {author} {\bibinfo {author} {\bibfnamefont {N.}~\bibnamefont
  {Brunner}}, \bibinfo {author} {\bibfnamefont {D.}~\bibnamefont {Cavalcanti}},
  \bibinfo {author} {\bibfnamefont {S.}~\bibnamefont {Pironio}}, \bibinfo
  {author} {\bibfnamefont {V.}~\bibnamefont {Scarani}}, \ and\ \bibinfo
  {author} {\bibfnamefont {S.}~\bibnamefont {Wehner}},\ }\bibfield  {title}
  {\enquote {\bibinfo {title} {Bell nonlocality},}\ }\href@noop {} {\bibfield
  {journal} {\bibinfo  {journal} {arxiv:}\ }\textbf {\bibinfo {volume}
  {quant-ph/1303.2849}} (\bibinfo {year} {2013})}\BibitemShut {NoStop}%
\bibitem [{\citenamefont {Hasegawa}\ \emph {et~al.}(2006)\citenamefont
  {Hasegawa}, \citenamefont {Loidl}, \citenamefont {Badurek}, \citenamefont
  {Baron},\ and\ \citenamefont {Rauch}}]{HLBBR06}%
  \BibitemOpen
  \bibfield  {author} {\bibinfo {author} {\bibfnamefont {Y.}~\bibnamefont
  {Hasegawa}}, \bibinfo {author} {\bibfnamefont {R.}~\bibnamefont {Loidl}},
  \bibinfo {author} {\bibfnamefont {G.}~\bibnamefont {Badurek}}, \bibinfo
  {author} {\bibfnamefont {M.}~\bibnamefont {Baron}}, \ and\ \bibinfo {author}
  {\bibfnamefont {H.}~\bibnamefont {Rauch}},\ }\bibfield  {title} {\enquote
  {\bibinfo {title} {Quantum contextuality in a single-neutron optical
  experiment},}\ }\href@noop {} {\bibfield  {journal} {\bibinfo  {journal}
  {Phys. Rev. Lett.}\ }\textbf {\bibinfo {volume} {96}},\ \bibinfo {pages}
  {230401} (\bibinfo {year} {2006})}\BibitemShut {NoStop}%
\bibitem [{\citenamefont {Kirchmair}\ \emph {et~al.}(2009)\citenamefont
  {Kirchmair}, \citenamefont {Z\"ahringer}, \citenamefont {Gerritsma},
  \citenamefont {Kleinmann}, \citenamefont {Gühne}, \citenamefont {Cabello},
  \citenamefont {Blatt},\ and\ \citenamefont {Roos}}]{KZGKGCBR09}%
  \BibitemOpen
  \bibfield  {author} {\bibinfo {author} {\bibfnamefont {G.}~\bibnamefont
  {Kirchmair}}, \bibinfo {author} {\bibfnamefont {F.}~\bibnamefont
  {Z\"ahringer}}, \bibinfo {author} {\bibfnamefont {R.}~\bibnamefont
  {Gerritsma}}, \bibinfo {author} {\bibfnamefont {M.}~\bibnamefont
  {Kleinmann}}, \bibinfo {author} {\bibfnamefont {O.}~\bibnamefont {Gühne}},
  \bibinfo {author} {\bibfnamefont {A.}~\bibnamefont {Cabello}}, \bibinfo
  {author} {\bibfnamefont {R.}~\bibnamefont {Blatt}}, \ and\ \bibinfo {author}
  {\bibfnamefont {C.~F.}\ \bibnamefont {Roos}},\ }\bibfield  {title} {\enquote
  {\bibinfo {title} {State-independent experimental test of quantum
  contextuality},}\ }\href@noop {} {\bibfield  {journal} {\bibinfo  {journal}
  {Nature}\ }\textbf {\bibinfo {volume} {460}},\ \bibinfo {pages} {494}
  (\bibinfo {year} {2009})}\BibitemShut {NoStop}%
\bibitem [{\citenamefont {Amselem}\ \emph {et~al.}(2009)\citenamefont
  {Amselem}, \citenamefont {R\aa{}dmark}, \citenamefont {Bourennane},\ and\
  \citenamefont {Cabello}}]{ARBC09}%
  \BibitemOpen
  \bibfield  {author} {\bibinfo {author} {\bibfnamefont {E.}~\bibnamefont
  {Amselem}}, \bibinfo {author} {\bibfnamefont {M.}~\bibnamefont
  {R\aa{}dmark}}, \bibinfo {author} {\bibfnamefont {M.}~\bibnamefont
  {Bourennane}}, \ and\ \bibinfo {author} {\bibfnamefont {A.}~\bibnamefont
  {Cabello}},\ }\bibfield  {title} {\enquote {\bibinfo {title}
  {State-independent quantum contextuality with single photons},}\ }\href
  {10.1103/PhysRevLett.103.160405} {\bibfield  {journal} {\bibinfo  {journal}
  {Phys. Rev. Lett.}\ }\textbf {\bibinfo {volume} {103}},\ \bibinfo {pages}
  {160405} (\bibinfo {year} {2009})}\BibitemShut {NoStop}%
\bibitem [{\citenamefont {Lapkiewicz}\ \emph {et~al.}(2011)\citenamefont
  {Lapkiewicz}, \citenamefont {Li}, \citenamefont {Schaeff}, \citenamefont
  {Langford}, \citenamefont {Ramelow}, \citenamefont {Wiesniak},\ and\
  \citenamefont {Zeilinger}}]{LLSLRWZ11}%
  \BibitemOpen
  \bibfield  {author} {\bibinfo {author} {\bibfnamefont {R.}~\bibnamefont
  {Lapkiewicz}}, \bibinfo {author} {\bibfnamefont {P.}~\bibnamefont {Li}},
  \bibinfo {author} {\bibfnamefont {C.}~\bibnamefont {Schaeff}}, \bibinfo
  {author} {\bibfnamefont {N.}~\bibnamefont {Langford}}, \bibinfo {author}
  {\bibfnamefont {S.}~\bibnamefont {Ramelow}}, \bibinfo {author} {\bibfnamefont
  {M.}~\bibnamefont {Wiesniak}}, \ and\ \bibinfo {author} {\bibfnamefont
  {A.}~\bibnamefont {Zeilinger}},\ }\bibfield  {title} {\enquote {\bibinfo
  {title} {Experimental non-classicality of an indivisible quantum system},}\
  }\href
  {http://www.nature.com/nature/journal/v474/n7352/abs/nature10119-f1.2.html#supplementary-information}
  {\bibfield  {journal} {\bibinfo  {journal} {Nature}\ }\textbf {\bibinfo
  {volume} {474}},\ \bibinfo {pages} {490} (\bibinfo {year}
  {2011})}\BibitemShut {NoStop}%
\bibitem [{\citenamefont {Borges}\ \emph {et~al.}(2014)\citenamefont {Borges},
  \citenamefont {Carvalho}, \citenamefont {de~Assis}, \citenamefont {Ferraz},
  \citenamefont {Ara\'ujo}, \citenamefont {Cabello}, \citenamefont {Cunha},\
  and\ \citenamefont {P\'adua}}]{BCAFACTP13}%
  \BibitemOpen
  \bibfield  {author} {\bibinfo {author} {\bibfnamefont {G.}~\bibnamefont
  {Borges}}, \bibinfo {author} {\bibfnamefont {M.}~\bibnamefont {Carvalho}},
  \bibinfo {author} {\bibfnamefont {P.~L.}\ \bibnamefont {de~Assis}}, \bibinfo
  {author} {\bibfnamefont {J.}~\bibnamefont {Ferraz}}, \bibinfo {author}
  {\bibfnamefont {M.}~\bibnamefont {Ara\'ujo}}, \bibinfo {author}
  {\bibfnamefont {A.}~\bibnamefont {Cabello}}, \bibinfo {author} {\bibfnamefont
  {M.~T.}\ \bibnamefont {Cunha}}, \ and\ \bibinfo {author} {\bibfnamefont
  {S.}~\bibnamefont {P\'adua}},\ }\bibfield  {title} {\enquote {\bibinfo
  {title} {Experimental test of the quantum violation of the noncontextuality
  inequalities for the $n$-cycle scenario},}\ }\href@noop {} {\bibfield
  {journal} {\bibinfo  {journal} {Phys. Rev. A}\ }\textbf {\bibinfo {volume}
  {89}},\ \bibinfo {pages} {052106} (\bibinfo {year} {2014})}\BibitemShut
  {NoStop}%
\bibitem [{\citenamefont {Larsson}(2002)}]{Larsson02}%
  \BibitemOpen
  \bibfield  {author} {\bibinfo {author} {\bibfnamefont {J.-A.}\ \bibnamefont
  {Larsson}},\ }\bibfield  {title} {\enquote {\bibinfo {title} {A
  kochen-specker inequality},}\ }\href
  {http://stacks.iop.org/0295-5075/58/i=6/a=799} {\bibfield  {journal}
  {\bibinfo  {journal} {EPL (Europhysics Letters)}\ }\textbf {\bibinfo {volume}
  {58}},\ \bibinfo {pages} {799} (\bibinfo {year} {2002})}\BibitemShut
  {NoStop}%
\bibitem [{\citenamefont {Winter}(2014)}]{Winter14}%
  \BibitemOpen
  \bibfield  {author} {\bibinfo {author} {\bibfnamefont {A.}~\bibnamefont
  {Winter}},\ }\bibfield  {title} {\enquote {\bibinfo {title} {What does an
  experimental test of quantum contextuality prove or disprove?}}\ }\href
  {http://stacks.iop.org/1751-8121/47/i=42/a=424031} {\bibfield  {journal}
  {\bibinfo  {journal} {Journal of Physics A: Mathematical and Theoretical}\
  }\textbf {\bibinfo {volume} {47}},\ \bibinfo {pages} {424031} (\bibinfo
  {year} {2014})}\BibitemShut {NoStop}%
\bibitem [{\citenamefont {Cabello}\ \emph {et~al.}(2010)\citenamefont
  {Cabello}, \citenamefont {Severini}, ,\ and\ \citenamefont {Winter}}]{CSW10}%
  \BibitemOpen
  \bibfield  {author} {\bibinfo {author} {\bibfnamefont {A.}~\bibnamefont
  {Cabello}}, \bibinfo {author} {\bibfnamefont {S.}~\bibnamefont {Severini}}, ,
  \ and\ \bibinfo {author} {\bibfnamefont {A.}~\bibnamefont {Winter}},\
  }\bibfield  {title} {\enquote {\bibinfo {title} {({N}on-)contextuality of
  physical theories as an axiom},}\ }\href@noop {} {\bibfield  {journal}
  {\bibinfo  {journal} {arxiv:}\ }\textbf {\bibinfo {volume}
  {quantum-ph/1010.2163}} (\bibinfo {year} {2010})}\BibitemShut {NoStop}%
\bibitem [{\citenamefont {Avis}, \citenamefont {Imai},\ and\ \citenamefont
  {Ito}(2006{\natexlab{a}})}]{AII06}%
  \BibitemOpen
  \bibfield  {author} {\bibinfo {author} {\bibfnamefont {D.}~\bibnamefont
  {Avis}}, \bibinfo {author} {\bibfnamefont {H.}~\bibnamefont {Imai}}, \ and\
  \bibinfo {author} {\bibfnamefont {T.}~\bibnamefont {Ito}},\ }\bibfield
  {title} {\enquote {\bibinfo {title} {On the relationship between convex
  bodies related to correlation experiments with dichotomic observables},}\
  }\href@noop {} {\bibfield  {journal} {\bibinfo  {journal} {Journal of Physics
  A: Mathematical and General}\ }\textbf {\bibinfo {volume} {39}},\ \bibinfo
  {pages} {11283} (\bibinfo {year} {2006}{\natexlab{a}})}\BibitemShut {NoStop}%
\bibitem [{\citenamefont {Rabelo}\ \emph {et~al.}(2014)\citenamefont {Rabelo},
  \citenamefont {Duarte}, \citenamefont {L\'opez-Tarrida}, \citenamefont
  {Cunha},\ and\ \citenamefont {Cabello}}]{RDLTC14}%
  \BibitemOpen
  \bibfield  {author} {\bibinfo {author} {\bibfnamefont {R.}~\bibnamefont
  {Rabelo}}, \bibinfo {author} {\bibfnamefont {C.}~\bibnamefont {Duarte}},
  \bibinfo {author} {\bibfnamefont {A.~J.}\ \bibnamefont {L\'opez-Tarrida}},
  \bibinfo {author} {\bibfnamefont {M.~T.}\ \bibnamefont {Cunha}}, \ and\
  \bibinfo {author} {\bibfnamefont {A.}~\bibnamefont {Cabello}},\ }\bibfield
  {title} {\enquote {\bibinfo {title} {Multigraph approach to quantum
  non-locality},}\ }\href {http://stacks.iop.org/1751-8121/47/i=42/a=424021}
  {\bibfield  {journal} {\bibinfo  {journal} {Journal of Physics A:
  Mathematical and Theoretical}\ }\textbf {\bibinfo {volume} {47}},\ \bibinfo
  {pages} {424021} (\bibinfo {year} {2014})}\BibitemShut {NoStop}%
\bibitem [{\citenamefont {Ac{\'i}n}\ \emph {et~al.}(2015)\citenamefont
  {Ac{\'i}n}, \citenamefont {Fritz}, \citenamefont {Leverrier},\ and\
  \citenamefont {Sainz}}]{AFLS15}%
  \BibitemOpen
  \bibfield  {author} {\bibinfo {author} {\bibfnamefont {A.}~\bibnamefont
  {Ac{\'i}n}}, \bibinfo {author} {\bibfnamefont {T.}~\bibnamefont {Fritz}},
  \bibinfo {author} {\bibfnamefont {A.}~\bibnamefont {Leverrier}}, \ and\
  \bibinfo {author} {\bibfnamefont {A.~B.}\ \bibnamefont {Sainz}},\ }\bibfield
  {title} {\enquote {\bibinfo {title} {A combinatorial approach to nonlocality
  and contextuality},}\ }\href {\doibase 10.1007/s00220-014-2260-1} {\bibfield
  {journal} {\bibinfo  {journal} {Communications in Mathematical Physics}\
  }\textbf {\bibinfo {volume} {334}},\ \bibinfo {pages} {533--628} (\bibinfo
  {year} {2015})}\BibitemShut {NoStop}%
\bibitem [{\citenamefont {Amaral}\ and\ \citenamefont
  {Cunha}(2017{\natexlab{a}})}]{AT17}%
  \BibitemOpen
  \bibfield  {author} {\bibinfo {author} {\bibfnamefont {B.}~\bibnamefont
  {Amaral}}\ and\ \bibinfo {author} {\bibfnamefont {M.~T.}\ \bibnamefont
  {Cunha}},\ }\bibfield  {title} {\enquote {\bibinfo {title} {On geometrical
  aspects of the graph approach to contextuality},}\ }\href
  {https://arxiv.org/abs/1709.04812} {\bibfield  {journal} {\bibinfo  {journal}
  {arxiv:}\ }\textbf {\bibinfo {volume} {quantum-ph/1709.04812}} (\bibinfo
  {year} {2017}{\natexlab{a}})}\BibitemShut {NoStop}%
\bibitem [{\citenamefont {Pitowsky}(1991)}]{Pitowsky91}%
  \BibitemOpen
  \bibfield  {author} {\bibinfo {author} {\bibfnamefont {I.}~\bibnamefont
  {Pitowsky}},\ }\bibfield  {title} {\enquote {\bibinfo {title} {Correlation
  polytopes: Their geometry and complexity},}\ }\href {\doibase
  10.1007/BF01594946} {\bibfield  {journal} {\bibinfo  {journal} {Mathematical
  Programming}\ }\textbf {\bibinfo {volume} {50}},\ \bibinfo {pages} {395--414}
  (\bibinfo {year} {1991})}\BibitemShut {NoStop}%
\bibitem [{\citenamefont {Deza}\ and\ \citenamefont {Laurent}(1997)}]{DL97}%
  \BibitemOpen
  \bibfield  {author} {\bibinfo {author} {\bibfnamefont {M.~M.}\ \bibnamefont
  {Deza}}\ and\ \bibinfo {author} {\bibfnamefont {M.}~\bibnamefont {Laurent}},\
  }\href@noop {} {\emph {\bibinfo {title} {Geometry of Cuts and Metrics}}},\
  \bibinfo {series} {Algorithms and Combinatorics}, Vol.~\bibinfo {volume}
  {15}\ (\bibinfo  {publisher} {Springer},\ \bibinfo {year} {1997})\BibitemShut
  {NoStop}%
\bibitem [{\citenamefont {Avis}, \citenamefont {Imai},\ and\ \citenamefont
  {Ito}(2008)}]{AII08}%
  \BibitemOpen
  \bibfield  {author} {\bibinfo {author} {\bibfnamefont {D.}~\bibnamefont
  {Avis}}, \bibinfo {author} {\bibfnamefont {H.}~\bibnamefont {Imai}}, \ and\
  \bibinfo {author} {\bibfnamefont {T.}~\bibnamefont {Ito}},\ }\bibfield
  {title} {\enquote {\bibinfo {title} {Generating facets for the cut polytope
  of a graph by triangular elimination},}\ }\href {\doibase
  10.1007/s10107-006-0018-z} {\bibfield  {journal} {\bibinfo  {journal}
  {Mathematical Programming}\ }\textbf {\bibinfo {volume} {112}},\ \bibinfo
  {pages} {303--325} (\bibinfo {year} {2008})}\BibitemShut {NoStop}%
\bibitem [{\citenamefont {Barahona}\ and\ \citenamefont
  {Mahjoub}(1986)}]{BM86}%
  \BibitemOpen
  \bibfield  {author} {\bibinfo {author} {\bibfnamefont {F.}~\bibnamefont
  {Barahona}}\ and\ \bibinfo {author} {\bibfnamefont {A.~R.}\ \bibnamefont
  {Mahjoub}},\ }\bibfield  {title} {\enquote {\bibinfo {title} {On the cut
  polytope},}\ }\href {http://link.springer.com/article/10.1007%2FBF02592023}
  {\bibfield  {journal} {\bibinfo  {journal} {Mathematical Programming}\
  }\textbf {\bibinfo {volume} {36}},\ \bibinfo {pages} {157--173} (\bibinfo
  {year} {1986})}\BibitemShut {NoStop}%
\bibitem [{\citenamefont {Bonato}\ \emph {et~al.}(2014)\citenamefont {Bonato},
  \citenamefont {J{\"u}nger}, \citenamefont {Reinelt},\ and\ \citenamefont
  {Rinaldi}}]{Bonato14}%
  \BibitemOpen
  \bibfield  {author} {\bibinfo {author} {\bibfnamefont {T.}~\bibnamefont
  {Bonato}}, \bibinfo {author} {\bibfnamefont {M.}~\bibnamefont {J{\"u}nger}},
  \bibinfo {author} {\bibfnamefont {G.}~\bibnamefont {Reinelt}}, \ and\
  \bibinfo {author} {\bibfnamefont {G.}~\bibnamefont {Rinaldi}},\ }\bibfield
  {title} {\enquote {\bibinfo {title} {Lifting and separation procedures for
  the cut polytope},}\ }\href {\doibase 10.1007/s10107-013-0688-2} {\bibfield
  {journal} {\bibinfo  {journal} {Mathematical Programming}\ }\textbf {\bibinfo
  {volume} {146}},\ \bibinfo {pages} {351--378} (\bibinfo {year}
  {2014})}\BibitemShut {NoStop}%
\bibitem [{\citenamefont {{Froissart}}(1981)}]{Foissart81}%
  \BibitemOpen
  \bibfield  {author} {\bibinfo {author} {\bibfnamefont {M.}~\bibnamefont
  {{Froissart}}},\ }\bibfield  {title} {\enquote {\bibinfo {title}
  {{Constructive generalization of Bell's inequalities}},}\ }\href {\doibase
  10.1007/BF02903286} {\bibfield  {journal} {\bibinfo  {journal} {Nuovo Cimento
  B Serie}\ }\textbf {\bibinfo {volume} {64}},\ \bibinfo {pages} {241--251}
  (\bibinfo {year} {1981})}\BibitemShut {NoStop}%
\bibitem [{\citenamefont {Collins}\ and\ \citenamefont {Gisin}(2004)}]{CG04}%
  \BibitemOpen
  \bibfield  {author} {\bibinfo {author} {\bibfnamefont {D.}~\bibnamefont
  {Collins}}\ and\ \bibinfo {author} {\bibfnamefont {N.}~\bibnamefont
  {Gisin}},\ }\bibfield  {title} {\enquote {\bibinfo {title} {A relevant two
  qubit bell inequality inequivalent to the chsh inequality},}\ }\href
  {http://stacks.iop.org/0305-4470/37/i=5/a=021} {\bibfield  {journal}
  {\bibinfo  {journal} {Journal of Physics A: Mathematical and General}\
  }\textbf {\bibinfo {volume} {37}},\ \bibinfo {pages} {1775} (\bibinfo {year}
  {2004})}\BibitemShut {NoStop}%
\bibitem [{\citenamefont {Braunstein}\ and\ \citenamefont
  {Caves}(1990)}]{BC90}%
  \BibitemOpen
  \bibfield  {author} {\bibinfo {author} {\bibfnamefont {S.~L.}\ \bibnamefont
  {Braunstein}}\ and\ \bibinfo {author} {\bibfnamefont {C.~M.}\ \bibnamefont
  {Caves}},\ }\bibfield  {title} {\enquote {\bibinfo {title} {Wringing out
  better bell inequalities},}\ }\href {\doibase
  https://doi.org/10.1016/0003-4916(90)90339-P} {\bibfield  {journal} {\bibinfo
   {journal} {Annals of Physics}\ }\textbf {\bibinfo {volume} {202}},\ \bibinfo
  {pages} {22 -- 56} (\bibinfo {year} {1990})}\BibitemShut {NoStop}%
\bibitem [{\citenamefont {Peres}(1990)}]{Peres90}%
  \BibitemOpen
  \bibfield  {author} {\bibinfo {author} {\bibfnamefont {A.}~\bibnamefont
  {Peres}},\ }\bibfield  {title} {\enquote {\bibinfo {title} {Incompatible
  results of quantum measurements},}\ }\href@noop {} {\bibfield  {journal}
  {\bibinfo  {journal} {Physics Letters A}\ }\textbf {\bibinfo {volume}
  {151}},\ \bibinfo {pages} {107--108} (\bibinfo {year} {1990})}\BibitemShut
  {NoStop}%
\bibitem [{\citenamefont {Mermin}(1990)}]{Mermin90}%
  \BibitemOpen
  \bibfield  {author} {\bibinfo {author} {\bibfnamefont {N.~D.}\ \bibnamefont
  {Mermin}},\ }\bibfield  {title} {\enquote {\bibinfo {title} {Simple unified
  form for the major no-hidden-variables theorems},}\ }\href@noop {} {\bibfield
   {journal} {\bibinfo  {journal} {Phys. Rev. Lett.}\ }\textbf {\bibinfo
  {volume} {65}},\ \bibinfo {pages} {3373--3376} (\bibinfo {year}
  {1990})}\BibitemShut {NoStop}%
\bibitem [{\citenamefont {Amaral}\ and\ \citenamefont
  {Cunha}(2017{\natexlab{b}})}]{AT17book}%
  \BibitemOpen
  \bibfield  {author} {\bibinfo {author} {\bibfnamefont {B.}~\bibnamefont
  {Amaral}}\ and\ \bibinfo {author} {\bibfnamefont {M.~T.}\ \bibnamefont
  {Cunha}},\ }\href@noop {} {\emph {\bibinfo {title} {Graph Approach to
  contextuality and its hole in quantum theory}}}\ (\bibinfo  {publisher} {In
  preparation},\ \bibinfo {year} {2017})\BibitemShut {NoStop}%
\bibitem [{\citenamefont {Avis}, \citenamefont {Imai},\ and\ \citenamefont
  {Ito}(2006{\natexlab{b}})}]{AIT06}%
  \BibitemOpen
  \bibfield  {author} {\bibinfo {author} {\bibfnamefont {D.}~\bibnamefont
  {Avis}}, \bibinfo {author} {\bibfnamefont {H.}~\bibnamefont {Imai}}, \ and\
  \bibinfo {author} {\bibfnamefont {T.}~\bibnamefont {Ito}},\ }\bibfield
  {title} {\enquote {\bibinfo {title} {On the relationship between convex
  bodies related to correlation experiments with dichotomic observables},}\
  }\href {http://stacks.iop.org/0305-4470/39/i=36/a=010} {\bibfield  {journal}
  {\bibinfo  {journal} {Journal of Physics A: Mathematical and General}\
  }\textbf {\bibinfo {volume} {39}},\ \bibinfo {pages} {11283} (\bibinfo {year}
  {2006}{\natexlab{b}})}\BibitemShut {NoStop}%
\bibitem [{\citenamefont {Schrijver}(2003)}]{LexSch03}%
  \BibitemOpen
  \bibfield  {author} {\bibinfo {author} {\bibfnamefont {A.}~\bibnamefont
  {Schrijver}},\ }\href@noop {} {\emph {\bibinfo {title} {Combinatorial
  Optimization: Polyhedra and Efficiency}}},\ Algorithms and Combinatorics\
  (\bibinfo  {publisher} {Springer},\ \bibinfo {year} {2003})\BibitemShut
  {NoStop}%
\bibitem [{\citenamefont {Ara\'ujo}\ \emph {et~al.}(2013)\citenamefont
  {Ara\'ujo}, \citenamefont {Quintino}, \citenamefont {Budroni}, \citenamefont
  {{Terra Cunha}},\ and\ \citenamefont {Cabello}}]{AQBTC13}%
  \BibitemOpen
  \bibfield  {author} {\bibinfo {author} {\bibfnamefont {M.}~\bibnamefont
  {Ara\'ujo}}, \bibinfo {author} {\bibfnamefont {M.~T.}\ \bibnamefont
  {Quintino}}, \bibinfo {author} {\bibfnamefont {C.}~\bibnamefont {Budroni}},
  \bibinfo {author} {\bibfnamefont {M.}~\bibnamefont {{Terra Cunha}}}, \ and\
  \bibinfo {author} {\bibfnamefont {A.}~\bibnamefont {Cabello}},\ }\bibfield
  {title} {\enquote {\bibinfo {title} {All noncontextuality inequalities for
  the $n$-cycle scenario},}\ }\href@noop {} {\bibfield  {journal} {\bibinfo
  {journal} {Phys. Rev. A}\ }\textbf {\bibinfo {volume} {88}},\ \bibinfo
  {pages} {022118} (\bibinfo {year} {2013})}\BibitemShut {NoStop}%
\bibitem [{\citenamefont {Gallego}\ \emph {et~al.}(2011)\citenamefont
  {Gallego}, \citenamefont {W\"urflinger}, \citenamefont {Ac\'in},
  \citenamefont {Antonio},\ and\ \citenamefont {Navascu\'es}}]{GWAN11}%
  \BibitemOpen
  \bibfield  {author} {\bibinfo {author} {\bibfnamefont {R.}~\bibnamefont
  {Gallego}}, \bibinfo {author} {\bibfnamefont {L.~E.}\ \bibnamefont
  {W\"urflinger}}, \bibinfo {author} {\bibfnamefont {A.}~\bibnamefont
  {Ac\'in}}, \bibinfo {author} {\bibnamefont {Antonio}}, \ and\ \bibinfo
  {author} {\bibfnamefont {M.}~\bibnamefont {Navascu\'es}},\ }\bibfield
  {title} {\enquote {\bibinfo {title} {Quantum correlations require
  multipartite information principles},}\ }\href@noop {} {\bibfield  {journal}
  {\bibinfo  {journal} {Phys. Rev. Lett.}\ }\textbf {\bibinfo {volume} {107}},\
  \bibinfo {pages} {210403} (\bibinfo {year} {2011})}\BibitemShut {NoStop}%
\end{thebibliography}%

\end{document}